\title{{\textbf{\large{How Lagrangian states evolve into random waves}}}}
\author{Maxime Ingremeau\footnote{Laboratoire J.A. Dieudonné, Université Côte d'Azur. Contact: maxime.ingremeau@univ-cotedazur.fr} and Alejandro Rivera\footnote{\'Ecole Polytechnique Fédérale de Lausanne, Chair of Random Geometry. Contact : alejandro.rivera@epfl.ch }}
\date{}
\begin{document}
\thispagestyle{empty}
\maketitle

\begin{abstract}
In this paper, we consider a compact {\color{black}connected} manifold {\color{black}$(X,g)$} of negative curvature, and a family of semiclassical Lagrangian states $f_h(x) = a(x) e^{\frac{i}{h} \phi(x)}$ on $X$. For a wide family of phases $\phi$, we show that $f_h$, when evolved by the semiclassical Schrödinger equation during a long time, resembles a random Gaussian field. This can be seen as an analogue of Berry's random waves conjecture for Lagrangian states.
\end{abstract}

\newpage

\section{Introduction}

\paragraph{Berry's conjecture}

In his influential paper \cite{Ber}, M.V.Berry, gave a heuristic description of the behavior of high-energy wave-functions of quantum chaotic systems. He suggested that these should, in some sense, at the wavelength scale, behave like stationary Gaussian fields whose spectral measure is uniformly distributed on the unit sphere. The ambiguous comparison between a deterministic system and a stochastic field has given rise to many different interpretations. In the present paper, we are interested in a formulation given by one of the authors in \cite{LWL} (see also \cite{ABLM} for a similar approach). In this interpretation, we consider a compact {\color{black}connected} Riemannian manifold $(X^d,g)$ with negative sectional curvature. We will denote by $\mathrm{d}x$ the volume measure on $X$ and we will denote by $\Delta$ the Laplace-Beltrami operator on $X$. The conjecture can be roughly stated as follows: Let $(\psi_h)_h$ be a family of functions on $X$ such that $h^2\Delta\psi_h+\psi_h=0$ and normalized so that $\|\psi_h\|_2^2=1$. Let $\mathcal{U}\subset X$ be an open subset on which there exists a family of vector fields $(V_1,\dots,V_d)$ forming an orthonormal frame of the tangent bundle. Given $x\in \mathcal{U}$, we write $\widetilde{\exp}_x(y):=\exp_x(\sum_{j=1}^d y_jV_j(x))$. Let $\textsc{x}$ be a random point in $\mathcal{U}$ chosen uniformly with respect to the volume measure $\mathrm{d}x$. For each $h>0$ in the index set of $(\psi_h)_h$, let $\varphi_{\textsc{x}}^h\in C^\infty(\R^d)$ be the random field defined by $\varphi_{\textsc{x}}^h(y)=\psi_h(\widetilde{\exp}_{\textsc{x}}(hy))$. Then, the conjecture can be stated as follows:

\begin{center}
\textbf{Conjecture}: \textit{As $h\rightarrow 0$ in the index set of $(\psi_h)_h$, the family $\varphi_{\textsc{x}}^h(y)$ converges in law as a random field towards a stationary Gaussian field on $\R^d$ whose spectral measure is the uniform measure on the unit sphere $S^{d-1}$.}
\end{center}

This conjecture has many consequences in terms of nodal domains and semi-classical limits of $(\psi_h)$, as explained in \cite{LWL}. However, as stated, it seems quite out of reach.

\paragraph{Lagrangian states}

In this paper, instead, we study a much simpler question, in which eigenfunctions are replaced by a well-behaved family of quasi-modes, namely Lagrangian states:

\begin{tcolorbox}[breakable, enhanced]
\begin{definition}[Lagrangian states]\label{def:LagStat}
A \emph{Lagrangian state} is a family of functions $(u(\cdot;h))_h$ on $X$ indexed by $h\in]0,1[$, defined as follows
\begin{equation}\label{eq:LagState}
f_h(x)=a(x)e^{i\phi(x)/h}
\end{equation}
where $\phi\in C^\infty(U)$ for some open subset $U\subset X$ and $a\in C^\infty_c(U)$.  The \emph{energy measure} of $f_h$ is the measure on $(0,\infty)$, denoted by $\boldsymbol{\mu}_{a,\phi}$, which is the push-forward of the measure $|a(x)|^2 \mathrm{d}x$ on $X$ by the map $X\ni x \mapsto |{\color{black}\partial\phi(x)}|\in (0,\infty)$. We say that the Lagrangian state is \emph{monochromatic} if it furthermore satisfies $|\partial  \phi(x)|=1$ for all $x\in U$. In particular, this implies that $\boldsymbol{\mu}_{a,\phi}$ is a multiple of $\delta_{\{1\}}$.
\end{definition}
\end{tcolorbox}

Monochromatic Lagrangian states are quasimodes in the sense that they satisfy\footnote{\textcolor{black}{Here and in all the sequel, $O_{C^k}(h^\alpha)$ denotes a family of functions $(g_h)$ such that $\|g_h\|_{C^k(X)}$ is bounded by a constant times $h^\alpha$.}}
\[
h^2\Delta f_h(x)+f_h(x)=O_{C^0}(h)\, .
\]

However, the conjecture above will clearly not hold for them since they vanish on some non-empty open subset of $X$. 

Hence, instead of studying Lagrangian states of the form (\ref{eq:LagState}), we will study their evolution by the Schrödinger equation. It can be explicitely described using the WKB method, and is closely related to the dynamics of the geodesic flow.

Such a strategy was already followed in \cite{Schu}, where it was shown that a wide family of monochromatic Lagrangian states evolved during a long time have the Liouville measure as their semi-classical measure. Hence, they satisfy an analogue of quantum unique ergodicity, which is a central conjecture in quantum chaos concerning the genuine eigenfunctions of the Laplacian. In \cite{Schu}, the semi-classical measure associated to the long time evolution of non-monochromatic Lagrangian states is also described explicitly, as a linear combination of Liouville measures at different energies.

A precise description of the long time propagation of Lagrangian states was also used, for instance in \cite{Anan}, \cite{AN} and \cite{NZ}, to prove properties about the eigenfunctions and resonances of quantum chaotic systems.

It is thus natural to conjecture that (generic) Lagrangian states evolved during a long time satisfy the same quantum chaotic conjectures as genuine eigenfunctions of the Laplacian. In particular, we can wonder if they satisfy an analogue of Berry's conjecture stated above.

\paragraph{Informal presentation of our results}
The present paper gives a (partial) positive answer to this question. Namely, we consider a ``generic'' Lagrangian state, and propagate it to a time $t$ by the Schrödinger equation, which gives us a function $f_h^t$. {\color{black}To be more precise, recall first that a subset of a topological space is called \emph{residual} if it contains a countable} {\color{black}intersection of dense} {\color{black}open subsets. We will first equip the space of phases $\phi$ defined on the support of a fixed amplitude $a$ with a natural topology. We then construct a residual subset of the space of phases such that our result will hold under the condition that $f_h(x)=a(x)e^{i\phi(x)/h}$ where $\phi$ belongs to this subset.} Similarly to the construction in the previous paragraph, we write $f_{h,\mathrm{x}}^t(y):= f^t_h(\widetilde{\exp}_{\textsc{x}}(hy))$, with $\mathrm{x}$ chosen uniformly at random in some open set of $X$. We can then show that $f_{h,\mathrm{x}}^t$ admits a weak limit for all $t$ large enough, and that, as $t\longrightarrow +\infty$, this limit converges to an isotropic Gaussian field. In the special case where the initial state is monochromatic, we thus obtain the same limit as in Berry's conjecture.

There are two major differences between our results and those of \cite{Schu}. 
\begin{itemize}
\item In \cite{Schu}, the condition on Lagrangian states is completely explicit: one has to assume that the associated Lagrangian manifold is transverse to the stable directions of the classical dynamics (see section \ref{ss:Hyp} for more details). Here, we also need transversality to the stable directions, but also some much more subtle conditions. Namely, we will use the WKB method to express the evolved Lagrangian state, locally, as a sum of plane waves. We will need the fact that, generically, these plane waves have directions of propagation which are rationally independent, so that, when observing this sum of waves at a random point, it will behave like a sum of independent complex numbers with uniform argument. Gaussianity will then emerge from the central limit theorem.

\item In \cite{Schu}, the Lagrangian states are propagated up to the Ehrenfest time, that is, $c |\log h|$ for some $c>0$ related to the classical dynamics. Here, we first take $h$ to zero to define our limits, and then let $t$ go to infinity, which is somehow much weaker. We believe that an adaptation of our method could allow us to show Berry's conjecture for generic Lagrangian states propagated up to some time $c \log |\log h|$ for some $c>0$. However, to do so, we would have to change our definition of genericity, from $\phi$ belongs to a residual set here to $\phi$ belongs to a space of full measure for some suitable measure. This should be pursued elsewhere.
\end{itemize}

Despite these weaknesses, our result can be considered as the first example of a family of functions satisfying Berry's conjecture because of an underlying chaotic classical dynamics. Before that, \cite{Bou} and \cite{BW} (see also \cite{LWL} and \cite{Sar}) proved Berry's conjecture for generic families of Laplace eigenfunctions on the two dimensional torus, using some arithmetic arguments. Some examples of families of eigenfunctions in $\R^d$ satisfying Berry's conjecture are also given in \cite{RomSar}.

\paragraph{Organization of the paper}
In section \ref{sec:main_results}, we will present our main result, recalling all the definitions we need regarding local weak limits and Gaussian fields. In section \ref{sec:MainProof}, we will show that our main result holds, provided our initial state is a Lagrangian state whose phase $\phi$ belongs to a special set. We show in section \ref{sec:PhaseGen} that this set is in some sense generic. A key step in the proof of the main results presented in section \ref{sec:MainProof} is to give an explicit description of the action of the Schr\"odinger operator on Lagrangian states. This is Proposition \ref{prop:SumLag3}. The proof of this proposition is the object of section \ref{sec:ClasDyn}, where we recall some properties of the geodesic flow in negative curvature.
Finally, in Appendix \ref{sec:semi-classique}, we will recall the facts we need from semi-classical analysis, while in Appendix \ref{sec:AppMono}, we give a description of the monochromatic phases we consider.

\paragraph{Acknowledgements}
This project was partially supported by a CNRS grant \emph{Projet Exploratoire de Premier Soutien (PEPS) `` Jeune chercheuse, jeune chercheur ''}. 

{\color{black} We would like to thank the anonymous referees for their multiple remarks, which greatly helped improve the general presentation of the paper.}

\section{Set-up and main results}\label{sec:main_results}
Our main results state that Lagrangian states converge to some Gaussian field. Hence, we first have to explain our notion of convergence, and then, to describe the Gaussian fields towards which they converge. We will also need our Lagrangian states to be associated with Lagrangian manifolds that are transverse to the stable directions of the geodesic flow, as we will explain in section \ref{ss:Hyp}.

Recall that $(X,g)$ is a compact {\color{black}connected} Riemannian manifold. {\color{black} For each $x\in X$, we will denote by $\exp_x:T_xX\rightarrow X$ the exponential map at $x$ induced by the metric $g$ on $X$} {\color{black}(as in \cite[Definition 1.4.3]{Jost})}. {\color{black}Moreover, given $x,y\in X$, we will denote by $\mathrm{d}(x,y)$ the Riemannian distance between the two points $x$ and $y$.} Unless otherwise stated, the spaces $C^\infty(X)$ and $C^\infty(\R^d)$ will be equipped with the topology of uniform convergence of derivatives on compact sets. Moreover, when we speak of probability measures on these spaces, we will assume that they are equipped with the Borel $\sigma$-algebra.

\subsection{Local limits}\label{ss:local_limits_def}

Let us now describe the form of convergence we establish here. To avoid any topological difficulties, we define this convergence locally, though all of our results will hold regardless of the choice of localization. To the point, let $\mathcal{U}\subset X$ be a small enough open set so that we can define an orthonormal frame $V=(V_1,\dots,V_d)$ on it, that is to say a family of smooth sections $(V_i)_{i=1,\dots,d} : \mathcal{U}\longrightarrow TX$ such that, for each $x\in X$, $(V_1(x),\dots,V_d(x))$ is an orthonormal basis of $T_xX$.

If $x\in \mathcal{U}$ and $y\in \R^d$, we will write $yV(x) := y_1 V_1(x)+\dots+y_d V_d(x) \in T_xX$, and
\begin{equation}\label{eq:JamesBrownIsTheBest}
\widetilde{\exp}_{x}(y) := \exp_x (yV(x))\, .
\end{equation}

All the constructions in this section will depend on the choice of this local frame, and will hence not be intrinsic.  For the rest of the section, let us fix $\textsc{x}$ a random point in $\calU$ chosen uniformly with respect to the Riemannian volume measure.
\begin{tcolorbox}[breakable, enhanced]
\begin{definition}
Let $(f_h)_{h>0}$ be a family of functions in $C^\infty(X)$, and let $\P$ be a probability measure on $C^\infty(\R^d)$. Then, for each $h>0$, we define the \textit{$h$-local measure} associated to this family as the law of the random element of $C^\infty(\R^d)$ defined by $f_{\textsc{x},h}(y):=f_h(\widetilde{\exp}_{\textsc{x}}(hy))$. We say that $\P$ is the local weak limit of $(f_h)_h$ in the frame $V$ if, as $h\rightarrow 0$, the law of $f_{\textsc{x},h}$ converges weakly to $\P$.
\end{definition}
\end{tcolorbox}

We insist that, in the definition of $f_{\mathrm{x},h}$, $\mathrm{x}$ is a point chosen uniformly at random in $\mathcal{U}$, so that $f_{\mathrm{x},h}$ is a random element of $C^{\infty}(\R^d)$. Here, $C^\infty(\R^d)$ is equipped with its usual topology, given by uniform convergence of derivatives over compact sets.

Hence, saying that $\P$ is the local weak limit of $(f_h)_h$ in the frame $V$ means that, for any continuous bounded functional $F: C^\infty(\R^d) \longrightarrow \R$, we have
$$\frac{1}{\Vol(\mathcal{U})} \int_\mathcal{U} F (f_{\mathrm{x},h})\mathrm{dx} \underset{h \to 0}{\longrightarrow} \mathbb{E}_\P [F].$$

\begin{tcolorbox}[breakable, enhanced]
\begin{definition}
Let $(f_h)_{h>0}$ be a family of functions in $C^\infty(X)$, let $(r_h)_{h>0}$ be a family of positive real numbers converging to $0$, let $x_0\in\calU$ and let $\P_{x_0}$ be a probability measure on $C^\infty(\R^d)$. We say that $\P_{x_0}$ is the $(r_h)_h$-local limit of $(f_h)_{h>0}$ at $x_0$ (in the frame $V$) if, as $h\rightarrow 0$, the law of the random function $f_{\textsc{x},h}$, conditioned on the event that $\textsc{x}\in B(x_0,r_h)$, converges weakly to $\P_{x_0}$.
\end{definition}
\end{tcolorbox}

In other words, $\P_{x_0}$ is the $(r_h)_h$-local limit of $(f_h)_{h>0}$ at $x_0$ (in the frame $V$) if for any continuous bounded functional $F: C^\infty(\R^d) \longrightarrow \R$, we have
$$\frac{1}{\Vol(B(x_0, r_h))} \int_{B(x_0,r_h)} F (f_{\mathrm{x},h})\mathrm{d}x \underset{h \to 0}{\longrightarrow} \mathbb{E}_{\P_{x_0}} [F].$$

\begin{remark}\label{rem:PointwiseToGlobal}
By construction, if $(f_h)_h$ has an $(r_h)$-local limit $\P_{x_0}$ at almost every $x_0\in\calU$, then it has an $h$-local limit $\P$  which satisfies
\[
\P = \int_{\calU} \P_{x_0} \mathrm{d}x_0\, .
\]
\end{remark}

\subsection{Gaussian fields}\label{subsec:GaussField}

{\color{black} As previously, we equip $C^\infty(\R^d)$ with its usual topology, given by uniform convergence of derivatives over compact sets. An almost surely (or a.s.) $C^\infty$ (centered) Gaussian field on $\R^d$ will be a random variable $f$ taking values in $C^\infty(\R^d)$ such that for any finite collection of points $x_1,\dots,x_k\in \R^d$, the random vector $(f(x_1),\dots,f(x_k))\in \C^d$ is (centered) Gaussian. We say that two fields $f_1$ and $f_2$ are \emph{equivalent} if they have the same law. In the sequel, unless otherwise stated, {\color{black}{we will always}} identify fields which are equivalent. That is to say that we will speak indifferently of the field and of its law.}\\

{\color{black}Let $f$ be an a.s. $C^\infty$, centered Gaussian field on $\R^d$. Then, the {\color{black}covariance} function $K:(x,y)\mapsto E[f(x)\overline{f(y)}]$ defined on $\R^d\times \R^d$ is \emph{positive definite}, meaning that for each $k$-uple $(x_1,\dots,x_k)\in (\R^d)^k$, the matrix $K(x_i,x_j)_{i,j}$ is Hermitian. As explained for instance in Appendix A.11 of \cite{NS}, the function $K$ belongs to $C^\infty(\R^d\times \R^d)$ and there is actually a bijection between such functions and a.s. $C^\infty$ centered Gaussian fields on $\R^d$ (up to equivalence).\\

Next, recall that, by Bochner's theorem (see for instance \cite{CI}, section 2.1.11), given a finite Borel complex measure $\boldsymbol{\mu}$ on $\R^d$, the Fourier transform $\hat{\boldsymbol{\mu}}$
 of $\boldsymbol{\mu}$ gives rise a continuous positive definite function $K:(x,y)\mapsto \hat{\boldsymbol{\mu}}(x-y)$ on $\R^d\times\R^d$. If, in addition, $\boldsymbol{\mu}$ is compactly supported, its Fourier transform is smooth and gives rise to a unique Gaussian field $f$ on $\R^d$ (up to equivalence). In this case, we say that $\boldsymbol{\mu}$ is the \textit{spectral measure} of $f$. Note that $K$ is invariant by the diagonal action of translations on each of its variables. Consequently, the law of $f$ is invariant by translations. We say in this case that $f$ is \textit{stationary}.\\

Let us now apply this recipe to define a family of Gaussian fields on $\R^d$. Fix $0<\lambda_1<\lambda_2$, and let $\boldsymbol{\mu}$ be a Borel measure on $[\lambda_1,\lambda_2]$. Consider the measure $\boldsymbol{\lambda_\mu}$ on $\mathbb{R}^d$ which is given by
\begin{equation}\label{eq:measure_def}
\int_{\R^d} {\color{black}g}(x) \mathrm{d}\boldsymbol{\lambda_\mu}(x) = \int_{\lambda_1}^{\lambda_2} \int_{\mathbb{S}^{d-1}} {\color{black}g}(r y) \mathrm{d}\omega_{d-1}(y)   \mathrm{d} \boldsymbol{\mu}(r),
\end{equation}
where $\omega_{d-1}$ is the uniform measure on $\mathbb{S}^{d-1}$. If $\boldsymbol{\mu}= \boldsymbol{\mu}_{a,\phi}$ with $a, \phi$ as
 in Definition \ref{def:LagStat}, we simply write $\boldsymbol{\lambda}_{a,\phi}$ instead of $\boldsymbol{\lambda}_{\boldsymbol{\mu}_{a,\phi}}$.

\begin{tcolorbox}[breakable, enhanced]
\begin{definition}\label{def:spm_to_field}
The isotropic Gaussian field with energy decomposition $\boldsymbol{\mu}$ is the unique law for an a.s. continuous stationary Gaussian field $f$ on $\R^d$ whose spectral measure is $\boldsymbol{\lambda_\mu}$. In other words, for each $x,y\in\R^d$,
\[
\E[f(x)f(y)]=\int_{\R^d}e^{i(x-y)\cdot\xi}d\boldsymbol{\lambda_\mu}(\xi)\, .
\]
{\color{black}We will denote by} $\mathbb{P}_{\boldsymbol{\mu}}$ the law of $f$, which is a probability measure on $C^\infty(\R^d)$. If $\boldsymbol{\mu}=\delta_{\{1\}}$ we call $f$ the \textit{random monochromatic wave}.
\end{definition}
\end{tcolorbox}

Note that, when $|\partial \phi(x)|=1$ for all $x$ in the domain of $\phi$, $\boldsymbol{\mu}_{a,\phi}$ is $\|a\|_{L^2(X)}^2\delta_{\{1\}}$. In particular, if $f$ {\color{black} is a Gaussian field with law} $\mathbb{P}_{\boldsymbol{\mu}_{a,\phi}}$, then $\|a\|_{L^2(X)}^{-1} f$ is (equivalent to) the random monochromatic wave.
}

\subsection{Transversality to the stable directions}\label{ss:Hyp}

We denote by $\Phi^t:T^*X\rightarrow T^*X$, $t\in\R$ the geodesic flow on $T^*X$. For each $\lambda > 0$, let us write $S_\lambda^*X := \{ (x,\xi)\in T^*X ~| ~|\xi| = \lambda\}$. If $0<\lambda_1<\lambda_2$, we also write $S^*_{[\lambda_1,\lambda_2]}X := \bigcup_{\lambda \in [\lambda_1, \lambda_2]} S_\lambda^*X$. Since $X$ has negative curvature, $(\Phi^t)_t$, restricted to some $S_\lambda^*X$, is an Anosov flow (see \cite{Ebe} for a proof of this fact). We will recall in section \ref{ss:back_to_hyperbolicity} the definition of an Anosov flow. In particular, we defer to this section for the definition, for each  $\rho\in S_\lambda^*X$, of the \emph{unstable}, \emph{stable} and \emph{neutral} subspaces of $T_\rho S_\lambda^*X$. For any $0<\lambda_1<\lambda_2$  and any open subset $\Omega\subset X$, we write 
\begin{equation}\label{eq:SpacePhases}
\begin{aligned}
\mathcal{E}_{(\lambda_1,\lambda_2)}(\Omega)&:= \{ \phi\in C^\infty(\Omega) ~ \text{ such that } \lambda_1< |\partial \phi | < \lambda_2 \}.
\end{aligned}
\end{equation}
To each $\phi\in \mathcal{E}_{(\lambda_1,\lambda_2)}(\Omega)$, we can associate a Lagrangian manifold 
\[
\Lambda_\phi=\{(x,\partial\phi(x))\, :\, x\in\Omega\}\subset T^*X\, .
\]
We then define the set of phases associated to Lagrangian manifolds that are transverse to the stable directions as
\begin{equation}\label{eq:TransversePhases}
\mathcal{E}^{T}_{(\lambda_1,\lambda_2)}(\Omega) :=\big{\{} \phi \in \mathcal{E}_{(\lambda_1,\lambda_2)}(\Omega) \text{ such that } \forall x\in \Omega, T_{(x,\partial \phi(x))} \Lambda_\phi \cap E^-_{(x,\partial \phi(x))} = \{0\} \big{\}}.
\end{equation}

For each $\rho=(x,\xi)\in T^*X$ such that $\xi\neq 0$, let $\hat{E}^0_\rho=\{(0,s\xi)\, :\, s\in\R\}$ (as in section \ref{ss:back_to_hyperbolicity}). Then $E^+_\rho\oplus E^-_\rho\oplus E^0_\rho \oplus \hat{E}^0_\rho= T_\rho T^*X$. Hence, $\phi\in\calE^T_{(\lambda_1,\lambda_2)}(\Omega)$ if and only if
\[
T_{(x,\partial\phi(x))}\Lambda_\phi\oplus E^-_{(x,\partial\phi(x))}\oplus \hat{E}^0_{(x,\partial\phi(x))}=T_{(x,\partial\phi(x))}T^*X\, .
\]

\subsection{Convergence of Lagrangian states to Gaussian fields}
Our main result does not hold for all Lagrangian states, but only for a generic subset of  $\mathcal{E}^{T}_{(\lambda_1,\lambda_2)}(\Omega)$, which we equip with the topology of uniform convergence of derivative on compact sets.

\begin{remark}\label{rem:NonEmpty}
The set $\mathcal{E}^T_{(\lambda_1,\lambda_2)}(\Omega)$ is not open, but if $\phi \in  \mathcal{E}_{(\lambda_1,\lambda_2)}^T(\Omega)$ and if $\overline{\Omega'} \subset \Omega$, then $ \mathcal{E}_{(\lambda_1,\lambda_2)}^T(\Omega')$ contains a neighbourhood of $\phi_{|\Omega'}$. This follows directly from the fact that $\rho\mapsto E_\rho^-$ is continuous. Furthermore, if $(x,\xi)\in S_{[\lambda_1,\lambda_2]}^*X$, we know that $E_{(x,\xi)}^+\cap E_{(x,\xi)}^-= \{0\}$. 

Therefore, if $\phi \in \mathcal{E}_{(\lambda_1,\lambda_2)}(X)$ and $x_0\in \Omega$ are such that the image of $\mathrm{d}_{x_0} (x, \partial \phi(x))$ is included in $E_{(x_0, \partial \phi(x_0))}^+$, then if $\Omega'$ is a small enough neighbourhood of $x_0$, we will have $\phi'_{|\Omega'} \in \mathcal{E}^T_{(\lambda_1,\lambda_2)}(\Omega')$ for any $\phi'\in C^\infty(\Omega)$ close enough to $\phi$. Therefore, if $\Omega'$ is small enough, $\mathcal{E}^T_{(\lambda_1,\lambda_2)}(\Omega')$ is non-empty, and contains $\phi'|_{\Omega'}$ for any $\phi'$ in a non-empty open subset of $C^\infty(\Omega)$.
 \end{remark}

We may now state our main result. To this end, we introduce the semi-classical Schr\"odinger propagator $U_h(t):=e^{ith\frac{\Delta}{2}}:L^2(X)\rightarrow L^2(X)$. {\color{black}Moreover, we recall once more that, a subset of a topological space is called residual if it contains a countable intersection of dense open subsets.}

\begin{tcolorbox}[breakable, enhanced]
\begin{theorem}\label{th:Pointwise}
Let $X$ be a compact {\color{black}connected} Riemannian manifold with negative sectional curvature, let $0<\lambda_1<\lambda_2$, and let $\Omega\subset X$ be an open subset. Then there exists a residual subset $\mathcal{E}^{T,irr}_{(\lambda_1,\lambda_2)}(\Omega)$ of $\mathcal{E}^T_{(\lambda_1,\lambda_2)}(\Omega)$ such that for any $\phi\in  \mathcal{E}^{T,irr}_{(\lambda_1,\lambda_2)}(\Omega)$, there exists $T_0\geq 0$ such that the following holds. Let $a\in C^\infty_c(\Omega)$. For each $h\in]0,1[$, we write $f_h(x)=a(x)e^{i\phi(x)/h}$.\\

Let $\mathcal{U}\subset X$ be an open set, and $V$ be an orthonormal frame on $\mathcal{U}$.   Let $\frac{1}{2}<\alpha<1$. Then for almost every $x_0\in \mathcal{U}$, and every $t\geq T_0$, the family $\br{U_h(t) f_h}_{h>0}$ has an $(h^\alpha)_{h>0}$-pointwise local weak limit at $x_0$, which we denote by $\mu_{t,x_0}$. Furthermore, $\mu_{t,x_0}$ converges weakly to $\mathbb{P}_{\boldsymbol{\mu}_{a, \phi}}$ as $t\to +\infty$.
\end{theorem}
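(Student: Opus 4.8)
The plan is to combine an explicit WKB description of $U_h(t)f_h$ with a probabilistic central limit argument, exactly along the lines sketched in the introduction. The key structural input is Proposition \ref{prop:SumLag3} (announced in the excerpt): for $t\geq T_0$ fixed, the evolved state $U_h(t)f_h$ can be written, microlocally near almost every point, as a finite sum
\[
U_h(t)f_h(x) = \sum_{j=1}^{N(t,x)} a_j(x;h)\, e^{i\psi_j(x)/h} + O(h^\infty),
\]
where each $(a_j,\psi_j)$ comes from a branch of the geodesic flow applied to $\Lambda_\phi$, the number of branches $N(t,x)$ grows like $e^{\Lambda t}$ by the Anosov property (transversality to $E^-$, guaranteed by $\phi\in\mathcal E^T$, is precisely what makes this WKB representation valid for all times and prevents caustics from accumulating), and $|\partial\psi_j(x)|$ is distributed according to $\boldsymbol\mu_{a,\phi}$ in the appropriate averaged sense. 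Rescaling at $x_0$ by $y\mapsto\widetilde\exp_{x_0}(hy)$ and Taylor-expanding each phase, $f^t_{h,x_0}(y)$ becomes approximately $\sum_j a_j(x_0)e^{i\psi_j(x_0)/h}\,e^{i\,\xi_j\cdot y}$ with $\xi_j=\partial\psi_j(x_0)$ rescaled appropriately, plus lower-order terms controlled on compact sets in $y$.

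First I would fix $t\geq T_0$ and establish the $(h^\alpha)$-local limit $\mu_{t,x_0}$ at almost every $x_0$. The randomness is only in the base point $\textsc x$ ranging over $B(x_0,h^\alpha)$, which at scale $h$ means the phases $\psi_j(\textsc x)/h$ vary over a window of size $\sim h^{\alpha-1}\to\infty$. The \emph{irr}ationality condition built into $\mathcal E^{T,irr}$ — the genericity producing the residual set — ensures the frequency vectors $(\partial\psi_j(x_0))_j$ (equivalently the gradients of the branch phases) are rationally independent, so the phases $(\psi_j(\textsc x)/h \bmod 2\pi)_j$ equidistribute toward independent uniform variables on the torus as $h\to 0$. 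Hence the rescaled field converges to $\sum_j a_j(x_0)\,e^{i\Theta_j}\,e^{i\xi_j\cdot y}$ with $\Theta_j$ i.i.d. uniform; this is a genericity/equidistribution argument on the base, and it is the step I expect to be the main obstacle, both because the WKB amplitudes and phases are only implicitly defined through the chaotic flow and because one must quantify equidistribution uniformly enough to pass to the limit in $C^\infty(\R^d)$ (uniform convergence of derivatives on compacts, using that differentiating in $y$ only brings down bounded powers of $\xi_j$). This identifies $\mu_{t,x_0}$ as the law of a random trigonometric sum indexed by the flowed Lagrangian.

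Second, I would let $t\to+\infty$. As $t$ grows the number of branches $N(t,x_0)\to\infty$; the summands $a_j(x_0)e^{i\Theta_j}e^{i\xi_j\cdot y}$ are independent, centered, and — because $\sum_j|a_j(x_0)|^2$ stays bounded (conservation of $L^2$ mass under $U_h(t)$, no single branch dominates, by mixing of the Anosov flow) while $\max_j|a_j(x_0)|^2\to 0$ — the Lindeberg condition holds. The multivariate central limit theorem then gives convergence of finite-dimensional distributions of $\mu_{t,x_0}$ to a centered complex Gaussian field, and tightness in $C^\infty(\R^d)$ follows from uniform Sobolev bounds on the rescaled state (derivatives produce factors $\xi_j$ bounded by $\lambda_2$). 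Its covariance is $\lim_t \sum_j |a_j(x_0)|^2 e^{i(x-y)\cdot\xi_j}$, and equidistribution of the flowed Lagrangian $\Phi^t(\Lambda_\phi)$ over $S^*_{[\lambda_1,\lambda_2]}X$ — more precisely, the fact (as in \cite{Schu}) that the directions $\xi_j/|\xi_j|$ equidistribute on each sphere $\mathbb S^{d-1}$ while the radii $|\xi_j|$ distribute according to $\boldsymbol\mu_{a,\phi}$ — shows this limit equals $\int_{\R^d}e^{i(x-y)\cdot\xi}\,d\boldsymbol\lambda_{a,\phi}(\xi)$. By Definition \ref{def:spm_to_field} this is exactly the covariance of $\mathbb P_{\boldsymbol\mu_{a,\phi}}$, so $\mu_{t,x_0}\Rightarrow\mathbb P_{\boldsymbol\mu_{a,\phi}}$. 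Finally, assembling the pointwise statement over $x_0\in\mathcal U$ via Remark \ref{rem:PointwiseToGlobal} (and choosing $T_0$ and the null set uniformly, which is where one must be careful that the exceptional set of $x_0$ where WKB fails has measure zero independently of $t$ rational, then extended to all $t$) completes the argument.
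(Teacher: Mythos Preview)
Your proposal is essentially correct and follows the same architecture as the paper's proof: WKB expansion via Proposition \ref{prop:SumLag3}, then equidistribution of the phases $(\psi_j(\textsc{x})/h)_j$ on the torus (using the rational independence condition defining $\mathcal{E}^{T,irr}$) to obtain the fixed-$t$ local limit as a random superposition of plane waves with i.i.d.\ uniform phases, then a Lindeberg CLT as $t\to+\infty$ with the covariance identified through the equidistribution of $\Phi^t(\Lambda_\phi)$ via \cite{Schu}.

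Two small points where you are slightly off compared to the paper. First, the last sentence is confused: Theorem \ref{th:Pointwise} \emph{is} the pointwise statement, so there is nothing to ``assemble over $x_0$'' via Remark \ref{rem:PointwiseToGlobal} (that remark is only used for the Corollary). Second, the paper does not argue by ``$t$ rational then extend''; instead it builds the null set of bad $x_0$ directly into the definition \eqref{eq:DefIrr} of $\mathcal{E}^{T,irr}$, requiring rational independence of the $\xi_j^{t,x_0}$ for \emph{all} $t\geq T_0(\phi)$ at almost every $x_0$, and the residuality of this set is established separately (Proposition \ref{prop:polychrom_conclusion}) via the multijet transversality theorem. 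So $T_0$ depends only on $\phi$ (from Proposition \ref{prop:SumLag3}), and the exceptional set of $x_0$ is fixed once and for all by the definition.
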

\end{tcolorbox}

Thanks to Remark \ref{rem:PointwiseToGlobal}, Theorem \ref{th:Pointwise} implies the following result.

\begin{tcolorbox}[breakable, enhanced]
\begin{corollary}
With the notation of Theorem \ref{th:Pointwise}, for each $t\geq T_0$, the family $\left(U_h(t) f_h\right)_{h>0}$ has an $h$-local limit $\mu_t$ which converges weakly to {\color{black} $\mathbb{P}_{\boldsymbol{\mu}_{a, \phi}}$} as $t\to+\infty$.
\end{corollary}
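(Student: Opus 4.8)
The plan is to deduce the corollary directly from Theorem~\ref{th:Pointwise} together with Remark~\ref{rem:PointwiseToGlobal}; the only ingredient not already contained in those two statements is an interchange of the limit $t\to+\infty$ with the spatial average over $\mathcal{U}$, which will be handled by dominated convergence.

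First I would fix $\phi\in\mathcal{E}^{T,irr}_{(\lambda_1,\lambda_2)}(\Omega)$, the associated threshold $T_0$, an amplitude $a\in C^\infty_c(\Omega)$, the open set $\mathcal{U}$, an orthonormal frame $V$ on $\mathcal{U}$, and an exponent $\alpha\in(\tfrac{1}{2},1)$, exactly as in Theorem~\ref{th:Pointwise}; and I would set $f_h(x)=a(x)e^{i\phi(x)/h}$. For a fixed $t\geq T_0$, that theorem furnishes a full-measure subset $\mathcal{U}_0\subset\mathcal{U}$ such that $(U_h(t)f_h)_{h>0}$ admits an $(h^\alpha)_{h>0}$-pointwise local weak limit $\mu_{t,x_0}$ at every $x_0\in\mathcal{U}_0$. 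Applying Remark~\ref{rem:PointwiseToGlobal}, with $(f_h)_h$ there replaced by $(U_h(t)f_h)_h$ and $r_h=h^\alpha$, I would conclude that $(U_h(t)f_h)_{h>0}$ has an $h$-local limit $\mu_t$, and that it is the mixture
\[
\mu_t=\frac{1}{\Vol(\mathcal{U})}\int_{\mathcal{U}}\mu_{t,x_0}\,\mathrm{d}x_0 .
\]
Here one should note that $x_0\mapsto\mu_{t,x_0}$ is measurable, which comes for free from the construction carried out in the proof of Theorem~\ref{th:Pointwise}. This already proves the first assertion of the corollary, for every $t\geq T_0$.

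For the convergence $\mu_t\to\mathbb{P}_{\boldsymbol{\mu}_{a,\phi}}$ as $t\to+\infty$, I would test against an arbitrary bounded continuous functional $F:C^\infty(\R^d)\to\R$. The mixture formula gives $\mathbb{E}_{\mu_t}[F]=\Vol(\mathcal{U})^{-1}\int_{\mathcal{U}}\mathbb{E}_{\mu_{t,x_0}}[F]\,\mathrm{d}x_0$. By the last sentence of Theorem~\ref{th:Pointwise}, for every $x_0\in\mathcal{U}_0$ one has $\mathbb{E}_{\mu_{t,x_0}}[F]\to\mathbb{E}_{\mathbb{P}_{\boldsymbol{\mu}_{a,\phi}}}[F]$ as $t\to+\infty$; moreover $|\mathbb{E}_{\mu_{t,x_0}}[F]|\leq\|F\|_\infty$ uniformly in $t$ and $x_0$, and $\Vol(\mathcal{U})<\infty$, so the dominated convergence theorem yields $\mathbb{E}_{\mu_t}[F]\to\mathbb{E}_{\mathbb{P}_{\boldsymbol{\mu}_{a,\phi}}}[F]$. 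Since $F$ was an arbitrary bounded continuous functional, this is exactly the weak convergence $\mu_t\to\mathbb{P}_{\boldsymbol{\mu}_{a,\phi}}$, which completes the argument.

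I do not expect any genuine obstacle here: all the substance is packaged in Theorem~\ref{th:Pointwise}, and the corollary is a soft consequence of it. The only points meriting a moment's care are the measurability of $x_0\mapsto\mu_{t,x_0}$ needed to make sense of the mixture integral, and the uniform bound $|\mathbb{E}_{\mu_{t,x_0}}[F]|\leq\|F\|_\infty$ that licenses dominated convergence — both of which are automatic.
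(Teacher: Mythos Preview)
Your proposal is correct and follows exactly the route the paper takes: the paper simply states that the corollary follows from Theorem~\ref{th:Pointwise} via Remark~\ref{rem:PointwiseToGlobal}, and you have spelled out the details, including the dominated convergence step for the $t\to+\infty$ limit. One small remark: Theorem~\ref{th:Pointwise} actually gives a single full-measure set $\mathcal{U}_0$ valid for \emph{all} $t\geq T_0$ simultaneously (the quantifiers are ``for almost every $x_0$, for every $t\geq T_0$''), so you need not fix $t$ before choosing $\mathcal{U}_0$; this makes the dominated convergence argument cleaner, since the pointwise convergence $\mathbb{E}_{\mu_{t,x_0}}[F]\to\mathbb{E}_{\mathbb{P}_{\boldsymbol{\mu}_{a,\phi}}}[F]$ then holds on a set independent of $t$.
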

\end{tcolorbox}
{\color{black}
\begin{remark}
Note that, although the law $f_h$ depends on $\calU$ and on the choice of frame $V$, the limiting measure $\mathbb{P}_{\boldsymbol{\mu}_{a, \phi}}$ depends only on $a$ and $\phi$.
\end{remark}
}
\begin{remark}
Let us finally observe that by Remark \ref{rem:NonEmpty}, each point of $X$ admits an open neighbourhood $\Omega\subset X$ for which $\calE^{T,irr}_{(\lambda_1,\lambda_2)}(\Omega)$ is non empty (and even uncountable). Hence, although we do not have a global ``generic" statement, Theorem \ref{th:Pointwise} does yield a wide family of Lagrangian states whose pointwise local weak limits converge to that of the isotropic stationary a.s. smooth Gaussian field on $\R^d$ with spectral measure $\boldsymbol{\mu_{a,\phi}}$ from Definition \ref{def:LagStat} as $t\to+\infty$, under the action of the Schr\"odinger flow.
\end{remark}

\subsection{The case of monochromatic phases}\label{sec:IntroMono}

We would now like to state an analogue of Theorem \ref{th:Pointwise} for monochromatic phases, i.e., phases satisfying{\footnote{The case $|\partial \phi|=\lambda$ for some $\lambda>0$ can be recovered from the case $|\partial  \phi|=1$ by a simple rescaling.}  $|\partial  \phi|=1$. At first glance, it would seem natural to work with the space of phases
$$\mathcal{E}_{1}(\Omega) = \{ \phi\in C^\infty(\Omega) ~ \text{ such that } |\partial \phi |=1 \},$$ which we would equip with the $C^\infty(\Omega)$ topology. However, this set appears to be very hard to work with: it is not trivial to perturb a function in $\mathcal{E}_{1}(\Omega)$  while remaining in this set. Hence, the set $\mathcal{E}_{1}(\Omega)$ could contain isolated points, which would make our approach based on genericity irrelevant. We will therefore use another approach to study phases satisfying $|\partial  \phi|=1$.\\

Let $\Sigma\subset X$ be an embedded orientable simply connected hypersurface. Let us denote by $\nu$ a vector field defined on $\Sigma$ such that for each $y\in\Sigma$, $\nu(y)$ has unit norm and is orthogonal to $T_y\Sigma$. We write
\begin{equation}\label{eq:calC_def}
\calC(\Sigma)=\{u\in C^\infty(\Sigma)\, :\, |\partial  u|<1\}\, .
\end{equation}
If $u\in \mathcal{C}(\Sigma)$, we define, for any $y\in \Sigma$, $v_u(y):=\partial _yu+(1-|\partial _yu|^2)^{1/2}\nu(y)\in S^*_y X$, and
\[
L_u:= \{ (y, v_u(y))\, :\, y\in \Sigma\}\, .
\]
We then define
\begin{equation}\label{eq:calC_trans}
\mathcal{C}^T(\Sigma) := \left\{u \in \mathcal{C}(\Sigma)\, \big|\,  \forall y \in \Sigma, T_{(y, v_u(y))}L_u \cap \left( E^+_{(y, v(y))} \oplus E^0_{(y, v_u(y))}\right) = \{0\} \right\}.
\end{equation}

By Lemma \ref{l:existence_uniqueness}, given $u\in \mathcal{C}(\Sigma)$, there exists an open neighbourhood $\Omega_u\subset X$ of $\Sigma$, and a map $\phi_u\in \mathcal{E}_1(\Omega_u)$ such that
\begin{equation}\label{eq:phi_from_u}
\phi_u|_\Sigma=u\text{ and }\partial\phi_u|_\Sigma=v_u\, .
\end{equation}
Moreover, any two functions with these properties must coincide on a neighbourhood of $\Sigma$. Furthermore, by Lemma \ref{l:description_eikonal}, for any $x\in \Omega_u$, {\color{black}there exists a unique pair $(y,t)\in\Sigma \times \R$ such that}
\begin{equation}\label{eq:GradientsAreCool}
(x, \partial \phi_u(x)) = \Phi^t(y, v_u(y))\, .
\end{equation}

\begin{figure}
\center
\includegraphics[scale=0.4]{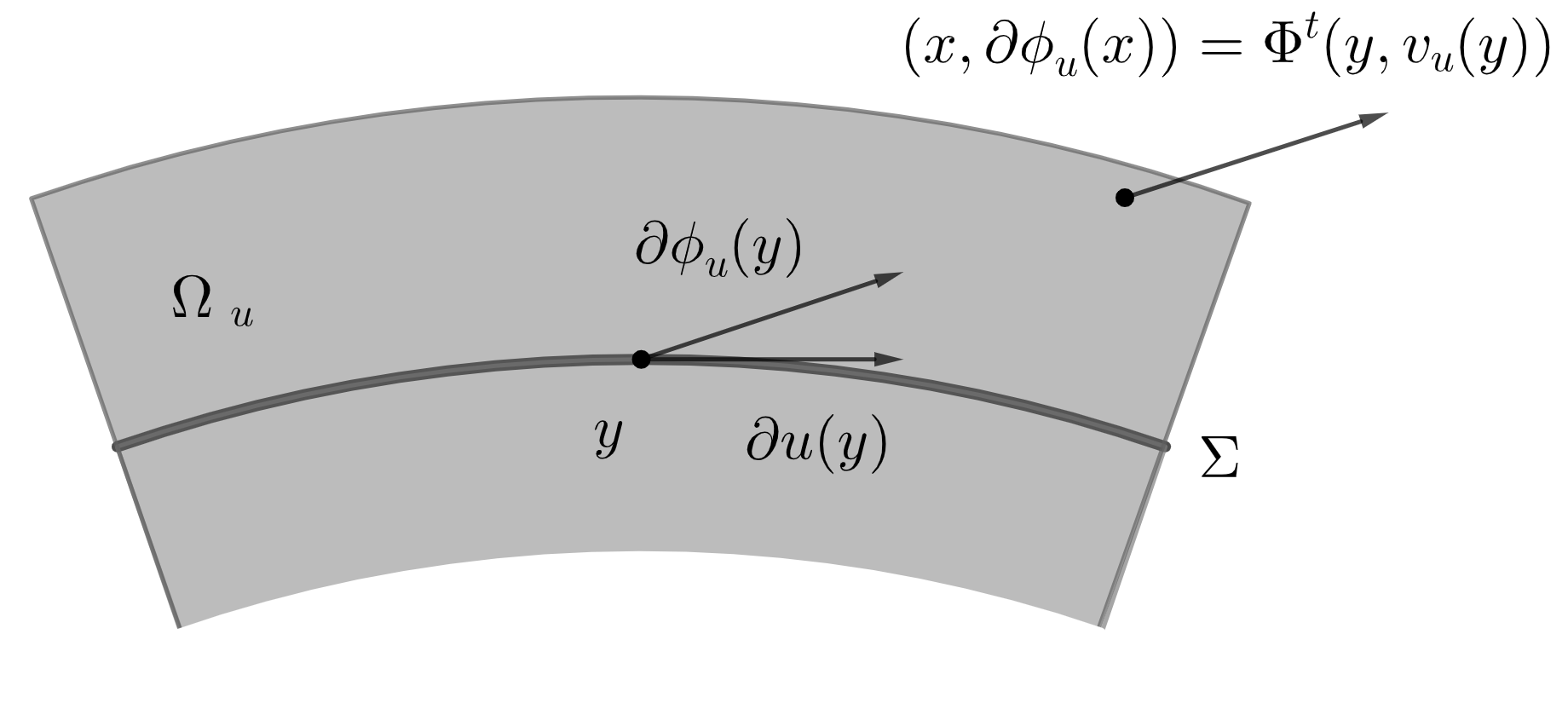}
            \caption{Construction of the phase $\phi_u$ from $u$.}
            \label{fig:Zden}
\end{figure}

In particular, we see from (\ref{eq:DirInv}) that $u\in \mathcal{C}^T(\Sigma)$ if and only if {\color{black} there exists $\Omega_u'\subset X$ an open subset with $\Sigma\subset\Omega_u'\subset\Omega_u$ such that $\phi_u \in \mathcal{E}_{1}^T(\Omega_u')$}. The same argument as in Remark \ref{rem:NonEmpty} shows that $\calC^T(\Sigma)$ is non-empty when $\Sigma$ is small enough, and that, if $u\in \calC^T(\Sigma)$ and $\overline{\Sigma'}\subset \Sigma$, then $\calC^T(\Sigma')$ contains a neighbourhood of $u_{|\Sigma}$.\\

We may now state our analogue of Theorem \ref{th:Pointwise} for monochromatic phases. To this end, we equip the set $\mathcal{C}(\Sigma)$ with the $C^\infty(\Sigma)$ topology (i.e., the topology of uniform convergence of derivatives on compact sets). Note that, unlike in the polychromatic case, the pointwise local weak limits exist here for all $x_0$, and not just for almost all of them.

\begin{tcolorbox}[breakable, enhanced]
\begin{theorem}\label{theoMono}
Let $X$ be a compact {\color{black}connected} Riemannian manifold with negative sectional curvature, and let $\Sigma\subset X$ be an embedded orientable simply connected hypersurface with a normal vector field $\nu$. There exists a residual subset $\mathcal{C}^{T, irr}(\Sigma)$ of $\mathcal{C}^T(\Sigma)$ such that, for any $u\in \mathcal{C}^{T, irr}(\Sigma)$, there exists $T_0\geq 0$ such that the following holds. Let $\phi_u$ and $\Omega_u$ be as in \eqref{eq:phi_from_u} {\color{black} such that $\phi_u\in\calE^T_1(\Omega_u)$}, and let $a\in C^\infty_c(\Omega_u)$. For each $h\in]0,1[$, we write $f_h(x)=a(x)e^{i\phi(x)/h}$. Let $\mathcal{U}\subset X$ be an open set, and $V$ be an orthonormal frame on $\mathcal{U}$.   Let $\frac{1}{2}<\alpha<1$. Then for every $x_0\in \mathcal{U}$, and every $t\geq T_0$, the family $\br{U_h(t) f_h}_{h>0}$ has an $(h^\alpha)_{h>0}$-pointwise local weak limit at $x_0$, which we denote by $\mu_{t,x_0}$. Furthermore, $\mu_{t,x_0}$ converges weakly to $\mathbb{P}_{\boldsymbol{\mu}_{a, \phi}}$ as $t\to +\infty$.
\end{theorem}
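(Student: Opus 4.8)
The plan is to reduce Theorem~\ref{theoMono} to Theorem~\ref{th:Pointwise} (or rather to the machinery behind it) by exhibiting $\mathcal{C}^T(\Sigma)$ as, essentially, a parametrization of a slice of $\mathcal{E}^T_1(\Omega)$, so that genericity statements can be transported back and forth. Concretely, I would first invoke Lemma~\ref{l:existence_uniqueness} and Lemma~\ref{l:description_eikonal} to pass from $u\in\mathcal{C}(\Sigma)$ to the pair $(\phi_u,\Omega_u)$ with $\phi_u\in\mathcal{E}_1(\Omega_u)$, and record that the assignment $u\mapsto \phi_u|_{\Omega'_u}$ is continuous for the $C^\infty$ topologies on suitable shrinkings (this is the content hidden in ``any two such functions coincide near $\Sigma$'' plus smooth dependence on initial data for the eikonal equation). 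The excerpt already tells us that $u\in\mathcal{C}^T(\Sigma)$ iff $\phi_u\in\mathcal{E}^T_1(\Omega'_u)$ for some neighbourhood $\Omega'_u$ of $\Sigma$; so the transversality hypothesis matches exactly.

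The core of the proof of Theorem~\ref{th:Pointwise}, carried out in section~\ref{sec:MainProof}, produces for each $\phi$ in a residual subset $\mathcal{E}^{T,irr}_{(\lambda_1,\lambda_2)}(\Omega)$ the conclusion about local weak limits, and the genericity argument of section~\ref{sec:PhaseGen} is what cuts out that residual subset via the ``rational independence of propagation directions'' conditions. So what I would do is: run exactly that same argument, but with the word ``generic'' now interpreted \emph{inside the space $\mathcal{C}(\Sigma)$}. That is, define $\mathcal{C}^{T,irr}(\Sigma)$ to be the preimage, under $u\mapsto\phi_u$, of the good set of phases (intersected with $\mathcal{C}^T(\Sigma)$). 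Two things need checking. First, that this preimage is residual in $\mathcal{C}^T(\Sigma)$: since $u\mapsto\phi_u$ is continuous, preimages of open dense sets are open; denseness is the real issue and must be re-proven, because the image of $\mathcal{C}(\Sigma)$ inside $\mathcal{E}_1$ is a ``thin'' submanifold and one cannot simply restrict a residual subset of the ambient space. One has to rerun the perturbation arguments of section~\ref{sec:PhaseGen} working with perturbations of $u$ on $\Sigma$ rather than perturbations of $\phi$ on $\Omega$; the key point is that perturbing $u$ freely in $\mathcal{C}(\Sigma)$ already perturbs $v_u$, hence $\Lambda_\phi=L_u$ propagated, in enough directions to achieve the required rational-independence/Baire-category conclusions. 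Second, that for $u\in\mathcal{C}^{T,irr}(\Sigma)$ the conclusion of Theorem~\ref{th:Pointwise} applies verbatim to $f_h=a e^{i\phi_u/h}$ with $a\in C^\infty_c(\Omega_u)$, which is immediate once $\phi_u\in\mathcal{E}^{T,irr}_{1}(\Omega'_u)$ and $\operatorname{supp}a\subset\Omega'_u$ (shrink $\Omega_u$ if needed). The statement that in the monochromatic case the pointwise limit exists for \emph{every} $x_0$ rather than almost every $x_0$ should come from the fact that the obstruction in the polychromatic case --- a null set of $x_0$ where caustics/focal points of the propagated Lagrangian cause the WKB description to degenerate --- can here be controlled uniformly, or absorbed, because monochromaticity rigidifies the geometry (all components of the propagated Lagrangian live on a single energy shell $S^*_1X$, on which the Anosov structure gives uniform transversality estimates); this point should be extracted from Proposition~\ref{prop:SumLag3} specialized to $\boldsymbol{\mu}_{a,\phi}\propto\delta_{\{1\}}$.

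The main obstacle, I expect, is precisely the denseness part of showing $\mathcal{C}^{T,irr}(\Sigma)$ is residual in $\mathcal{C}^T(\Sigma)$. One cannot just pull back a residual subset of $\mathcal{E}^T_{(\lambda_1,\lambda_2)}(\Omega)$ under the parametrization, because denseness is not preserved under preimage by a map onto a proper subset; the ``irrationality'' conditions on the propagation directions must be verified to be achievable by perturbations $u+\epsilon w$ with $w\in C^\infty(\Sigma)$, $|\partial(u+\epsilon w)|<1$. This requires re-examining the explicit mechanism in section~\ref{sec:PhaseGen}: one needs that the finitely many ``branches'' of the propagated phase, evaluated at a point $x_0$, depend on $u$ (through $v_u$ and the geodesic flow) in a way that is submersive enough onto the relevant finite-dimensional parameter space of directions, so that a Thom--Sard / transversality-to-a-countable-union-of-submanifolds argument goes through. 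Because $|\partial u|<1$ is an open constraint on $\Sigma$ (not a closed one like $|\partial\phi|=1$), the space $\mathcal{C}(\Sigma)$ is an open subset of a Fréchet space and admits genuine local perturbations, which is exactly why this approach works here whereas working directly in $\mathcal{E}_1(\Omega)$ fails. A secondary, more bookkeeping obstacle is keeping track of the shrinking domains $\Omega_u\supset\Omega'_u\supset\operatorname{supp}a$ and checking the constructions are compatible with $\overline{\Sigma'}\subset\Sigma\Rightarrow$ neighbourhood stability, so that the residual set does not depend on the auxiliary choices; this is handled by the same argument already sketched in Remark~\ref{rem:NonEmpty} and in the paragraph following \eqref{eq:GradientsAreCool}.
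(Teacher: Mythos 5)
Your overall strategy matches the paper's: the proof of Theorem~\ref{theoMono} really is obtained by rerunning the machinery behind Theorem~\ref{th:Pointwise} with $\phi_u$, $\Omega_u$, $\calC^{T,irr}(\Sigma)$ in place of $\phi$, $\Omega$, $\calE^{T,irr}_{(\lambda_1,\lambda_2)}(\Omega)$, and the genericity of $\calC^{T,irr}(\Sigma)$ is established directly inside $\calC(\Sigma)$ (Proposition~\ref{prop:monochrom_conclusion}) by a transversality argument for sections $x\mapsto\partial u(x)$ of $T^*\Sigma$ rather than by pulling back a residual set of phases, exactly as you correctly warn one must do.

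However, your explanation of why the monochromatic theorem yields a pointwise local weak limit at \emph{every} $x_0$ rather than almost every $x_0$ is off. You attribute the ``almost every'' in Theorem~\ref{th:Pointwise} to a null set of $x_0$ where ``caustics/focal points cause the WKB description to degenerate,'' which monochromaticity is supposed to control via ``uniform Anosov estimates.'' That is not the mechanism. The WKB expansion (Proposition~\ref{prop:SumLag3}) holds for all $x_0$ in both cases; the only obstruction to the fixed-time local limit is failure of \emph{rational independence} of the propagation directions $\xi_j^{t,x_0}$, because the Kronecker equidistribution step in the proof of Proposition~\ref{prop:LWSumLag} needs it. In the polychromatic argument (Lemma~\ref{lem:polychrom_irr}), the generic preimage $\mathcal{O}_{\boldsymbol{n}}$ of the rational-dependence locus is a countable union of one-dimensional submanifolds of $\Omega^k\times\R$, whose image in the $d$-dimensional $X$ is merely measure zero; hence ``almost every $x_0$.'' In the monochromatic argument (Proposition~\ref{prop:irrational_monochromatic}), the analogous bad set $Z''_{\boldsymbol{n}}\subset(T^*\Sigma)^k$ has codimension at least $kd-1$, while the domain $\Sigma^k$ has dimension only $k(d-1)$; since $k(d-1)<kd-1$ for $k\geq 2$, transversality of the section $(\partial u,\dots,\partial u)$ to $Z''_{\boldsymbol{n}}$ forces \emph{empty} intersection, not just a thin one. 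The upgrade from ``almost every'' to ``every'' is thus a pure dimension count on the hypersurface $\Sigma$, not a statement about degeneration of the WKB ansatz or uniform hyperbolicity. Without this dimension count, your proposed route has no way to produce the ``every $x_0$'' conclusion, so this is a genuine gap rather than a stylistic divergence.
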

\end{tcolorbox}}
{\color{black}
\begin{remark}
Note that, in this case, as explained in section \ref{sec:IntroMono}, if $f$ has law $\mathbb{P}_{\boldsymbol{\mu}_{a, \phi}}$, then, $\|a\|_{L^2}^{-1}f$ is in fact the monochromatic wave. In particular, although the construction depends on $\calU$, on the choice of frame $(V(x))_x$, on $a$ and on $\phi$, the limit is (up to a multiplicative constant) independent of all of these choices.
\end{remark}
}
\begin{remark}
As for the case of Theorem \ref{th:Pointwise}, $\calC^{T,irr}(\Sigma)$ is non-empty and we obtain a wide family of Lagrangian states have pointwise local weak limits converging to the monochromatic wave under the action of the Schr\"odinger flow.
\end{remark}

\section{Proof of Theorems \ref{th:Pointwise} and \ref{theoMono}}\label{sec:MainProof}
The aim of this section is to describe explicitly the sets $\mathcal{E}^{T,irr}_{(\lambda_1,\lambda_2)}(\Omega)$ and $\calC^{T,irr}(\Sigma)$ appearing respectively in the statements of Theorem \ref{th:Pointwise} and Theorem \ref{theoMono}, and to prove these theorems, postponing the proof of the fact that $\mathcal{E}^{T,irr}_{(\lambda_1,\lambda_2)}(\Omega)$ (resp. $\calC^{T,irr}(\Sigma)$) is a residual subset of $\mathcal{E}^{T}_{(\lambda_1,\lambda_2)}(\Omega)$ (resp. $\calC^{T}(\Sigma)$)  to the next section.\\

Throughout the present section, we will therefore fix $\Omega\subset X$ an open subset, as well as constants $0<\lambda_1<\lambda_2$, and consider phases in $\calE_{(\lambda_1,\lambda_2)}(\Omega)$. Likewise, for the monochromatic case, we fix $\Sigma\subset X$ a simply connected embedded orientable hypersurfaces of $X$ and $\nu$ a section of $TX|_\Sigma$ such that for each $y\in\Sigma$, $\nu(y)$ has unit norm and is orthogonal to $T_y\Sigma$ in $T_yX$. We will also consider monochromatic phases of the form $\phi_u$ with $u\in\calC(\Sigma)$ as defined in section \ref{sec:IntroMono}.\\

Finally, in order to describe local limits, we also fix $\calU\subset X$ equipped an orthonormal frame $V$ as in section \ref{ss:local_limits_def}.\\

The proof will go as follows. In section \ref{ss:criterion} we state a compactness criterion. Thanks to this criterion, proving convergence of finite marginals will yield convergence in $C^\infty(\R^d)$ topology. In section \ref{ss:propagation} we will describe the effect of the Schr\"odinger propagator on a Lagrangian state whose phase belongs to $\calE^T_{(\lambda_1,\lambda_2)}(\Omega)$. In section \ref{ss:convergence} we first describe the sets $\calE^{T,irr}_{(\lambda_1,\lambda_2)}(\Omega)$ and $\calC^{T,irr}(\Sigma)$. Assuming that $\phi$ belongs to one of these sets we let $h\to 0$ for some fixed (large enough) $t$ and describe the local limits associated to the propagated Lagrangian state at time $t$ around some point $x_0$ (which we assume to be generic in the former case). In section \ref{ss:long_time} we let $t\to+\infty$ and describe the asymptotic behavior of the local limit around $x_0$. Finally, in section \ref{ss:conclusion} we fit the pieces together and complete the proofs of Theorems \ref{th:Pointwise} and \ref{theoMono}.

\subsection{A criterion for convergence of local measures}\label{ss:criterion}
Here we record a compactness criterion for the convergence of probability measures on $C^\infty(\R^d)$. Let $\boldsymbol{a} = (a_{k,\ell})_{k,\ell \in \N^2}$ be a sequence of positive real numbers depending on two parameters. We define
\begin{equation}\label{eq:CompSet}
\mathcal{K}(\boldsymbol{a}):= \{f\in C^\infty(\R^d) ~|~\forall k, \ell \in \N,  \|f\|_{C^\ell (B(0,k))} \leq a_{k,\ell}\}.
\end{equation}

It follows from the Arzela-Ascoli theorem that $\mathcal{K}(\boldsymbol{a})$ is a compact subset of $C^\infty(\R^d)$ for the topology of convergence of all derivatives over all compact sets.

Let us write $\mathcal{F}$ for the set of functionals $F$ of the form
 $\textcolor{black}{C^\infty(\R^d)\ni f\mapsto }F(f) = G(f(x_1),\dots, f(x_k)),$
where $k\in \N$, $x_1,\dots, x_k\in \R^d$ and $G\in C_c(\C^k)$.
Then $\mathcal{F}$ forms an algebra which separates points. Hence, by the Prokhorov theorem, we obtain the following result, which we will use several times in the sequel. See section 3 of \cite{LWL} for more details.
 
 \begin{tcolorbox}[breakable, enhanced]
\begin{lemma}\label{lem:CritMes}
Let $\boldsymbol{a} = (a_{k,\ell})_{k,\ell \in \N^2}$ be a sequence of positive real numbers depending on two parameters. Let $(\P_n)$ be a sequence of Borel probability measures on $C^\infty(\R^d)$, which is supported in $\mathcal{K}(\boldsymbol{a})$, and let $\mu$ be a Borel probability measure on $C^\infty(\R^d)$. Suppose that, for any $F\in \mathcal{F}$, we have
$$\mathbb{E}_{\mathbb{P}_n} \left[F\right] \underset{n\to +\infty}{\longrightarrow}  \mathbb{E}_{\mathbb{P}} \left[F\right].$$
Then $(\mathbb{P}_n)$ converges weakly to $\mathbb{P}$.
\end{lemma}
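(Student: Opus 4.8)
The plan is to deduce Lemma~\ref{lem:CritMes} from two classical ingredients: Prokhorov's theorem (to upgrade pointwise convergence of expectations against $\mathcal{F}$ into genuine weak convergence of measures) and a Stone--Weierstrass-type density argument (to pass from test functionals in $\mathcal{F}$ to all bounded continuous functionals on $C^\infty(\R^d)$). The hypothesis that all the $\P_n$ are supported in the \emph{fixed} compact set $\mathcal{K}(\boldsymbol{a})$ is what makes the family $(\P_n)$ automatically tight, so Prokhorov applies with no further work.

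First I would invoke tightness: since $\mathcal{K}(\boldsymbol{a})$ is compact in $C^\infty(\R^d)$ by Arzel\`a--Ascoli (as already noted in the excerpt) and every $\P_n$ is supported there, the sequence $(\P_n)$ is tight. By Prokhorov's theorem it is therefore relatively compact for the weak topology: every subsequence has a further subsequence converging weakly to some Borel probability measure on $C^\infty(\R^d)$. By a standard subsequence argument, to prove $\P_n\to\P$ weakly it suffices to show that every weak limit point of $(\P_n)$ equals $\P$. So let $(\P_{n_j})$ be a subsequence converging weakly to some probability measure $\mathbb{Q}$; note that $\mathbb{Q}$ is also supported in $\mathcal{K}(\boldsymbol{a})$, since $\mathcal{K}(\boldsymbol{a})$ is closed and by the portmanteau theorem $\mathbb{Q}(\mathcal{K}(\boldsymbol{a}))\geq\limsup_j\P_{n_j}(\mathcal{K}(\boldsymbol{a}))=1$.

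Next I would identify $\mathbb{Q}$ with $\P$. For any $F\in\mathcal{F}$, weak convergence of $\P_{n_j}$ to $\mathbb{Q}$ gives $\E_{\P_{n_j}}[F]\to\E_{\mathbb{Q}}[F]$, while the hypothesis gives $\E_{\P_{n_j}}[F]\to\E_{\P}[F]$; hence $\E_{\mathbb{Q}}[F]=\E_{\P}[F]$ for all $F\in\mathcal{F}$. It remains to argue that agreement of two probability measures (both supported on a compact metric set) on the algebra $\mathcal{F}$ forces them to be equal. Here is where I would use that $\mathcal{F}$ is a subalgebra of $C_b(C^\infty(\R^d))$ that separates points of $C^\infty(\R^d)$ (two distinct smooth functions differ at some point $x$, and one can build a compactly supported $G$ detecting this), contains a function bounded away from $0$ on $\mathcal{K}(\boldsymbol{a})$ (e.g.\ take $k=1$, $x_1=0$ and $G$ equal to $1$ on a large ball — more carefully, since $\mathcal{K}(\boldsymbol{a})$ is compact one can produce an element of $\mathcal{F}$ that is identically $1$ on it), and is closed under complex conjugation. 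Restricting all functionals to the compact metric space $\mathcal{K}(\boldsymbol{a})$ and applying the Stone--Weierstrass theorem, the uniform closure of $\mathcal{F}|_{\mathcal{K}(\boldsymbol{a})}$ is all of $C(\mathcal{K}(\boldsymbol{a}))$. Since $\P$ and $\mathbb{Q}$ are both supported in $\mathcal{K}(\boldsymbol{a})$ and agree on a uniformly dense subalgebra of $C(\mathcal{K}(\boldsymbol{a}))$, they agree on all of $C(\mathcal{K}(\boldsymbol{a}))$, hence are equal as Borel measures on $C^\infty(\R^d)$ (using that $\mathcal{K}(\boldsymbol{a})$, as a compact metric space, is a Borel subset on which the measures are concentrated). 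Thus $\mathbb{Q}=\P$, every weak limit point is $\P$, and $\P_n\to\P$ weakly.

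The main obstacle — really the only subtle point — is the Stone--Weierstrass step: one must be slightly careful that $\mathcal{F}$ (which consists of compactly supported $G$, so does not literally contain the constant function $1$ on all of $C^\infty(\R^d)$) nonetheless contains a functional that is constant equal to $1$ \emph{on the compact set $\mathcal{K}(\boldsymbol{a})$}, which is all that is needed once everything is restricted to $\mathcal{K}(\boldsymbol{a})$; this is possible precisely because $\mathcal{K}(\boldsymbol{a})$ is compact, so the evaluation map $f\mapsto f(0)$ has bounded range on it and can be composed with a cutoff $G\in C_c(\C)$ equal to $1$ there. Everything else is bookkeeping with Prokhorov and portmanteau. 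Given that the excerpt explicitly refers to ``section 3 of \cite{LWL}'' for details, it is legitimate to carry out this argument at the level of a sketch and cite that reference for the routine verifications.
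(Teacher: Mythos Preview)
Your approach is exactly the one the paper has in mind: tightness from the common compact support $\mathcal{K}(\boldsymbol{a})$, Prokhorov to get subsequential limits, and the separating algebra $\mathcal{F}$ to identify every limit point with $\P$. The paper does not spell out more than this and simply refers to \cite{LWL}.

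There is one small gap in your write-up. You assert that ``$\P$ and $\mathbb{Q}$ are both supported in $\mathcal{K}(\boldsymbol{a})$'', but the hypotheses only give this for the $\P_n$ (and hence, via portmanteau, for $\mathbb{Q}$); nothing is assumed about the support of $\P$. Your Stone--Weierstrass argument, which works on $C(\mathcal{K}(\boldsymbol{a}))$, therefore does not directly apply to $\P$. The cleanest fix is to bypass Stone--Weierstrass entirely: agreement of $\E_{\mathbb{Q}}[F]$ and $\E_{\P}[F]$ for all $F\in\mathcal{F}$ means that $\P$ and $\mathbb{Q}$ have the same finite-dimensional distributions $(f(x_1),\dots,f(x_k))$, and since the Borel $\sigma$-algebra of $C^\infty(\R^d)$ is generated by the point evaluations (the continuous injection $C^\infty(\R^d)\hookrightarrow C(\R^d)$ is a Borel isomorphism onto its image between Polish spaces), this forces $\P=\mathbb{Q}$. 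With this adjustment your proof is complete and matches the paper's intended argument.
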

\end{tcolorbox}

\begin{remark}\label{rem:CondEsperance}
More generally, using Markov inequality, the condition that $(\P_n)$ is supported in $\mathcal{K}(\boldsymbol{a})$ can be replaced by the following:
For every $k, \ell\in \N$, there exists $a_{k,\ell}>0$ such that for all $n\in \N$, we have
$$\mathbb{E}_{\mathbb{P}_n} \left( \| f\|_{C^\ell (B(0,k))} \right)\leq a_{k,\ell}$$
\end{remark}

\subsection{Propagation of Lagrangian states by the Schrödinger equation}\label{ss:propagation}

In this subsection, we describe the propagation of Lagrangian states by the Schr\"odinger equation. In the classical world, each Lagrangian state $ae^{i\phi/h}$ defined on  $\Omega$ corresponds to a Lagrangian submanifold $\Lambda_\phi=\{(x,\partial\phi(x))\, :\, x\in\Omega\}\subset T^*X$. The dynamics of a Lagrangian state by the Schrödinger flow is easy to describe in terms of the evolution of $\Lambda_\phi$ under the geodesic flow on $X$. The main point of this section is to describe the effect of the Schrödinger propagator acting on a Lagrangian state on a manifold $X$ of negative sectional curvature. We do so in Proposition \ref{prop:SumLag3}. The proof of this proposition, which is essentially an application of the WKB method, relies on the techniques developed in \cite{Anan}, \cite{AN}, \cite{NZ}, and we will recall it in the section \ref{ss:demo_propagation} below for the reader's convenience. Recall that $U_h(t)=e^{ith\frac{\Delta}{2}}$ is the Schr\"odinger propagator and that $\Phi^t:T^*X\rightarrow T^*X$ is the geodesic flow. 

\begin{tcolorbox}[breakable, enhanced]
\begin{proposition}[Dynamics under the Schr\"odinger propagator]\label{prop:SumLag3}
Let $\phi_0\in \mathcal{E}^{T}_{(\lambda_1,\lambda_2)}(\Omega)$.

Then there exists $T_0=T_0(\phi_0)\geq 0$ such that for any $a\in C_c^\infty(\Omega)$ and any $t\geq T_0$, there exists $M(t)\in \N$ such that the application of the operator $U_h(t)$ to the  Lagrangian state
\begin{equation*}
a(x) e^{i\phi_0(x)/h}
\end{equation*}
 can be written, for any $k\in \N$, as
\begin{equation}\label{eq:PropagLagSum2}
U_h(t) (a e^{i\phi_0/h})(x) = \sum_{j=1}^{M(t)}  e^{i\phi_{j,t}(x)/h} b_{j,t}(x) +O_{C^k}(h),
\end{equation}
where  $b_{j,t}\in C^\infty(X)$ \textcolor{black}{are smooth functions whose support we denote by $\mathcal{U}_{j,t}$} and $\phi_{j,t}\in C^{\infty}(\mathcal{U}_{j,t})$, satisfying:
\begin{enumerate}
\item As $t\to +\infty$, $\max\limits_{j=1,\dots, N(t)} \|b_{j,t}\|_{C^0}\to 0$. Furthermore, for all $\varepsilon>0$, there exists $\delta>0$ such that for all $t\geq T_0$, all $j\in \{1,\dots, M(t)\}$ and all $x,y\in \textcolor{black}{\mathcal{U}_{j,t}}$, we have 
\begin{equation}\label{eq:LaBorneLog}
\mathrm{d}(x,y)\leq \delta \Longrightarrow |b_{j,t}(x)|\leq (1+\varepsilon) |b_{j,t}(y)|.
\end{equation}
\item For each $t\geq T_0$, $j \in\{1,\dots,M(t)\}$,  and $x\in \textcolor{black}{\mathcal{U}_{j,t}}$  there exists a point $y_{j,x,t}\in \Omega$ such that $\Phi^t(y_{j,x,t}, \partial \phi_0(y_{j,x,t})) = (x, \partial \phi_{j,t}(x))$. In particular,
$|\partial\phi_{j,t}|\in [\lambda_1,\lambda_2]$. 

\item There exists $C_1>0$ such that, for all $t\geq T_0$, all $j\in\{1,\dots,M(t)\}$ and all $x_0\in \mathcal{U}_{j,t}$, the number of $j'\in \{1,\dots,M(t)\}$ such that $x_0\in \mathcal{U}_{j',t}$ and $\partial _{x_0} \phi_{j,t} = \partial _{x_0} \phi_{j',t}$  is at most $C_1$.

\item There exists a constant $C_2>0$ such that for all $t\geq T_0$ and all $j\in\{1,\dots,M(t)\}$, we have
\begin{equation}\label{eq:phase_bound}
\|\partial\phi_{j,t}\|_{C^1}\leq C_2\ .
\end{equation}
\end{enumerate}
\end{proposition}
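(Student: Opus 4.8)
The proof is an application of the WKB (geometric optics) construction for the Schrödinger propagator, following the scheme of \cite{Anan,AN,NZ}. The starting point is that $U_h(t)$ applied to a Lagrangian state $ae^{i\phi_0/h}$ can, for short times, be written exactly as a single Lagrangian state $b_t e^{i\phi_t/h} + O_{C^k}(h^\infty)$, where $\phi_t$ solves the eikonal (Hamilton--Jacobi) equation $\partial_t\phi + \tfrac12|\partial\phi|^2 = 0$ with $\phi_{|t=0}=\phi_0$, and $b_t$ solves the associated transport equation along the Hamiltonian flow; geometrically, $\Lambda_{\phi_t} = \Phi^t(\Lambda_{\phi_0})$, which is exactly item (2). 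This works as long as $\Lambda_{\phi_t}$ remains a graph over $X$, i.e., projects diffeomorphically to (an open subset of) $X$. The transversality hypothesis $\phi_0\in\mathcal{E}^T_{(\lambda_1,\lambda_2)}(\Omega)$ ensures — via the Anosov property and the fact that $\Phi^t$ expands the graph along the unstable direction — that $\Phi^t(\Lambda_{\phi_0})$ locally stays transverse to the vertical fibers $\hat E^0\oplus(\text{vertical})$, so no caustics form; this will be detailed in section~\ref{ss:back_to_hyperbolicity}. However, globally $\Phi^t(\Lambda_{\phi_0})$ is no longer a graph: it wraps around $T^*X$, and above a given $x$ there are finitely many sheets. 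The standard resolution is to use a partition of unity on $X$ subordinate to a cover by small balls, propagate each piece for a short time step $\tau$ (during which its Lagrangian projects nicely onto a ball), and iterate $\lceil t/\tau\rceil$ times; at each step each Lagrangian piece may split into several pieces according to how many sheets of the evolved Lagrangian sit above each ball. After $t/\tau$ steps this yields the finite sum $\sum_{j=1}^{M(t)} b_{j,t} e^{i\phi_{j,t}/h} + O_{C^k}(h)$ of equation~\eqref{eq:PropagLagSum2}; the $O(h)$ (rather than $O(h^\infty)$) error comes from the finitely many iterations and the $C^k$ control is obtained by tracking finitely many derivatives through the construction.

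\textbf{Items (1)--(4).} Item (2) is built into the construction as noted above, and $|\partial\phi_{j,t}|\in[\lambda_1,\lambda_2]$ follows because $\Phi^t$ preserves each energy shell $S^*_\lambda X$, so $|\partial\phi_0(y)|\in(\lambda_1,\lambda_2)$ is propagated to $|\partial\phi_{j,t}(x)|$. Item (4), the uniform $C^1$ bound on $\partial\phi_{j,t}$: the second derivative of $\phi_{j,t}$ encodes the tangent space $T\Lambda_{\phi_{j,t}}$ as a graph over the horizontal, and by the Anosov structure the evolved tangent spaces, once transverse to the stable direction, converge exponentially fast to the unstable subbundle $E^+$, which is a continuous (hence, on the compact energy shell, uniformly bounded) family of Lagrangian subspaces; so $\|\partial^2\phi_{j,t}\|$ is bounded uniformly in $t$ and $j$ once $t\geq T_0$. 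Item (1): the amplitude evolves by the transport equation, whose solution involves a Jacobian factor $|\det d\Phi^t|_{\Lambda}|^{-1/2}$ (a conditional Jacobian of the flow restricted to the Lagrangian, i.e., the unstable Jacobian in the large-time regime). Because the flow is uniformly hyperbolic, this Jacobian grows exponentially, so each $|b_{j,t}|\leq C e^{-ct}\|a\|_{C^0}$-type decay holds, giving $\max_j\|b_{j,t}\|_{C^0}\to 0$; the Hölder-type estimate \eqref{eq:LaBorneLog} follows from the bounded distortion property of the (un)stable Jacobian along pieces of the unstable manifold of diameter $\leq\delta$ — this is a standard consequence of the $C^2$-boundedness of the geodesic flow and the Anosov structure. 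Item (3), bounded multiplicity of coinciding gradients: two pieces $j,j'$ with $x_0\in\mathcal U_{j,t}\cap\mathcal U_{j',t}$ and $\partial_{x_0}\phi_{j,t}=\partial_{x_0}\phi_{j',t}$ correspond to two points $y,y'\in\Omega$ with $\Phi^t(y,\partial\phi_0(y)) = \Phi^t(y',\partial\phi_0(y')) = (x_0,\xi_0)$, hence $(y,\partial\phi_0(y))$ and $(y',\partial\phi_0(y'))$ both lie on $\Phi^{-t}(\{(x_0,\xi_0)\})$, a single orbit segment; the number of times $\Lambda_{\phi_0}$ (a $d$-dimensional submanifold transverse to the flow direction) can meet this $1$-dimensional orbit in the compact set $\overline\Omega$ is bounded, uniformly in $t$ and in $(x_0,\xi_0)$, by a constant $C_1$ depending only on $\Lambda_{\phi_0}$ and the flow (a transversality/compactness argument).

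\textbf{Main obstacle.} The genuine difficulty is not any single estimate but the bookkeeping needed to carry out the iterated short-time WKB construction \emph{uniformly in $t$} and to make sure that the transversality hypothesis, which is only assumed at $t=0$, propagates for all later times so that no caustics ever appear and all the constants ($C_1$, $C_2$, the $\delta$ in \eqref{eq:LaBorneLog}) can be taken independent of $t$. This is exactly where the Anosov property and the precise definitions of $E^{\pm},E^0,\hat E^0$ are used, and it is the content of section~\ref{sec:ClasDyn}; I would organize that section around (a) the short-time WKB lemma with quantitative control, (b) a lemma showing $\mathcal E^T$-transversality is preserved and improves exponentially under $\Phi^t$, giving $T_0$, and (c) bounded-distortion estimates for the unstable Jacobian, and then assemble \eqref{eq:PropagLagSum2} by a straightforward (if tedious) induction on the number of time steps.
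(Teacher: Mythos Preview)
Your overall WKB plan is sound and would work, but the paper takes a different and cleaner route. Rather than iterating a short-time WKB step $\lceil t/\tau\rceil$ times on $X$ with a partition of unity refreshed at each step, the paper (i) uses a \emph{single} partition of unity on $\Omega$ to split $\Lambda_{\phi_0}$ into finitely many \emph{nowhere stable} pieces (Lemma~\ref{lem:PetitesLag}), (ii) lifts each piece to the universal cover $\tilde X$, a Hadamard manifold, (iii) observes that on $\tilde X$ geodesics diverge, so for $t\geq T_0$ the evolved Lagrangian $\tilde\Phi^t(\tilde\Lambda_n)$ is globally projectable (Lemma~\ref{lem:LongTimeProj}) and a \emph{single} application of WKB for the full time $t$ suffices (Proposition~\ref{cor:SchrodHadam}, via Lemma~\ref{lem:SchrodEucl}), and (iv) projects back to $X$ via the covering map, the finite sum $\sum_{j=1}^{M(t)}$ arising from the finitely many deck transformations whose translates meet the compact support. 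This bypasses the inductive bookkeeping you flag as the ``main obstacle'': there is no iteration, and the uniform-in-$t$ constants come directly from global geometry on $\tilde X$. Your approach --- essentially what \cite{NZ} does for propagation to Ehrenfest time --- would also work here since $t$ is fixed, at the cost of tracking constants through $O(t)$ compositions. For item~(1), the paper isolates the bounded-distortion estimate you invoke as a separate key lemma (Lemma~\ref{lem:LaBorneChiante}: $\log|\det d_xg_{0,t}|$ is continuous in $x$, uniformly in $(x,t)$) and proves it carefully via the Anosov splitting and Lemma~\ref{l:charts}; for item~(4) the paper argues by contradiction using the \emph{expanding} property of $\tilde\Phi^t(\tilde\Lambda_n)$ on $\tilde X$ (Lemma~\ref{lem:LongTimeProj}) rather than convergence of tangent planes to $E^+$.

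One correction to your argument for item~(3): if $\Phi^t(y,\partial\phi_0(y)) = \Phi^t(y',\partial\phi_0(y')) = (x_0,\xi_0)$ then $y=y'$, since $\Phi^t$ is a diffeomorphism; the set $\Phi^{-t}(\{(x_0,\xi_0)\})$ is a single point, not an orbit segment. The multiplicity bound comes entirely from the finite partition of unity in step~(i): on $\tilde X$ each lifted piece $\tilde\Lambda_n$ is projectable, hence contributes at most one sheet through a given $(x_0,\xi_0)$, so $C_1$ can be taken equal to the number $N$ of partition pieces.
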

\end{tcolorbox}

For the rest of the section, we fix $a\in C^\infty_c(\Omega)$ and $\phi\in\calE_{(\lambda_1,\lambda_2)}^T$. For each $h>0$ and $t\in\R$, we set

\[
f^t_h:= U_h(t) (a e^{i\phi/h})\, .
\]
Proposition \ref{prop:SumLag3} applies to $f^t_h$. For each $x_0\in X$, $t\geq T_0$, $j,j'\in \{1,\dots,M(t)\}$, we will write $j\sim_{x_0,t} j'$ if $x_0\in \mathcal{U}_{j,t}\cap \mathcal{U}_{j',t}$ and  $\partial \left( \phi_{j,t}\circ \widetilde{\exp}_{x_0}  \right)(0)= \partial \left( \phi_{j',t}\circ \widetilde{\exp}_{x_0}  \right)(0)$. Up to reordering the terms $\{1,\dots,M(t)\}$, we may suppose that there exists $N(t;x_0)\in \N$ such that  the set $\{1,\dots,N(t;x_0)\}$ contains exactly one representative of each of the different equivalence classes. In the sequel, since $x_0$ will be fixed most of the time, we will just write $N(t)$ instead of $N(t,x_0)$.

We then write, for every $j\in \{1,\dots,N(t)\}$
{\color{black}
\begin{equation}\label{eq:xi_beta}
\begin{aligned}
\xi^{t,x_0}_j &:= \partial \left( \phi_{j,t}\circ \widetilde{\exp}_{x_0}  \right)(0)\in \mathbb{R}^{d};\, \qquad \boldsymbol{\xi^{t,x_0}}:=(\xi_1^{t,x_0},\dots,\xi_{N(t)}^{t,x_0}); \\
 B_{j,t} (x_0) &:=  \sum_{\underset{j'\sim_{x_0,t} j}{j'\in \{1,\dots,M(t)\}}} b_{j',t}(x_0) e^{i \phi_{j,t}(x_0)/h} \in \C\\
\beta^{t, x_0}_j&:= \left| B_{j,t} (x_0) \right| ;\, \qquad  \boldsymbol{\beta^{t,x_0}}:=(\beta_1^{t,x_0},\dots,\beta_{N(t)}^{t,x_0})\\
.
\end{aligned}
\end{equation}}

\subsection{Convergence to pointwise local limits at fixed times}\label{ss:convergence}

In this subsection, we first define the residual sets of phases \eqref{eq:DefIrr} and \eqref{eq:DefGeneMono} which appear in the statements of Theorems \ref{th:Pointwise} and \ref{theoMono} respectively. Then, assuming that the phase belongs to \eqref{eq:DefIrr} we describe the pointwise local limits at fixed time $t$ large enough (see Proposition \ref{prop:LWSumLag} below).\\

Recall the definitions of $\mathcal{E}_{(\lambda_1,\lambda_2)}(\Omega)$ \eqref{eq:SpacePhases} and $\mathcal{E}^{T}_{(\lambda_1,\lambda_2)}(\Omega)$ \eqref{eq:TransversePhases}.  Let us write 
\begin{equation}\label{eq:DefIrr}
\begin{aligned}
\mathcal{E}^{T, irr}_{(\lambda_1,\lambda_2)}(\Omega):=\big{\{}& \phi \in \mathcal{E}^T_{(\lambda_1,\lambda_2)}(\Omega)\, |\, \exists T_0(\phi)<+\infty \text{ such that for almost every $x_0\in X$,}\\
&\text{the vectors $(\xi_j^{t,x_0})_{j=1,...,N_{x_0}(t)}$ are rationally independent for all $t\geq T_0(\phi)$}\big{\}},
\end{aligned}
\end{equation}
where the $(\xi_j^{t,x_0})_j$ are obtained from $\phi$ by the construction \eqref{eq:xi_beta} which follows from Proposition \ref{prop:SumLag3}. The set $\mathcal{E}^{T, irr}_{(\lambda_1,\lambda_2)}(\Omega)$ is precisely the set appearing in the statement of Theorem \ref{th:Pointwise}. We will show in section \ref{sec:PhaseGen} that the space is a residual subset of $\mathcal{E}^{T}_{(\lambda_1,\lambda_2)}(\Omega)$ equipped with the convergence of all derivatives on all compact sets.\\

For the monochromatic case, we will consider the following analogous set. Recall that, in section \ref{sec:IntroMono}, given an oriented hypersurface $\Sigma$, we saw how to associate to each function $u\in \mathcal{C}(\Sigma)$ an open neighbourhood $\Omega_f$ of $\Sigma$ and a map $\phi_u \in \mathcal{E}_1(\Omega_u)$. If $u\in \mathcal{C}^T$, we thus denote by $(\xi_j^{t,x_0})_{j=1,\dots,N_{x_0}(t)}$ the vectors obtained by applying Proposition \ref{prop:SumLag3} to $\phi_u$ (see \eqref{eq:xi_beta}).
We then define 
\begin{align}\label{eq:DefGeneMono}
\mathcal{C}^{T,irr}(\Sigma) := \Big{\{} &f\in \mathcal{C}^T(\Sigma) \, |\, \exists T_0(\phi)<+\infty \text{ such that  for  every $x_0\in X$},\\ \nonumber
&\text{the vectors $(\xi_j^{t,x_0})_{j=1,\dots,N_{x_0}(t)}$ are rationally independent for all $t\geq T_0(\phi)$}\big{\}}.
\end{align}
We will see in section \ref{subsec:mono} that this set is a residual subset of $\mathcal{C}^{T}(\Sigma) $ equipped with the topology of uniform convergence of derivatives on compact sets.\\

From now on, we will always suppose that the phase $\phi$ introduced in section \ref{ss:propagation} belongs to $\mathcal{E}^{T,irr}_{(\lambda_1,\lambda_2)}(\Omega)$, and take $x_0$ such that the vectors $(\xi_j^{t,x_0})_{j=1,\dots,N_{x_0}(t)}$ are rationally independent for all $t\geq T_0(\phi)$.\\

Let us now describe the measures $\mathbb{P}_{t,x_0}$ appearing in Theorem \ref{th:Pointwise} associated to the family $(f_h^t)$ introduced in section \ref{ss:convergence}. To do this, recall that at the beginning of section \ref{sec:MainProof} we fixed $\calU$ an open subset of $X$ equipped with an orthonormal frame $V$. We will always implicitly consider $h$-local limits in this frame. The local limits of $(f_h^t)$ for various fixed $t$ will belong to a family of probability laws on $C^\infty(\R^d)$ which we now define:

\begin{tcolorbox}[breakable, enhanced]
\begin{definition}\label{def:rpws}
Let $N\in \N$,  $\boldsymbol{\beta}= (\beta_1,\dots,\beta_N)\in \left(\R^+\right)^N$, and $\boldsymbol{\xi} =(\xi_1,\dots, \xi_N) \in \left(\R^d\right)^N$. We associate to $(\boldsymbol{\beta}, \boldsymbol{\xi})$ a probability measure $\mathbb{P}_{\boldsymbol{\beta},\boldsymbol{\xi}}$ on $C^\infty(\R^d)$ as follows. Let $\vartheta_1,\dots,\vartheta_N$ be i.i.d uniform random variables in $[0,2\pi]$. Then, $\mathbb{P}_{\boldsymbol{\beta},\boldsymbol{\xi}}$ is the law of
\[
y\mapsto\sum_{j=1}^N\beta_je^{iy\cdot\xi_j+i\vartheta_j}\, .
\]
\end{definition}
\end{tcolorbox}

\begin{tcolorbox}[breakable, enhanced]
\begin{proposition}[Pointwise local limits in fixed time]\label{prop:LWSumLag}
Let $\frac{1}{2}<\alpha<1$, and let $t\geq T_0(\phi)$. Let $x_0\in\calU$ be such that the vectors $(\xi_j^{t,x_0})_{j=1,\dots,N_{x_0}(t)}$ are rationally independent. Then $(f_h^t)_h$ has an $h^\alpha$-pointwise local weak limit at $x_0$, which is given by $\mathbb{P}_{\boldsymbol{\beta}^{t,x_0}, \boldsymbol{\xi}^{t,x_0}}$.
\end{proposition}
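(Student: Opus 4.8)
The plan is to apply the compactness criterion of Lemma~\ref{lem:CritMes} together with Proposition~\ref{prop:SumLag3}, reducing everything to the convergence of finite-dimensional marginals and to a law-of-large-numbers/equidistribution argument over the random base point $\textsc{x}$. First I would recall that by Proposition~\ref{prop:SumLag3}, for $t\geq T_0(\phi)$ and $x$ near $x_0$,
\[
f_h^t(x)=\sum_{j=1}^{M(t)}b_{j,t}(x)e^{i\phi_{j,t}(x)/h}+O_{C^k}(h)\, .
\]
Fix the evaluation points $y_1,\dots,y_m\in\R^d$ of a functional $F\in\mathcal F$. Writing $x=\widetilde{\exp}_{\textsc x}(hy_\ell)$ and Taylor-expanding each phase $\phi_{j,t}$ to second order around $\textsc x$, one gets $\phi_{j,t}(\widetilde{\exp}_{\textsc x}(hy_\ell))/h=\phi_{j,t}(\textsc x)/h+y_\ell\cdot\partial(\phi_{j,t}\circ\widetilde{\exp}_{\textsc x})(0)+O(h|y_\ell|^2)$, where the error is uniform in $j$ and $t$ by the $C^1$ bound \eqref{eq:phase_bound} (second derivatives of $\phi_{j,t}\circ\widetilde\exp_{x_0}$ are controlled because $\|\partial\phi_{j,t}\|_{C^1}\le C_2$). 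Also $b_{j,t}(\widetilde{\exp}_{\textsc x}(hy_\ell))=b_{j,t}(\textsc x)+o(1)$ by the Lipschitz-type bound \eqref{eq:LaBorneLog} (more precisely $b_{j,t}$ varies slowly at scale $h$). Grouping the indices $j'$ into the equivalence classes $\sim_{\textsc x,t}$ — which is legitimate since within a class all the linear-in-$y$ terms $\xi_j^{t,\textsc x}=\partial(\phi_{j,t}\circ\widetilde\exp_{\textsc x})(0)$ coincide — and using the definition \eqref{eq:xi_beta} of $B_{j,t}(\textsc x)$, we obtain
\[
f_{\textsc x,h}^t(y_\ell)=\sum_{j=1}^{N(t)}B_{j,t}(\textsc x)e^{iy_\ell\cdot\xi_j^{t,\textsc x}}+o(1)=\sum_{j=1}^{N(t)}\beta_j^{t,\textsc x}e^{i(y_\ell\cdot\xi_j^{t,\textsc x}+\Theta_j^h(\textsc x))}+o(1)\, ,
\]
where $\Theta_j^h(\textsc x)=\arg B_{j,t}(\textsc x)$ depends on $\textsc x$ essentially through the oscillating factor $\phi_{j,t}(\textsc x)/h$ modulo $2\pi$. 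Note here that $N(t)$ and the vectors $\xi_j^{t,x_0}$ do not depend on $h$, only on $t$ and $x_0$; and that for $\textsc x$ in a small enough ball $B(x_0,h^\alpha)$ around $x_0$ the integers $M(t),N(t)$, the supports $\mathcal U_{j,t}$, the equivalence classes, and the vectors $\xi_j^{t,\textsc x}$ are the same as at $x_0$ (by continuity, since $h^\alpha\to0$), and $\beta_j^{t,\textsc x}\to\beta_j^{t,x_0}$, $\xi_j^{t,\textsc x}\to\xi_j^{t,x_0}$ as $h\to0$.

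The heart of the matter is then to show that, as $\textsc x$ ranges uniformly over $B(x_0,h^\alpha)$ and $h\to0$, the phase vector $\big(\Theta_1^h(\textsc x),\dots,\Theta_{N(t)}^h(\textsc x)\big)\in(\R/2\pi\Z)^{N(t)}$ equidistributes with respect to Haar measure on the torus, and does so asymptotically independently of the (deterministic, slowly varying) amplitudes $\beta_j^{t,\textsc x}$. This is exactly where the rational independence hypothesis on $(\xi_j^{t,x_0})_j$ enters: for $\textsc x=\widetilde\exp_{x_0}(h^\alpha z)$ with $z$ uniform in the unit ball $B\subset\R^d$, a Taylor expansion gives $\phi_{j,t}(\textsc x)/h=\phi_{j,t}(x_0)/h+h^{\alpha-1}\,z\cdot\xi_j^{t,x_0}+O(h^{2\alpha-1})$; since $\tfrac12<\alpha<1$ we have $h^{\alpha-1}\to+\infty$ while $h^{2\alpha-1}\to0$, so the map $z\mapsto(h^{\alpha-1}z\cdot\xi_j^{t,x_0})_j\bmod 2\pi$ sweeps out, as a linear flow on $\R^{N(t)}$ pushed to the torus, a subtorus whose closure is the whole of $(\R/2\pi\Z)^{N(t)}$ precisely because the $\xi_j^{t,x_0}$ are rationally independent — hence the images equidistribute. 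Quantitatively, one tests against a character $\prod_j e^{ik_j\Theta_j^h(\textsc x)}$ with $(k_j)\in\Z^{N(t)}\setminus\{0\}$ and shows $\int_B e^{i h^{\alpha-1}z\cdot(\sum_j k_j\xi_j^{t,x_0})}\,dz\to0$, a non-stationary-phase estimate valid since $\sum_j k_j\xi_j^{t,x_0}\neq0$ by rational independence; the $O(h^{2\alpha-1})$ correction and the variation of $\beta_j^{t,\textsc x}$ over the ball are harmless. It follows that the joint law of $\big(f_{\textsc x,h}^t(y_1),\dots,f_{\textsc x,h}^t(y_m)\big)$, conditioned on $\textsc x\in B(x_0,h^\alpha)$, converges to the law of $\big(\sum_j\beta_j^{t,x_0}e^{i(y_\ell\cdot\xi_j^{t,x_0}+\vartheta_j)}\big)_{\ell=1,\dots,m}$ with $\vartheta_j$ i.i.d. uniform on $[0,2\pi]$, which is exactly the finite-dimensional marginal of $\mathbb P_{\boldsymbol\beta^{t,x_0},\boldsymbol\xi^{t,x_0}}$.

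It remains to upgrade convergence of finite marginals to weak convergence in $C^\infty(\R^d)$. For this I invoke Lemma~\ref{lem:CritMes} (via Remark~\ref{rem:CondEsperance}): I must exhibit constants $a_{k,\ell}$ such that $\E\big[\|f_{\textsc x,h}^t\|_{C^\ell(B(0,k))}\,\big|\,\textsc x\in B(x_0,h^\alpha)\big]\le a_{k,\ell}$ uniformly in $h$. But from the explicit form above, $\partial^\gamma_y f_{\textsc x,h}^t(y)=\sum_{j=1}^{N(t)}B_{j,t}(\textsc x)(i\xi_j^{t,\textsc x})^\gamma e^{iy\cdot\xi_j^{t,\textsc x}}+O_{C^\ell}(h)$, so $\|f_{\textsc x,h}^t\|_{C^\ell(B(0,k))}\le C(k,\ell)\,N(t)\max_j\big(\beta_j^{t,\textsc x}(1+|\xi_j^{t,\textsc x}|)^\ell\big)+O(h)$, and since $N(t)$ is finite, the $\beta_j^{t,\textsc x}$ are bounded by $\|b_{j,t}\|_{C^0}\le1$ times the multiplicity bound $C_1$ from item 3 of Proposition~\ref{prop:SumLag3}, and $|\xi_j^{t,\textsc x}|=|\partial(\phi_{j,t}\circ\widetilde\exp_{x_0})(0)|\le\|\partial\phi_{j,t}\|_{C^0}\cdot\|\mathrm d\widetilde\exp\|\le C_2'$ by item 2 and \eqref{eq:phase_bound}, the required uniform bound holds with $a_{k,\ell}=C(k,\ell)N(t)C_1(1+C_2')^\ell+1$. (Here $t$ is fixed throughout, so these $t$-dependent constants are allowed.) Applying Lemma~\ref{lem:CritMes} then yields that the conditional law of $f_{\textsc x,h}^t$ given $\textsc x\in B(x_0,h^\alpha)$ converges weakly to $\mathbb P_{\boldsymbol\beta^{t,x_0},\boldsymbol\xi^{t,x_0}}$, i.e.\ this is the $(h^\alpha)_h$-pointwise local weak limit of $(f_h^t)_h$ at $x_0$, as claimed.

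The step I expect to be the main obstacle is the equidistribution argument of the second paragraph: one must carefully separate the "frozen" phase $\phi_{j,t}(x_0)/h$ (which oscillates wildly in $h$ but is constant in the averaging variable $z$ and hence irrelevant after integrating a character), the genuinely equidistributing term $h^{\alpha-1}z\cdot\xi_j^{t,x_0}$, and the negligible remainder, and must check uniformity in $\textsc x\in B(x_0,h^\alpha)$ of the passage from $\phi_{j,t}$ to its linearization — this uses the $C^2$ control coming from \eqref{eq:phase_bound} and the constraint $\alpha>\tfrac12$ in an essential way. One also has to make sure the amplitudes $\beta_j^{t,\textsc x}$, which depend on $\textsc x$, can be replaced by $\beta_j^{t,x_0}$ with negligible error throughout the ball of radius $h^\alpha$, which again is a continuity statement.
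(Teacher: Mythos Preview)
Your proposal is correct and follows essentially the same approach as the paper: reduce to finite-dimensional marginals via the compactness criterion (Lemma~\ref{lem:CritMes}), Taylor-expand the phases around $x_0$ using the parametrization $\textsc{x}=\widetilde{\exp}_{x_0}(h^\alpha z)$ to isolate the linear term $h^{\alpha-1}z\cdot\xi_j^{t,x_0}$, and then run the Kronecker-type equidistribution argument on the torus by testing against characters and using that $\sum_j k_j\xi_j^{t,x_0}\neq 0$ for $(k_j)\neq 0$. The paper does the tightness step first and the equidistribution last, but the content is the same; your identification of $\alpha>\tfrac12$ as the condition making the $O(h^{2\alpha-1})$ remainder negligible and $\alpha<1$ as the condition making $h^{\alpha-1}\to\infty$ is exactly how the paper uses the constraint on $\alpha$.
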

\end{tcolorbox}

\begin{proof}[Proof of Proposition \ref{prop:LWSumLag}]
{\color{black}\textbf{First step: a criterion for convergence}}

Let $t\geq T_0(\phi)$ and $x_0\in\calU$ be such that the vectors $(\xi_j^{t,x_0})_{j=1,\dots,N_{x_0}(t)}$ are rationally independent. 
\textcolor{black}{Equation (\ref{eq:PropagLagSum2}) implies that, for any $R>0$ and any $k\in \N$, we have} 
$$\textcolor{black}{\|f_{\mathrm{x},h}^t\|_{C^k(B(0,R))}\leq  C(k) \sum_{j=1}^{M(t)} \|\phi_{j,t}\|_{C^k(B(0,R))} \|b_{j,t}\|_{C^k(B(0,R))}  +O(h).}$$
\textcolor{black}{This quantity is thus bounded independently of $h$, $t$ being fixed. This implies that} we may find a sequence $\boldsymbol{a}$ such that for all $h$ small enough and all $\mathrm{x}$ in $B(x_0,h^\alpha)$, the function $f^t_{\textsc{x},h}$ belongs to $\mathcal{K}(\boldsymbol{a})$, with $\mathcal{K}(\boldsymbol{a})$ as in (\ref{eq:CompSet}). Hence, thanks to Lemma \ref{lem:CritMes}, it suffices to show that for any $k\in \N$, any $y_1,\dots,y_n\in \R^d$ and any $G\in C_c(\C^n)$, we have 
\[
\mathbb{E} \left[F(f^t_{\textsc{x},h})\right] \underset{h\to 0}{\longrightarrow} \mathbb{E}_{\mathbb{P}_{\boldsymbol{\beta},\boldsymbol{\xi}}}\left[ F\right],
\]
where $F(f) = G(f(x_1), \dots, f(x_n))$ and where
the first {\color{black}expectation} is taken with respect to $\textsc{x}\in B(x_0, h^\alpha)$.\\

{\color{black}\textbf{Second step: Local expressions} 

Next, we are going to use Taylor expansions to obtain a simpler asymptotic expression for $F(f^t_{\textsc{x},h})$. If $x\in B(x_0, h^{\alpha})$, the fact that $b_{j,t}$ is $C^1$ implies that, for every fixed $y \in \R^d$, we have
\begin{align*}
b_{j,t}(\widetilde{\exp}_x(hy))&= b_{j,t}(x_0) + O(h^\alpha).
\end{align*}

To obtain a Taylor expansion for $e^{\frac{i}{h}\phi_{j,t}(\widetilde{\exp}_x(hy))}$, we write $\tilde{x}:= h^{-\alpha} \widetilde{\exp}_{x_0}^{-1}(x)$, so that $\tilde{x} \in B_{eucl}(0,1)$.
We first note that
$$\phi_{j,t}(x)= \phi_{j,t}(\widetilde{\exp}_{x_0}(h^{\alpha} \tilde{x})) = \phi_{j,t}(x_0) + h^\alpha \tilde{x} \cdot  {\color{black} \partial \left( \phi_{j,t}\circ \widetilde{\exp}_{x_0}  \right)(0) }  + O(h^{2\alpha}),$$
thanks to the definition of $\xi_j^{t, x_0}$. 

Using the fact that $\textcolor{black}{x}\mapsto \partial (\phi_{j,t}\circ(\widetilde{\exp}_{x}(z))|_{z=0}$ is $C^1$, we then have
\begin{align*}
\phi_{j,t}(\widetilde{\exp}_x(hy)) &= \phi_{j,t}(x) + hy \cdot \partial (\phi_{j,t}\circ(\widetilde{\exp}_x(z))|_{z=0}+ O(h^2)\\
&=\phi_{j,t}(x_0) + {\color{black}(h^\alpha \tilde{x}+ hy) \cdot  \partial \left( \phi_{j,t}\circ \widetilde{\exp}_{x_0}  \right)(0)} + O(h^{2\alpha}).
\end{align*}

All in all, we have
{\color{black}\begin{align*}
f_{x, h}^t(y) &\underset{\textup{by Proposition}\ref{prop:SumLag3}}{=}\sum_{j=1}^{M(t)} b_{j,t}(x_0) e^{i h^{-1}\phi_{j,t}(x_0) +i(h^\alpha \tilde{x}+ hy) \cdot  \partial \left( \phi_{j,t}\circ \widetilde{\exp}_{x_0}  \right)(0)} + O(h^{2\alpha - 1}) +O(h^\alpha)\\
&=\sum_{j=1}^{N(t)} B_{j,t}(x_0)e^{ \xi_j^{t,x_0} \cdot (h^{\alpha-1} \tilde{x}+ y)}+O(h^{2\alpha-1})+O(h^\alpha)\,\\
&=\sum_{j=1}^{N(t)}\beta_j^{t,x_0}e^{i\xi_j^{t,x_0}\cdot y+i\vartheta_j^{x_0,t}(\widetilde{x}; h)}+O(h^{2\alpha-1})+O(h^\alpha)\, ,
\end{align*}
where $\vartheta_j^{x_0,t}( \widetilde{x}; h)= \frac{1}{h} \mathrm{Arg}(B_{j,t}(x_0)) +h^{\alpha-1}\xi_j^{t,x_0}\cdot\widetilde{x}$, taking $\mathrm{Arg}(B_{j,t}(x_0))\in [0, 2\pi[$ to be the complex argument of $B_{j,t}(x_0)$.}

Since $\alpha>\frac{1}{2}$, the error terms vanish as $h\rightarrow 0$. Therefore, if we define the continuous function
\[\Gamma : \mathbb{T}^{N(t)} \ni (\theta_1,\dots,\theta_{N(t)}) \mapsto G \left(\sum_{j=1}^{N(t)}\beta_j^{t,x_0} e^{i\xi_j^{t,x_0}\cdot y_1+i\theta_j}, \dots, \sum_{j=1}^{N(t)}\beta_j^{t,x_0} e^{i\xi_j^{t,x_0}\cdot y_m+i\theta_j} \right)\, ,
\]
we have
\begin{equation}\label{eq:fixed_frame_local_limits_2}
F(f^t_{x,h})= \Gamma(\vartheta_1^{x_0,t}(\widetilde{x}; h),\dots,\vartheta_{N(t)}^{x_0,t}(\widetilde{x}; h)) +o_{h\to 0}(1)\, .
\end{equation}

\textbf{Third step: Computing the expectation}

To compute the expectation of this quantity, we note that $\widetilde{\textsc{x}}$ is a random variable on $B_{eucl}(0,1)$, whose density we denote by $\frac{1}{\textup{Vol}(B_{eucl}(0,1))}\rho_h(z)\mathrm{d}z$. Since $d_0\widetilde{\exp}_{x_0}$ is an isometry, we have that
\begin{equation}\label{eq:fixed_frame_local_limits_1}
|\rho_h(z)-1|\leq C h^\alpha
\end{equation}
for all $z\in B_{eucl}(0,1)$ for some $C<+\infty$ which depends only on $(X,g)$ and on the choice of frame $(V(x))_{x\in X}$.

Therefore, if $\textsc{z}$ denotes a uniform random variable on $B_{eucl}(0,1)$, we have
$$\E \left[F(f^t_{\textsc{x},h})\right] =\E \left[ \Gamma(\vartheta_1^{x_0,t}(\textsc{z}; h),\dots,\vartheta_{N(t)}^{x_0,t}(\textsc{z}; h))\right] + o_{h\to 0}(1).$$

To compute this expectation, we want to use a multidimensional Kronecker theorem, whose proof we recall.

Suppose first of all that $\Gamma$ is of the form $\Gamma (\theta_1, ..., \theta_{N(t)})= e^{2i\pi(n_1\theta_1+\cdots+n_{N(t)}\theta_{N(t)})}$, where $\boldsymbol{n}:=(n_1,\dots,n_{N(t)})\in\Z^{N(t)}\setminus\{0\}$. Let us write $\boldsymbol{\xi^{x_0,t}_n}:=\sum_j^{N(t)}n_j\xi^{x_0,t}$, which is non-zero since the $\xi^{x_0,t}_j$ are rationally independent. Therefore, we have

\begin{align*}
\E \left[ \Gamma(\vartheta_1^{x_0,t}(\textsc{z}; h),\dots,\vartheta_{N(t)}^{x_0,t}(\textsc{z}; h))\right] &={\color{black}e^{(2i\pi/h)\sum_{j=1}^{N(t)} \mathrm{Arg}(B_{j,t}(x_0))}}  \frac{1}{\textup{Vol}(B_{eucl}(0,1))}\int_{B_{eucl}(0,1)}e^{2i\pi h^{\alpha-1}\boldsymbol{\xi^{x_0,t}_n}\cdot z}\rho_h(z)\mathrm{d}z.
\end{align*}
But $\int_{B_{eucl}(0,1)}e^{2i\pi h^{\alpha-1}\boldsymbol{\xi^{x_0,t}_n}\cdot z}dz$ is the Fourier transform of the indicator of the unit ball evaluated at $h^{\alpha-1}\boldsymbol{\xi^{x_0,t}_n}$. Since $\alpha<1$ and $\boldsymbol{\xi^{x_0,t}_n}\neq 0$, this goes to zero as $h\to 0$.

For a general $\Gamma$, we may approach it uniformly by a trigonometric polynomial having the same mean (this is a consequence of Fejér's theorem), and we see from what precedes that only the constant term will give a non-vanishing contribution to the expectation as $h\to 0$. Therefore, we have 
$$\lim\limits_{h\to 0} \E \left[ \Gamma(\vartheta_1^{x_0,t}(\textsc{z}; h),\dots,\vartheta_{N(t)}^{x_0,t}(\textsc{z}; h))\right] = \int_{\T^{N(t)}}G(\theta_1,\dots, \theta_{N(t)}) \mathrm{d}\theta_1 \cdots\mathrm{d}\theta_{N(t)}\, .$$

This quantity is exactly $\mathbb{E}_{\mathbb{P}_{\boldsymbol{\beta},\boldsymbol{\xi}}}\left[ F\right]$, and the result follows.
}

\end{proof}

\begin{remark}
We used the fact that the $\xi_j^{x_0,t}$ are rationally independent only in the last {\color{black}step} of the proof. If they are not rationally independent, then the phases 
$$\left(\xi_1^{x_0,t}\cdot y+ \vartheta_1^{x_0,t}(h), \dots, \xi_{N(t)}^{x_0,t}\cdot y + \vartheta_{N(t)}^{x_0,t}(h)\right) $$ get equidistributed along an affine sub-torus of $\T^{N(t)}$. The linear part of this torus depends only on the $\xi^{x_0,t}_j$, and not on $h$. 
However, the affine torus depends on the $\left( \vartheta_1^{x_0,t}(h),\dots, \vartheta_{N(t)}^{x_0,t}(h)\right)$, so we do not have convergence to a measure independent of $h$ (and hence, existence of a pointwise local weak limit). However, we may extract subsequences $h_n$ such that $\left( \vartheta_1^{x_0,t}(h_N),\dots, \vartheta_{N(t)}^{x_0,t}(h_N)\right)$ converges. Doing so, we ensure the existence of pointwise local weak limits, even when the $\xi_j^{x_0,t}$ are not rationally independent. We will not use this construction in the sequel, since we don't want to extract subsequences.
\end{remark}

\subsection{Long time behaviour of local limits}\label{ss:long_time}
The aim of this section is to prove the following proposition, which is the last step in the proof of Theorem \ref{th:Pointwise}, except for the fact that $\mathcal{E}_{(\lambda_1,\lambda_2)}^{T, irr}$ is a residual set. Recall that we fixed a phase $\phi \in \mathcal{E}^{T,irr}_{(\lambda_1,\lambda_2)}(\Omega)$, and a function $a\in C_c^\infty(\Omega)$. We now also fix point $x_0$ such that the vectors $(\xi_j^{t,x_0})_{j=1,\dots,N_{x_0}(t)}$ are rationally independent for all $t\geq T_0(\phi)$. Recall the definition \eqref{eq:measure_def} of $\boldsymbol{\lambda_\mu}$ associated to some measure $\boldsymbol{\mu}$ and those of $\boldsymbol{\lambda}_{a,\phi}$ and $\boldsymbol{\lambda}_{\boldsymbol{\mu}_{a,\phi}}$ given just below \eqref{eq:measure_def}.

\begin{tcolorbox}[breakable, enhanced]
\begin{proposition}[Pointwise local limits at long time]\label{prop:LongTimeLim}
The measures $\mathbb{P}_{\boldsymbol{\beta^{t,x_0}}, \boldsymbol{\xi^{t,x_0}}}$ converge weakly, as $t\longrightarrow +\infty$, to $\P_{\mu_{\boldsymbol{a},\boldsymbol{\phi}}}$.
\end{proposition}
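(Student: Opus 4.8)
The plan is to prove convergence of the Fourier transforms (characteristic functionals), since both the measures $\mathbb{P}_{\boldsymbol{\beta^{t,x_0}},\boldsymbol{\xi^{t,x_0}}}$ and the target $\mathbb{P}_{\boldsymbol{\mu}_{a,\phi}}$ are laws of $C^\infty(\R^d)$-valued Gaussian-type random fields, and by Lemma \ref{lem:CritMes} it suffices to control finite-dimensional marginals together with a uniform moment bound. Fix points $y_1,\dots,y_m\in\R^d$ and look at the random vector $\big(g^t(y_1),\dots,g^t(y_m)\big)$ where $g^t(y)=\sum_{j=1}^{N(t)}\beta_j^{t,x_0}e^{iy\cdot\xi_j^{t,x_0}+i\vartheta_j}$ with $\vartheta_j$ i.i.d. uniform. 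This is a sum of $N(t)$ independent (because the $\vartheta_j$ are independent) centered complex random vectors in $\C^m$, the $j$-th one being $\beta_j^{t,x_0}e^{i\vartheta_j}\cdot\big(e^{iy_1\cdot\xi_j^{t,x_0}},\dots,e^{iy_m\cdot\xi_j^{t,x_0}}\big)$. The strategy is a Lindeberg-type central limit theorem: show that (i) the covariance structure of $g^t$ converges, as $t\to+\infty$, to that of the isotropic Gaussian field with energy decomposition $\boldsymbol{\mu}_{a,\phi}$, and (ii) the individual summands are uniformly negligible (a Lyapunov/Lindeberg condition), so that the limit is Gaussian.

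For step (i), I would use the conservation of $L^2$ mass under the unitary propagator $U_h(t)$ together with Proposition \ref{prop:SumLag3}: the total $\ell^2$-weight $\sum_j (\beta_j^{t,x_0})^2$ should, after the local rescaling, reproduce $\|a\|_{L^2}^2$ (or more precisely the appropriate local density), while the directions $\xi_j^{t,x_0}/|\xi_j^{t,x_0}|$ equidistribute. Equidistribution of directions is where the Anosov/negative-curvature dynamics enters: by Proposition \ref{prop:SumLag3}(2) each $\xi_j^{t,x_0}$ is the spatial projection of $\Phi^t(y_{j,x,t},\partial\phi_0(y_{j,x,t}))$, and transversality to the stable direction (the defining condition of $\mathcal{E}^T$) forces the image of $\Lambda_\phi$ under $\Phi^t$ to equidistribute, at each energy $\lambda\in[\lambda_1,\lambda_2]$, with respect to Liouville measure on $S^*_\lambda X$ — this is exactly the Schubert-type result alluded to in the introduction, giving the semiclassical measure as a superposition of Liouville measures weighted by $\boldsymbol{\mu}_{a,\phi}$. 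Concretely: for any continuous $\Psi$ on the sphere bundle, $\sum_j (\beta_j^{t,x_0})^2\,\Psi(\xi_j^{t,x_0})\to \int \Psi\, d\boldsymbol{\lambda}_{a,\phi}$ (suitably normalized), which applied to $\Psi(\xi)=e^{i(y_k-y_\ell)\cdot\xi}$ yields convergence of $\E[g^t(y_k)\overline{g^t(y_\ell)}]$ to $\int e^{i(y_k-y_\ell)\cdot\xi}\,d\boldsymbol{\lambda}_{a,\phi}(\xi)$, which is exactly the covariance in Definition \ref{def:spm_to_field}. One also checks the anomalous second moment $\E[g^t(y_k)g^t(y_\ell)]\to 0$, which holds because $\E[e^{2i\vartheta_j}]=0$, so no pseudo-covariance survives — consistent with the target being a complex isotropic Gaussian field.

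For step (ii), I expect the main obstacle: the Lindeberg condition requires $\max_j \beta_j^{t,x_0}\to 0$ as $t\to+\infty$ while $\sum_j(\beta_j^{t,x_0})^2$ stays bounded below. Proposition \ref{prop:SumLag3}(1) gives $\max_j\|b_{j,t}\|_{C^0}\to 0$, and the amplitude-comparison estimate \eqref{eq:LaBorneLog} controls how the $b_{j',t}(x_0)$ that get grouped into $B_{j,t}(x_0)$ relate to one another, while Proposition \ref{prop:SumLag3}(3) bounds the number of such $j'$ uniformly by $C_1$; together these give $\max_j\beta_j^{t,x_0}\lesssim C_1\max_j\|b_{j,t}\|_{C^0}\to 0$. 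The lower bound on $\sum_j(\beta_j^{t,x_0})^2$ is the delicate point and is where I would need to be careful: one must rule out destructive interference in the grouping \eqref{eq:xi_beta} causing mass cancellation, and ensure the local rescaled $L^2$ norm does not degenerate; here the transversality condition again is what guarantees the evolved Lagrangian states spread without caustics concentrating, so that the relevant Jacobian factors are controlled. Once $\max_j\beta_j^{t,x_0}\to 0$ and $\sum_j(\beta_j^{t,x_0})^2$ is bounded on both sides, the classical CLT for triangular arrays of independent bounded random variables (e.g. via the Lindeberg–Feller theorem, applied to real and imaginary parts of the vector $(g^t(y_1),\dots,g^t(y_m))$) gives joint Gaussianity of the limit, with the covariance computed in step (i). Finally, the uniform moment bound needed to invoke Lemma \ref{lem:CritMes} (in the form of Remark \ref{rem:CondEsperance}) follows from Proposition \ref{prop:SumLag3}(4) and (1), which bound $\|\partial\phi_{j,t}\|_{C^1}$ and the amplitudes uniformly, exactly as in the proof of Proposition \ref{prop:LWSumLag}. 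Assembling (i), (ii) and the moment bound identifies the weak limit as $\mathbb{P}_{\boldsymbol{\mu}_{a,\phi}}$, completing the proof.
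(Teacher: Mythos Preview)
Your proposal is correct and follows essentially the same route as the paper: the proof there is split into a Lindeberg-type criterion (Lemma~\ref{lem:WeakConv2}: if $\sum_j(\beta_j^t)^2\delta_{\xi_j^t}\rightharpoonup\boldsymbol{\lambda_\mu}$ and $\max_j\beta_j^t\to 0$ then $\mathbb{P}_{\boldsymbol{\beta^t},\boldsymbol{\xi^t}}\to\mathbb{P}_{\boldsymbol{\mu}}$) together with the equidistribution statement (Lemma~\ref{lem:EquidSCPonct}), the latter proved via Egorov's theorem and Schubert's result on semiclassical measures, exactly as you sketch. Your worry about a separate lower bound on $\sum_j(\beta_j^{t,x_0})^2$ and about destructive interference in the grouping is unnecessary: once you run the semiclassical-measure argument (pairing $f_h^t$ against $\mathrm{Op}_h(\chi_1\otimes\chi_2)$ for a bump $\chi_1$ near $x_0$), the computation lands directly on $\sum_j|B_{j,t}(x_0)|^2\chi_2(\xi_j^{t,x_0})$, so the weak convergence of this weighted sum of Diracs to $\boldsymbol{\lambda}_{a,\phi}$ already encodes both the covariance convergence and the non-degeneracy of total mass, with no cancellation issue to address.
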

\end{tcolorbox}

This proposition follows from the following two lemmas, which we prove below.

\begin{tcolorbox}[breakable, enhanced]
\begin{lemma}[A criterion for convergence in long time]\label{lem:WeakConv2}
Let $0<\lambda_1<\lambda_2$.
Suppose that, for all $t\geq 0$, we have integers $N(t)$, directions $\boldsymbol{\xi^t}= (\xi_1^t,\dots, \xi_{N(t)}^t) \in \left(\R^d\right)^{N(t)}$ such that for $j=1,\dots,N(t)$, $\lambda_1\leq |\xi_j^t|\leq \lambda_2$, and amplitudes $\boldsymbol{\beta^t}= (\beta_1^{t},\dots, \beta_{N(t)}^t) \in [0, +\infty)^{N(t)}$, which satisfy the following conditions:
\begin{itemize}
\item $\nu_t:= \sum_{j=1}^{N(t)} (\beta^t_j)^2 \delta_{\xi^t_j}$ converges weakly to $\boldsymbol{\lambda_\mu}$ for some measure $\boldsymbol{\mu}$ supported on a $[\lambda_1,\lambda_2]$.
\item $\max_{j\in 1,\dots, N(t)} \beta_j^t \underset{t\to +\infty}{\longrightarrow} 0$.
\end{itemize} 
Then $\mathbb{P}_{\boldsymbol{\beta^t}, \boldsymbol{\xi^t}}$ converges weakly to $\mathbb{P}_{\boldsymbol{\mu}}$.
\end{lemma}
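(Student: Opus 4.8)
The plan is to use the compactness criterion of Lemma \ref{lem:CritMes}: it suffices to (i) obtain uniform bounds on the $C^\ell(B(0,k))$ norms of the random fields under $\mathbb{P}_{\boldsymbol{\beta^t},\boldsymbol{\xi^t}}$ (so that all laws involved are supported in a fixed compact $\mathcal{K}(\boldsymbol{a})$, or at least satisfy the moment bound of Remark \ref{rem:CondEsperance}), and (ii) prove convergence of the finite-dimensional characteristic functionals, i.e. for all $m$, all $y_1,\dots,y_m\in\R^d$ and all $\zeta\in\C^m$, show that $\mathbb{E}_{\mathbb{P}_{\boldsymbol{\beta^t},\boldsymbol{\xi^t}}}\big[\exp(i\,\mathrm{Re}\sum_\ell \bar\zeta_\ell f(y_\ell))\big]$ converges to the corresponding Gaussian characteristic functional for $\mathbb{P}_{\boldsymbol{\mu}}$. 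For (i), since $f(y)=\sum_j\beta_j^t e^{iy\cdot\xi_j^t+i\vartheta_j}$ with $|\xi_j^t|\le\lambda_2$, differentiating brings down factors bounded by $\lambda_2^\ell$, so $\|f\|_{C^\ell(B(0,k))}\le C(k,\ell)\sum_j\beta_j^t$, and $\sum_j(\beta_j^t)^2=\nu_t(\R^d)\to\boldsymbol{\lambda_\mu}(\R^d)<\infty$ while $\max_j\beta_j^t\to0$; but I need a bound on $\sum_j\beta_j^t$ itself, which need not be bounded. This is where I expect the main obstacle, and I address it below — but it does not affect (ii), which only involves second moments, so I will instead get the compactness bound at the level of second moments: $\mathbb{E}[\|f\|_{C^\ell(B(0,k))}^2]\le C(k,\ell)\sum_j(\beta_j^t)^2$ by orthogonality of the $e^{i\vartheta_j}$, which is bounded, and this is exactly the form allowed by Remark \ref{rem:CondEsperance} (with $\|f\|$ replaced by $\|f\|^2$, or by applying Cauchy–Schwarz).

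For step (ii), fix $y_1,\dots,y_m$ and $\zeta\in\C^m$, and set $c_j := \sum_\ell \mathrm{Re}(\bar\zeta_\ell e^{iy_\ell\cdot\xi_j^t})$ and $s_j := \sum_\ell \mathrm{Im}(\bar\zeta_\ell e^{iy_\ell\cdot\xi_j^t})$, so that $\mathrm{Re}\sum_\ell\bar\zeta_\ell f(y_\ell) = \sum_j \beta_j^t(c_j\cos\vartheta_j - s_j\sin\vartheta_j) = \sum_j \beta_j^t r_j\cos(\vartheta_j+\psi_j)$ for suitable $r_j\ge0$, $\psi_j$. Since the $\vartheta_j$ are i.i.d.\ uniform, the characteristic functional factorizes: $\mathbb{E}[\exp(i\sum_j\beta_j^t r_j\cos(\vartheta_j+\psi_j))] = \prod_j J_0(\beta_j^t r_j)$, where $J_0$ is the Bessel function. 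Now $\log J_0(u) = -\tfrac14 u^2 + O(u^4)$ as $u\to0$, uniformly on compacts; since $\max_j\beta_j^t r_j\le \max_j\beta_j^t\cdot\max_j r_j\to0$ (as $r_j$ is bounded because $|\xi_j^t|\le\lambda_2$ and $\zeta$ is fixed), we get
\[
\log\prod_j J_0(\beta_j^t r_j) = -\frac14\sum_j(\beta_j^t)^2 r_j^2 + O\Big(\max_j(\beta_j^t)^2\cdot\sum_j(\beta_j^t)^2\Big) \longrightarrow -\frac14\lim_{t\to\infty}\sum_j(\beta_j^t)^2 r_j^2,
\]
the error vanishing because $\sum_j(\beta_j^t)^2$ is bounded and $\max_j(\beta_j^t)^2\to0$.

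It remains to identify the limit of $\sum_j(\beta_j^t)^2 r_j^2$ as a quadratic form in $\zeta$ matching the covariance of $\mathbb{P}_{\boldsymbol{\mu}}$. Expanding, $\sum_j(\beta_j^t)^2 r_j^2 = \int_{\R^d} q(\xi)\,d\nu_t(\xi)$ where $q(\xi) := \sum_j\beta_j^t r_j^2$-integrand is the continuous function $q(\xi) = \big(\sum_\ell\mathrm{Re}(\bar\zeta_\ell e^{iy_\ell\cdot\xi})\big)^2 + \big(\sum_\ell\mathrm{Im}(\bar\zeta_\ell e^{iy_\ell\cdot\xi})\big)^2 = \big|\sum_\ell\bar\zeta_\ell e^{iy_\ell\cdot\xi}\big|^2$. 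By the assumed weak convergence $\nu_t\to\boldsymbol{\lambda_\mu}$ (and since $q$ is bounded and continuous, $\boldsymbol{\lambda_\mu}$ compactly supported), $\int q\,d\nu_t\to\int_{\R^d}\big|\sum_\ell\bar\zeta_\ell e^{iy_\ell\cdot\xi}\big|^2\,d\boldsymbol{\lambda_\mu}(\xi) = \sum_{\ell,\ell'}\bar\zeta_\ell\zeta_{\ell'}\int e^{i(y_\ell-y_{\ell'})\cdot\xi}\,d\boldsymbol{\lambda_\mu}(\xi) = \sum_{\ell,\ell'}\bar\zeta_\ell\zeta_{\ell'} K(y_\ell,y_{\ell'})$, where $K$ is precisely the covariance of $\mathbb{P}_{\boldsymbol{\mu}}$ from Definition \ref{def:spm_to_field}. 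Thus the limiting characteristic functional is $\exp(-\tfrac14\sum_{\ell,\ell'}\bar\zeta_\ell\zeta_{\ell'}K(y_\ell,y_{\ell'}))$, which is the characteristic functional of the centered complex Gaussian vector with that covariance structure — matching $\mathbb{P}_{\boldsymbol{\mu}}$ (the factor $\tfrac14$ versus $\tfrac12$ reflecting the real/complex convention, and should be checked to be consistent with the convention used for $\mathbb{P}_{\boldsymbol{\mu}}$ in Definition \ref{def:spm_to_field}; if the field is genuinely complex one also needs the pseudo-covariance $\mathbb{E}[f(x)f(y)]$, which by the same computation with $\vartheta_j$ uniform yields $0$, confirming circular symmetry). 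Combining with the compactness bound via Lemma \ref{lem:CritMes} and Remark \ref{rem:CondEsperance} gives the weak convergence $\mathbb{P}_{\boldsymbol{\beta^t},\boldsymbol{\xi^t}}\to\mathbb{P}_{\boldsymbol{\mu}}$. The main subtlety, as flagged, is that the hypotheses give control only on $\sum_j(\beta_j^t)^2$, not on $\sum_j\beta_j^t$; I circumvent this by working with second moments throughout (both in the tightness bound and in the Bessel-function expansion, where only $\beta_j^t r_j\to0$ pointwise and $\sum(\beta_j^t)^2$ bounded are needed), so the argument never requires an $\ell^1$ bound on the amplitudes.
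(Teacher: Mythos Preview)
Your proof is correct and complete; the hedges you flag (the $\tfrac14$ constant, the vanishing pseudo-covariance) are consistent with the circularly-symmetric complex Gaussian that $\mathbb{P}_{\boldsymbol{\mu}}$ defines, so no adjustment is needed.

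Your route differs from the paper's. The paper uses the same second-moment tightness bound via Remark~\ref{rem:CondEsperance}, but for the finite-dimensional convergence it packages the sum $\sum_j \eta_j(t)$ with $\eta_j(t)=(\beta_j^t e^{i\xi_j^t\cdot y_1+i\vartheta_j},\dots,\beta_j^t e^{i\xi_j^t\cdot y_m+i\vartheta_j})$ as a triangular array of independent centered vectors and invokes the multivariate Lindeberg CLT directly: the covariance sums converge by the weak convergence of $\nu_t$, and the Lindeberg condition is immediate from $\max_j\beta_j^t\to 0$ together with the boundedness of $\sum_j(\beta_j^t)^2$. You instead compute the characteristic functional explicitly, exploiting that for i.i.d.\ uniform phases it factors as $\prod_j J_0(\beta_j^t r_j)$, and then expand $\log J_0(u)=-\tfrac14 u^2+O(u^4)$ to recover the Gaussian limit. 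Your argument is in effect a hands-on proof of the Lindeberg CLT in this specific setting; it is slightly more elementary (no black-box theorem) and makes the role of the two hypotheses fully transparent, while the paper's version is shorter and avoids the Bessel-function bookkeeping. Both approaches rest on exactly the same two inputs --- control of $\sum_j(\beta_j^t)^2$ via $\nu_t\rightharpoonup\boldsymbol{\lambda_\mu}$, and uniform smallness $\max_j\beta_j^t\to 0$ --- and neither needs an $\ell^1$ bound on the amplitudes, as you correctly emphasize.
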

\end{tcolorbox}

\begin{tcolorbox}[breakable, enhanced]
\begin{lemma}[A local quantum ergodicity result]\label{lem:EquidSCPonct}
The measures $\sum_{j=1}^{N(t)} |\beta^{t,x_0}_j|^2 \delta_{\xi^{t,x_0}_j}$ on $\R^d$ converges weakly as $t\to +\infty$ to $\boldsymbol{\lambda}_{a,\phi}$.
\end{lemma}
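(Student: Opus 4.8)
The statement to prove is Lemma~\ref{lem:EquidSCPonct}: the measures $\nu_t := \sum_{j=1}^{N(t)} |\beta_j^{t,x_0}|^2 \delta_{\xi_j^{t,x_0}}$ on $\R^d$ converge weakly, as $t\to+\infty$, to $\boldsymbol{\lambda}_{a,\phi}$. The guiding principle is that this is a \emph{local} (microlocalized at $x_0$) version of the quantum ergodicity statement of \cite{Schu}: at the classical level, the Lagrangian manifold $\Lambda_\phi$ pushed forward by the geodesic flow equidistributes, with respect to its natural measure $|a|^2\,\mathrm{d}x$ transported along the flow, on each energy layer $S^*_\lambda X$ according to Liouville measure; the weights $|\beta_j^{t,x_0}|^2$ are exactly the local masses of the pieces of $\Phi^t(\Lambda_\phi)$ sitting over $x_0$, and their directions $\xi_j^{t,x_0}$ record the fibre directions of those pieces. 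So the plan is to test $\nu_t$ against an arbitrary $g\in C_c(\R^d)$ and identify the limit with $\int g\,\mathrm{d}\boldsymbol{\lambda}_{a,\phi}$.

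\textbf{Step 1: rewrite $\langle \nu_t, g\rangle$ as an integral over $\Omega$.} Using the description in Proposition~\ref{prop:SumLag3}(2), each branch $\phi_{j,t}$ over a neighbourhood of $x_0$ is the graph obtained by flowing a piece of $\Lambda_\phi$ for time $t$, so that $(x_0,\partial\phi_{j,t}(x_0)) = \Phi^t(y_{j,x_0,t},\partial\phi_0(y_{j,x_0,t}))$ for some $y_{j,x_0,t}\in\Omega$; correspondingly $\xi_j^{t,x_0}$ is (the coordinate expression in the frame $V(x_0)$ of) $\partial\phi_{j,t}(x_0)$, which has norm $|\partial\phi_0(y_{j,x_0,t})|\in[\lambda_1,\lambda_2]$ by energy conservation. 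On the amplitude side, the WKB transport equation (the standard half-density transport along the flow, recalled in Section~\ref{sec:ClasDyn}) expresses $|b_{j,t}(x_0)|^2$ as the value at $x_0$ of the pushforward of $|a|^2\,\mathrm{d}x$ along $\Phi^t$ restricted to that branch, divided by the local Jacobian; here the estimate \eqref{eq:LaBorneLog} controls oscillation of $|b_{j,t}|$ on small balls. Summing over $j$ (and over the equivalence classes $\sim_{x_0,t}$ defining $\beta_j^{t,x_0}$; here Proposition~\ref{prop:SumLag3}(3) guarantees only boundedly many branches share a direction, so passing from the $b_{j',t}$ to $\beta_j^{t,x_0}$ costs only a bounded combinatorial factor whose contribution I would need to check does not survive in the limit — since distinct $y_{j',x_0,t}$ mapping to the same phase direction at $x_0$ must carry disjoint mass), I obtain, up to $o(1)$ as $t\to\infty$,
\[
\langle\nu_t,g\rangle \;=\; \int_{\Omega} g\bigl(\Pi_{\mathrm{fib}}\,\Phi^t(x,\partial\phi_0(x))\bigr)\,|a(x)|^2\,\mathrm{d}x \;+\; o(1),
\]
where $\Pi_{\mathrm{fib}}\,\Phi^t(x,\partial\phi_0(x))$ denotes the fibre component of $\Phi^t(x,\partial\phi_0(x))$ read off in the frame $V$ over the basepoint, \emph{weighted by a cutoff localizing the basepoint near $x_0$ at scale $h^\alpha$}; in the $t\to\infty$ limit this localization disappears by continuity, which is why the statement holds pointwise in $x_0$ (and, for the monochromatic case, at every $x_0$). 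The cleanest way to organize this identity is to test the $h$-local measure from Proposition~\ref{prop:LWSumLag} against $f\mapsto |\widehat{f}|^2$-type linear functionals (semiclassical measures), i.e. use a pseudodifferential operator $\mathrm{Op}_h(b)$ with $b$ a function of the fibre variable localized near $x_0$, and read off $\langle\mathrm{Op}_h(b)f_h^t,f_h^t\rangle$; Egorov's theorem turns this into a classical integral, reproducing the display above.

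\textbf{Step 2: apply equidistribution of $\Phi^t(\Lambda_\phi)$.} Since $X$ has negative curvature and $\Lambda_\phi$ is transverse to the stable foliation (this is exactly the hypothesis $\phi\in\mathcal{E}^T_{(\lambda_1,\lambda_2)}(\Omega)$), the results of \cite{Schu} (or the mixing of the geodesic flow combined with unstable-manifold equidistribution, recalled in Section~\ref{sec:ClasDyn}) give that for any continuous $F$ on $T^*X$,
\[
\int_\Omega F\bigl(\Phi^t(x,\partial\phi_0(x))\bigr)\,|a(x)|^2\,\mathrm{d}x \;\xrightarrow[t\to\infty]{}\; \int_{\lambda_1}^{\lambda_2}\Bigl(\int_{S^*_\lambda X} F\,\mathrm{d}L_\lambda\Bigr)\,\mathrm{d}\boldsymbol{\mu}_{a,\phi}(\lambda),
\]
where $L_\lambda$ is the normalized Liouville measure on $S^*_\lambda X$ and $\boldsymbol{\mu}_{a,\phi}$ is the energy measure of Definition~\ref{def:LagStat} (the energy $\lambda=|\partial\phi_0(x)|$ is a conserved quantity, so the $\lambda$-decomposition is just the pushforward of $|a|^2\,\mathrm{d}x$ by $x\mapsto|\partial\phi_0(x)|$, i.e.\ exactly $\boldsymbol{\mu}_{a,\phi}$). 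Applying this with $F(y,\eta) = g\bigl(\text{coordinates of }\eta\text{ in the frame }V(y)\bigr)$ — which is genuinely continuous and compactly supported in the radial variable — the inner integral $\int_{S^*_\lambda X}F\,\mathrm{d}L_\lambda$ computes to $\int_{\mathbb{S}^{d-1}}g(\lambda\omega)\,\mathrm{d}\omega_{d-1}(\omega)$, because Liouville measure disintegrates over $X$ with uniform conditional measure on each fibre sphere and the frame $V(y)$ is orthonormal, so the fibre direction in coordinates is uniform on the sphere of radius $\lambda$, independently of the basepoint. Plugging in, the limit of $\langle\nu_t,g\rangle$ is $\int_{\lambda_1}^{\lambda_2}\int_{\mathbb{S}^{d-1}}g(\lambda\omega)\,\mathrm{d}\omega_{d-1}(\omega)\,\mathrm{d}\boldsymbol{\mu}_{a,\phi}(\lambda)$, which is precisely $\int g\,\mathrm{d}\boldsymbol{\lambda}_{a,\phi}$ by definition \eqref{eq:measure_def}. (The total mass $\nu_t(\R^d) = \sum|\beta_j^{t,x_0}|^2$ converges to $\|a\|_{L^2}^2 = \boldsymbol{\mu}_{a,\phi}(\R)$, which both confirms tightness and upgrades vague to weak convergence; this is the $g\equiv 1$ case, i.e. $L^2$-norm conservation of the Schrödinger flow combined with the fact that the $h\to 0$ local limit captures all the mass at $x_0$, using again \eqref{eq:LaBorneLog} to rule out escape of mass from the ball $B(x_0,h^\alpha)$.)

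\textbf{Main obstacle.} The delicate point is Step~1: making rigorous the passage from the algebraic sum $\sum_j|\beta_j^{t,x_0}|^2\delta_{\xi_j^{t,x_0}}$ — which is defined via the $h$-independent data extracted from Proposition~\ref{prop:SumLag3} — to a clean classical integral against Liouville-type measures, \emph{uniformly in $t$}, despite the number of branches $M(t)$ growing exponentially and the amplitudes shrinking. Two things must be controlled: (i) that $|\beta_j^{t,x_0}|^2$ really is (asymptotically, as the localization scale shrinks) the local Liouville mass of the $j$-th flowed piece and not corrupted by interference between nearby branches sharing the same direction — here Proposition~\ref{prop:SumLag3}(3)–(4) and \eqref{eq:LaBorneLog} are the available tools, and one must argue the at-most-$C_1$ coalescing branches carry \emph{disjoint} base mass so their squared amplitudes add without cross terms; and (ii) that the error terms accumulated over $M(t)$ branches in the WKB expansion do not blow up — but since we take $h\to 0$ \emph{first} for fixed $t$ (the whole point of the weaker time range in this paper), each $O_{C^k}(h)$ error genuinely vanishes before $t$ is sent to infinity, so this is not actually an issue and the argument reduces to a finite-$t$ classical computation followed by a classical equidistribution limit. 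I would therefore spend most of the write-up on point (i) and on verifying the disintegration-of-Liouville computation in Step~2.
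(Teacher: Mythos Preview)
Your approach is essentially the paper's: test the semiclassical measure of $f_h^t$ against a symbol $\chi(x,\xi)=\chi_1(x)\chi_2((V_x^*)^{-1}\xi)$ localized near $x_0$, compute it once via Egorov plus \cite{Schu} and once via the WKB expansion \eqref{eq:PropagLagSum2} together with \eqref{eq:semi-classical_measure}, and compare.

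One organizational point resolves your stated obstacle (i) cleanly. The paper takes $\chi_1$ supported at a \emph{fixed} scale $\varepsilon$ around $x_0$ (not $h^\alpha$), with $\int\chi_1=1$; after sending $h\to 0$ the identity $\int\chi\,\mathrm{d}\nu_t=\int\sum_j\chi_1(x)\chi_2(\xi_j^{t,x})|B_{j,t}(x)|^2\,\mathrm{d}x$ is exact, with no cross terms to worry about (non-stationary phase kills them wherever the gradients differ, and branches sharing a gradient at $x_0$ are the same Lagrangian sheet split only by the partition of unity, hence recombine into $B_{j,t}$). Then \eqref{eq:phase_bound} and \eqref{eq:LaBorneLog} replace the integrand by its value at $x_0$ up to a multiplicative $(1+O(\varepsilon))$ and an additive $O(\varepsilon)$ (the latter controlled since the total mass $\nu_t(T^*X)$ is conserved), and one sends $\varepsilon\to 0$ only \emph{after} $t\to\infty$. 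Your displayed formula in Step~1, with the cutoff suppressed and an unspecified $o(1)$, is exactly this computation once the order of limits is made explicit.
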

\end{tcolorbox}

Let us start with the proof of Lemma \ref{lem:WeakConv2}.

\begin{proof}[Proof of Lemma \ref{lem:WeakConv2}]
For each $t\geq 0$, consider the random function $f^t(y):=\sum_{j=1}^{N(t)}\beta_j^t e^{i\xi_j^t\cdot y+i\vartheta_j^t}$, where for each $t$, the $\vartheta_j^t$ are independent random variables uniformly distributed on $[0,2\pi]$. Thus $f^t$ has law $\mathbb{P}_{\boldsymbol{\beta^t},\boldsymbol{\xi^t}}$. 

For any compact set $K\subset \R^d$ and any $k\in \N$, we have 
\[
\mathbb{E}_{\mathbb{P}_{\boldsymbol{\beta^t}, \boldsymbol{\xi^t}}}\left[ \|f\|_{H^k(K)}^2 \right]= \sum_{l=0}^k \sum_{j=1}^{N(t)} (\beta_j^t)^2 |\xi_j^t|^l \Vol(K)\leq \left(\textup{Vol}(K)\sum_{l=0}^k\lambda_2^l\right)\nu_t(\R^d)\, ,
\]
which is bounded independently of $t$, by assumption. We may therefore apply Lemma \ref{lem:CritMes} and Remark \ref{rem:CondEsperance} to prove the result.\\
 
To this end, we fix $y_1,\dots,y_k\in\R^d$ and we study the convergence of the vector $(f^t(y_1),\dots,f^t(y_k))$ as $t\rightarrow +\infty$. We wish to apply a multivariate Lindeberg Central Limit Theorem to the sum over $j$ of the random vectors $\eta_j(t)=(\beta_j^te^{i\xi_j^t\cdot y_1+i\vartheta_j},\dots,\beta_j^te^{i\xi_j^t\cdot y_k+i\vartheta_j})$. By construction, the $\eta_j$'s are mutually independent. Moreover, for each $t\geq 0$ and $j\in\{1,\dots,N(t)\}$, $\E[\eta_j(t)]=0$ and the covariance of $\eta_j(t)$ has coefficients $\E[\eta_j^h(t)\overline{\eta_j^l(t)}]=(\beta_j^t)^2e^{i\xi_j^t(y_h-y_l)}$. Thus, the sum of their covariance matrices $M^t=(m^t_{hl})_{hl}$ has coefficients
\[
m^t_{hl}=\sum_{j=1}^{N(t)}(\beta_t^j)^2e^{i\xi_j^t\cdot(y_h-y_l)}
\]
which converges to $m_{hl}=\int_{\R^d} e^{i\xi(y_h-y_l)}d\boldsymbol{\lambda_\mu}(\xi)$ by the first assumption of the lemma. But the matrix $(m_{hl})_{hl}$ thus constructed is the covariance matrix of the random vector $(f(y_1),\dots,f(y_k))$ where $f$ is a random function following the law $\P_{\boldsymbol{\mu}}$. In particular, the matrix $M^t$ is invertible for all large enough $t$. Lastly, since $\sup_j \beta_j^t \xrightarrow[t\rightarrow 0]{}0$, we have (deterministically) $\sup_j|\eta(t)|=o(N(t))$, which implies the remaining condition for the multivariate Lindeberg Central Limit Theorem\footnote{Thanks to the Cramér-Wold Theorem \cite[Theorem 29.4]{Bili}, the multivariate Lindeberg Central Limit Theorem follows from the usual Lindeberg Central Limit Theorem \cite[Chapter 27]{Bili}}. Thus, as $t\rightarrow +\infty$, the vector $(f^t(y_1),\dots,f^t(y_k))$ converges in law to a Gaussian vector $(\zeta_1,\dots,\zeta_k)$ with covariance $(m_{hl})_{hl}$. We may then conclude thanks to Lemma \ref{lem:CritMes} and Remark \ref{rem:CondEsperance}.
\end{proof}

Before proceeding with the proof of Lemma \ref{lem:EquidSCPonct}, let us introduce some notations.\\

Recall that $V=(V_1,\dots,V_d)$ is an orthonormal frame defined in a neighbourhood $\calU$ of $x_0$. Using the Riemannian metric, it naturally induces an orthonormal co-frame $(V_1^*,\dots,V_d^*)$, that is to say a family of smooth sections $(V_i^*)_{i=1,\dots,d} : X\longrightarrow T^*X$ such that, for each $x\in X$, $(V^*_1(x),\dots,V^*_d(x))$ is an orthonormal basis of $T_x^*X$. If $x\in \mathcal{U}$ and $y\in \R^d$, we will write $yV^*(x) := y_1 V^*_1(x)+\dots +y_d V^*_d(x) \in T_x^*X$. Conversely, if $\xi \in T_x^*X$, we write $(V_x^*)^{-1}(\xi)$ for the unique $y\in \R^d$ such that $yV^*(x)= \xi$. We refer the reader to section \ref{sec:semi-classique} for the definition and standard results regarding semi-classical measures, which we use in the proof. Recall also that $\Phi^t:T^*X\rightarrow T^*X$ is the geodesic flow.

\begin{proof}[Proof of Lemma \ref{lem:EquidSCPonct}]
The sequence $(a e^{i\phi/h})_{h>0}$ has a semi-classical measure, which we denote by $\nu_0$. By Egorov's theorem (Theorem \ref{t:Egorov} below), the semi-classical measure of $f_h^t=U_h^t (a e^{i\phi/h})$ is $\nu_t = \Phi^t_* \nu_0$. By \cite[Theorem 1]{Schu}, if $\mathrm{Liou}_\lambda$ denotes the Liouville measure on $S_\lambda^*$, then $\nu_t$ converges weakly to the measure $ \boldsymbol{\nu}_{a,\phi}:=\int_{\lambda_1}^{\lambda_2}{\color{black} \mathrm{Liou}_\lambda} \mathrm{d}\boldsymbol{\mu}_{a,\phi}$. Let $\varepsilon>0$, and $\chi_1 \in C_c^\infty(X)$ be supported in a neighbourhood of size $\varepsilon$ of $x_0${\color{black}, such that $\int_X\chi_1(x)dx=1$}. Let $\chi_2\in C_c^\infty(\R^d)$. We define $\chi\in C_c^\infty(T^*X)$ by

\[
\chi(x,\xi) = \chi_1(x) \chi_2\left( \left(V^*_x\right)^{-1}(\xi)\right)\, .
\]

By the previous remarks, we have 
\begin{align*}
\int_{T^*X} \chi(x,\xi) \mathrm{d}\nu_t(x,\xi) \underset{t\to +\infty}{\longrightarrow} &\int_{T^*X} \chi(x,\xi)\mathrm{d}\boldsymbol{\nu}_{a,\phi}(x,\xi)\\
&= \int_{S^{d-1}} \int_{\lambda_1}^{\lambda_2} \chi_2(\lambda v)   \mathrm{d}\boldsymbol{\mu}_{a,\phi}(\lambda) \mathrm{d}v + O(\varepsilon)\\
&= \int_{\mathbb{R}^{d-1}} \chi_2(w)   \mathrm{d}\boldsymbol{\lambda}_{a,\phi}(w) + O(\varepsilon)\, ,
\end{align*}
with $\boldsymbol{\lambda}_{a,\phi}$ and $\boldsymbol{\mu}_{a,\phi}$ as in section \ref{subsec:GaussField}. On the other hand, by Proposition \ref{prop:SumLag3} we know that, as $h\to 0$,
\[f_h^t (x) = \sum_{j=1}^{{\color{black}M(t)}}  e^{i\phi_{j, t}(x)/h} b_{j, t}(x) + o_{L^2}(1),
\]
so that, by \eqref{eq:semi-classical_measure} and the $L^2$-continuity of semi-classical measures (which follows for instance from Theorem 5.1 of \cite{Zworski_2012}),
\begin{align*}
\int_{T^*X} &\chi(x,\xi)\mathrm{d}\nu_t(x,\xi) = \int_{T^*X} \sum_{j=1}^{N(t)} \chi_1(x)\chi_2\left((V_x^*)^{-1}\left(\partial \phi_{j,t}(x)\right)\right) |{\color{black}B}_{j,t}(x)|^2 \mathrm{d}x\\
&\overset{\eqref{eq:phase_bound}}{=}\int_{T^*X} \sum_{j=1}^{N(t)} \chi_1(x)\chi_2\left((V_{x_0}^*)^{-1}\left(\partial \phi_{j,t}(x_0)\right)\right) |{\color{black}B}_{j,t}(x)|^2 \mathrm{d}x+O(\eps)\int_{T^*X} \sum_{j=1}^{N(t)} \chi_1(x)|{\color{black}B}_{j,t}(x)|^2 \mathrm{d}x\\
&\overset{\eqref{eq:LaBorneLog}}{=}\sum_{j=1}^{N(t)}  |{\color{black}B}_{j,t}(x_0)|^2 \chi_2\left( (V_{x_0}^*)^{-1}\left(\partial \phi_{j,t}(x_0)\right)\right) (1+O(\eps))+O(\eps)\nu_t(\chi_1)\\
&\overset{\eqref{eq:LaBorneLog}}{=}\sum_{j=1}^{N(t)}  |{\color{black}B}_{j,t}(x_0)|^2 \chi_2\left( (V_{x_0}^*)^{-1}\left(\partial \phi_{j,t}(x_0)\right)\right) (1+O(\eps))+O(\|\chi_1\|_\infty\eps)\, .
\end{align*}

To obtain the second line, we used the smoothness of the vector fields $V_i^*$ and of $\chi_2$. Note that $ (V_{x_0}^*)^{-1}\left(\partial \phi_{j,t}(x_0)\right) = \xi_j^t$ and recall that ${\color{black}|B_{j,t}(x_0)|}=\beta_j^t$. For the last line, we use the fact that, since $\nu_t=\Phi^t_*\nu_0$ as observed at the start of the proof, the total mass of $\nu_t$ is constant. We deduce that
\[
\limsup\limits_{t\to +\infty} \left|\sum_{j=1}^{N(t)}|\beta_j^t|^2 \chi_2(\xi_j^t) - \int_{\mathbb{R}^{d-1}} \chi_2(w)   \mathrm{d}\boldsymbol{\lambda}_{a,\phi}(w) \right| = O(\varepsilon),
\]
so that  $\sum_{j=1}^{N(t)}(\beta_j^t)^2 \chi_2(\xi_j^t) \longrightarrow \int_{\mathbb{R}^{d-1}} \chi_2(w)   \mathrm{d}\boldsymbol{\lambda}_{a,\phi}(w)$. In other words, $\sum_{j=1}^{N(t)}  (\beta_j^t)^2 \delta_{\xi_j^t}$ converges weakly to $\boldsymbol{\lambda}_{a,\phi}$.
\end{proof}

\subsection{Conclusion of the proofs}\label{ss:conclusion}

In this section we use the results from sections \ref{ss:criterion}, \ref{ss:propagation}, \ref{ss:convergence} and \ref{ss:long_time}, as well as Propositions \ref{prop:polychrom_conclusion} and \ref{prop:monochrom_conclusion} from the following section, to prove Theorems \ref{th:Pointwise} and \ref{theoMono}.

\begin{proof}[Proof of Theorem \ref{th:Pointwise}]
Let $\calE^{T,irr}_{(\lambda_1,\lambda_2)}(\Omega)$ be as in \eqref{eq:DefIrr}, which is a residual subset of $\calE^T_{(\lambda_1,\lambda_2)}(\Omega)$ by Proposition \ref{prop:polychrom_conclusion}. Let $a\in C^\infty_c(\Omega)$, let $\phi\in\calE^{T,irr}_{(\lambda_1,\lambda_2)}(\Omega)$. Then, there exists $T_0(\phi)<+\infty$ such that for almost every $x_0\in\calU$, for every $t\geq T_0$, the vectors $(\xi_j^{t,x_0})_{j=1,\dots,N_{x_0}(t)}$, defined in \eqref{eq:xi_beta}, are rationally independent. Let $\boldsymbol{\xi^{t,x_0}}$ and $\boldsymbol{\beta^{t,x_0}}$ be as in \eqref{eq:xi_beta}. Then, by Proposition \ref{prop:LWSumLag}, the field $(f_h^t)_h$ has an $h^\alpha$-pointwise local weak limit at $x_0$ given by $\mathbb{P}_{\boldsymbol{\beta^{t,x_0}},\boldsymbol{\xi^{t,x_0}}}$ (from Definition \ref{def:rpws}). Next, by Proposition \ref{prop:LongTimeLim}, the measures $\mathbb{P}_{\boldsymbol{\beta^{t,x_0}},\boldsymbol{\xi^{t,x_0}}}$ converge to $\mathbb{P}_{\boldsymbol{\mu_{a,\phi}}}$ (defined in section \ref{subsec:GaussField}).
\end{proof}
\begin{proof}[Proof of Theorem \ref{theoMono}]
The proof is very close to that of Theorem \ref{th:Pointwise}. The only differences are the following. The set $\calE^{T,irr}_{(\lambda_1,\lambda_2)}(\Omega)$ should be replaced by $\calC^{T,irr}(\Sigma)$ and Proposition \ref{prop:polychrom_conclusion} should be replaced by Proposition \ref{prop:monochrom_conclusion}. For the rest of the proof, one takes $u\in\calC^{T,irr}(\Sigma)$, which induces a phase $\phi_u$ defined on an open subset $\Omega_u$. The rest of the proof carries over with $\phi_u$ (resp. $\Omega_u$) in place of $\phi$ (resp. $\Omega$).
\end{proof}

\section{Classical and quantum dynamics of Lagrangian submanifolds}\label{sec:ClasDyn}

The aim of this section is to prove Proposition \ref{prop:SumLag3}. In sections \ref{ss:back_to_hyperbolicity} and \ref{ss:on_lagrangian_submfs} we introduce basic definitions and properties related to the hyperbolic dynamics on $S^*X$. In section \ref{ss:ClassicHadamard}, we then apply these to state Lemma \ref{lem:LaBorneChiante}, which is a key estimate needed in the proof (more precisely, we need it to prove \eqref{eq:LaBorneLog}). In section \ref{ss:labornechiante}, we prove Lemma \ref{lem:LaBorneChiante}. Finally, in section \ref{ss:demo_propagation}, we prove Proposition \ref{prop:SumLag3}. In all this section, we fix an arbitrary metric $g_0$ on $T^*X$. 

\subsection{Hyperbolicity}\label{ss:back_to_hyperbolicity}

For each $\lambda > 0$, we denote by $\Phi_\lambda^t:S_\lambda^*X\rightarrow S_\lambda^*X$, $t\in\R$ the geodesic flow on $S_\lambda^*X$.
Since $X$ has negative curvature, $(\Phi_\lambda^t)_t$ is an Anosov flow (see \cite{Ebe} for a proof of this fact). It means that for each $\rho\in S_\lambda^*X$, there exist $E_\rho^+$, $E_\rho^-$ and $E_\rho^0$ subspaces of $T_\rho S_\lambda^*X$, respectively called the \emph{unstable}, \emph{stable} and \emph{neutral} direction at $\rho$ such that:
\begin{itemize}
\item $T_\rho S_\lambda^*X=E_\rho^+\oplus E_\rho^-\oplus E_\rho^0$.
\item The distributions $E_\rho^+$, $E_\rho^-$ and $E_\rho^0$ depend H\"older continuously on $\rho$. 
\item The distribution $E_\rho^0$ is one dimensional and generated by $\frac{\mathrm{d}}{\mathrm{d}t}|_{t=0}\Phi^t_\lambda(\rho)$. In particular, $\mathrm{d}\Phi^t_\lambda|_{E^0}$ is bounded from above and below uniformly in $t$.
\item $E_\rho^+$ and $E_\rho^-$ are both $d-1$ dimensional, and for each $t\in\R$, we have \begin{equation}\label{eq:DirInv}
\mathrm{d}_\rho\Phi^t_\lambda(E^\pm_\rho)=E^\pm_{\textcolor{black}{\Phi^t_\lambda}(\rho)}.
\end{equation}
\item There exists $C>0$ and $A>1$ such that for each $\rho\in S_\lambda^*X$, $t>0$, $\xi^+\in E^+_\rho$ and $\xi^-\in E^-_\rho$,
\begin{equation}\label{eq:DefHyp}
\begin{aligned}
\|\mathrm{d}_\rho\Phi^{-t}_\lambda(\xi^+)\|_{g_0}&\leq C A^{-t} \|\xi^+\|_{g_0}\\
\|\mathrm{d}_\rho\Phi^t_\lambda(\xi^-)\|_{g_0}&\leq C A^{-t}\|\xi^-\|_{g_0}\, .
\end{aligned}
\end{equation}
\end{itemize}


If $\rho = (x,\xi) \in S_\lambda^*X$ for some $\lambda>0$, let us write $\hat{E}^0_\rho :=\{(0,s\xi)\, :\, s\in\R\}$ where $\rho=(x,\xi)$. Note that $T_\rho T^*X = E_\rho^+\oplus E_\rho^0\oplus E_\rho^- \oplus \hat{E}_\rho^0$. 
In a basis adapted to this decomposition, we have
\begin{equation}\label{eq:Matrix}
\mathrm{d}_\rho \Phi^t = \begin{pmatrix}
M_{\rho,t} & 0 & 0 & 0\\
0 & 1 & 0 & \lambda t\\
0 & 0 & (M_{\rho,t}^{-1})^\dagger & 0\\
0 & 0 & 0 & 1
\end{pmatrix},
\end{equation}
where $M_{\rho,t}$ is a $(d-1) \times (d-1)$ matrix such that $\|M_{\rho,t} \xi\| \geq C A^t \|\xi\|$ for any $\xi \in \R^{d-1}$. It follows from (\ref{eq:DirInv}) and (\ref{eq:Matrix}) that $E_\rho^-\oplus \hat{E}^0_\rho$ and $E_\rho^+\oplus E_\rho^0$ are Lagrangian spaces. 

\textcolor{black}{If $\sigma$ denotes the canonical symplectic structure on $T^*X$, we may find a constant $C_0>0$ such that, for all $\rho\in T^*X$ and all $\xi, \zeta \in T_\rho T^*X$, we have}
\begin{equation}\label{eq:CestRelouDeReprendreLePapier}
\textcolor{black}{|\sigma(\xi, \zeta)| \leq C_0 \|\xi\|_{g_0} \|\zeta\|_{g_0}.}
\end{equation}

\textcolor{black}{Furthermore, the map $\Phi^t$ being symplectic, we have $\sigma(\xi, \zeta) = \sigma( \mathrm{d}_\rho \Phi^t(\xi),  \mathrm{d}_\rho \Phi^t(\zeta))$. Combining this with (\ref{eq:Matrix}) and letting $t\to \pm \infty$, we see that} 
\begin{equation}\label{eq:CestRelouDeReprendreLePapier2}
\textcolor{black}{\left(\xi \in E_\rho^\pm \text{ and } \zeta \in  E_\rho^\pm \oplus E_\rho^0 \oplus \hat{E}_\rho^0 \right)  \Longrightarrow \sigma(\xi, \zeta)=0.}
\end{equation}

\textcolor{black}{In {\color{black}particular}, $E_\rho^0 \oplus \hat{E}_\rho^0$ is symplectically orthogonal to $E_\rho^+ \oplus E_\rho^-$. Since $E_\rho^0 \oplus \hat{E}_\rho^0$ forms a vector space of dimension 2, and there is a unique symplectic form on $\R^2$ up to a multiplicative constant, if $\xi = (\xi^0, \hat{\xi}^0) \in E_\rho^0 \oplus \hat{E}_\rho^0$ and $\zeta = (\zeta^0, \hat{\zeta}^0) \in E_\rho^0 \oplus \hat{E}_\rho^0$, we have $\sigma(\xi, \zeta) = c(\rho) \left( \xi^0 \hat{\zeta}^0 -  \zeta^0 \hat{\xi}^0\right)$. By continuity and compactness, if $0<\lambda_1 \leq \lambda_2$, there exists $0< c_1<c_2$ such that}
\begin{equation}\label{eq:CestRelouDeReprendreLePapier3}
\textcolor{black}{\rho \in S^*_{[\lambda_1, \lambda_2]} \Longrightarrow c_1 \leq |c(\rho)| \leq c_2}.
\end{equation}

Finally, we define the stable and weak stable manifolds of $\rho$ as
\begin{align*}
W^{-}(\rho) &= \{ \rho'\in S_\lambda^*X ~|~ d (\Phi_\lambda^t \rho, \Phi_\lambda^t \rho') \underset{t\longrightarrow +\infty}{\longrightarrow} 0\}\\
W^{-0}(\rho) &= \{ \rho'\in S_\lambda^*X ~|~ d (\Phi_\lambda^t \rho, \Phi_\lambda^t \rho') \text{ remains bounded as } t\longrightarrow +\infty\}.
\end{align*}

$W^{-}(\rho)$ and $W^{-0}(\rho)$ are then manifolds, whose tangent space at $\rho$ are respectively $E_\rho^-$ and $E_\rho^- \oplus E^0_\rho$. Furthermore, if $\rho'\in W^{-0}(\rho)$, there exists $t\in \R$ such that $\textcolor{black}{\Phi^t_\lambda}(\rho') \in W^-(\rho)$.

\subsection{Properties of Lagrangian submanifolds of $T^*X$}\label{ss:on_lagrangian_submfs}
In this section we introduce some basic properties of Lagrangian submanifolds of \textcolor{black}{$T^*Y$, where $Y$ is a Riemannian manifold. Recall that a Lagrangian submanifold is a submanifold $\Lambda\subset T^*Y$ of dimension $d$, such that the canonical symplectic form of $T^*Y$ vanishes on $T_\rho \Lambda$ for any $\rho \in \Lambda$ (see \cite[Chapter 1]{DiSj}). Here, we will focus on a special family of Lagrangian submanifolds, which can be written as graphs.}
\begin{tcolorbox}[breakable, enhanced]
\begin{definition}
\begin{itemize}
\item Let $Y$ be a smooth \textcolor{black}{Riemannian} manifold. We say that a Lagrangian submanifold $\Lambda\subset T^*Y$ is \emph{projectable} if there exist an open subset $\Omega_\Lambda\subset Y$ and a smooth real-valued function $\phi$ defined on a neighbourhood of $\overline{\Omega}_\Lambda$ such that
\[
\Lambda=\{(x,\partial\phi(x))\, :\, x\in\Omega_\Lambda\}\,  .
\]
\item We {\color{black}call} $\phi$ a \emph{phase function} and {\color{black}say} that it \emph{generates} $\Lambda$. Note that $\phi$ is monochromatic if and only if $\Lambda\subset S^*Y$. We call $\Omega_\Lambda$ the \emph{support} of $\Lambda$.
\item Given also $\Lambda'\subset T^*Y$ a Lagrangian submanifold, we say that $\Lambda'$ is a \emph{Lagrangian extension} of $\Lambda$ {\color{black}if} $\overline{\Lambda}\subset\Lambda'$.
\end{itemize}
\end{definition}
\end{tcolorbox}
\begin{remark}\label{Rem:CriterionProj}
Let $\Lambda$ be a submanifold of $T^*Y$. Then, $\Lambda$ is a projectable Lagrangian manifold if and only if $\Lambda$  is the graph of a smooth section of $T^*Y$ defined over an open subset $\Omega_\Lambda$, which {\color{black}can be extended} smoothly to some neighbourhood of $\overline{\Omega}_\Lambda$.
\end{remark}

\begin{tcolorbox}[breakable, enhanced]
\begin{definition}\label{def:transverse_to_stable_directions}
Let $0<\lambda_1<\lambda_2$, and let $\Lambda \subset S^*_{[\lambda_1,\lambda_2]}X$ be some Lagrangian submanifold. We say that $\Lambda$ is \emph{transverse to the stable directions} if it admits a Lagrangian extension $\Lambda'$ such that for any $\rho\in \Lambda'$,  we have
$$T_\rho \Lambda' \cap E_\rho^-  = \{0\}.$$
\end{definition}
\end{tcolorbox}
Note that, if $\Lambda\subset S^*_\lambda X$ for some $\lambda>0$, $T_\rho\Lambda'\cap E^-_\rho=\{0\}$ is equivalent to $T_\rho\Lambda'\oplus E^-_\rho=T_\rho(S^*_\lambda X)$. In the case where $\Lambda$ is {\color{black} a section of $T^*X$}, this is equivalent to the fact that this section is transverse at $\rho$ to the unique stable manifold $W^-(\rho')$ containing $\rho$. This motivates our use of the term \textit{transverse} in this context.

\begin{tcolorbox}[breakable, enhanced]
\begin{definition}
Let $0<\lambda_1<\lambda_2$, and let $\Lambda \subset S^*_{[\lambda_1,\lambda_2]}X$ be some Lagrangian submanifold. 
We say that $\Lambda$ is \emph{nowhere stable} if it admits a simply connected Lagrangian extension $\Lambda'$ such that for any $\rho_1, \rho_2\in \Lambda'$, we have
\begin{equation}\label{eq:DefNonStable}
\left(\rho_2 \in W^{-}(\rho_1)\right) \Longrightarrow \left(\rho_2=\rho_1\right).
\end{equation}
\end{definition}
\end{tcolorbox}

\begin{tcolorbox}[breakable, enhanced]
\begin{lem}\label{lem:PetitesLag}
Let $Y$ be a smooth \textcolor{black}{Riemannian} manifold. Let $\Lambda \subset S^*_{[\lambda_1,\lambda_2]}{\color{black}Y}$ be a precompact Lagrangian submanifold transverse to the stable directions. Then there exists finitely many Lagrangian submanifolds $\Lambda_1,\dots, \Lambda_n$ such that $\Lambda= \cup_{i=1}^n \Lambda_i$ and each $\Lambda_i$ is nowhere stable.
\end{lem}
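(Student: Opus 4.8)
The plan is to reduce the global statement about $\Lambda$ to a local one using the compactness of $\Lambda$ and the transversality hypothesis. First I would fix a Lagrangian extension $\Lambda'$ of $\Lambda$ (with $\overline{\Lambda}\subset\Lambda'$) witnessing transversality to the stable directions, i.e. such that $T_\rho\Lambda'\cap E^-_\rho=\{0\}$ for all $\rho\in\Lambda'$. The key geometric input is that, for a point $\rho_1\in\Lambda'$, any other point $\rho_2\in\Lambda'\cap W^-(\rho_1)$ with $\rho_2$ close to $\rho_1$ must actually equal $\rho_1$: indeed, near $\rho_1$ the stable manifold $W^-(\rho_1)$ is a submanifold with tangent space $E^-_{\rho_1}$, and transversality forces $\Lambda'\cap W^-(\rho_1)$ to be $0$-dimensional near $\rho_1$, hence locally equal to $\{\rho_1\}$. (One should be a little careful: transversality at $\rho_1$ gives $T_{\rho_1}\Lambda'\cap T_{\rho_1}W^-(\rho_1)=\{0\}$, and since $\dim\Lambda'+\dim W^-(\rho_1)=d+(d-1)<2d=\dim T^*X$, the intersection $\Lambda'\cap W^-(\rho_1)$ is transverse and thus a manifold of dimension $\le d+(d-1)-2d<0$, i.e. $\rho_1$ is isolated in it.) This gives, for each $\rho\in\Lambda$, a neighbourhood $O_\rho$ in $T^*X$ such that $\Lambda'\cap O_\rho$ is ``nowhere stable in itself'': any two points of $\Lambda'\cap O_\rho$ lying on a common stable leaf and close enough coincide.

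Next I would upgrade this to an honest nowhere-stable piece. For each $\rho\in\Lambda$ choose $O_\rho$ as above, and additionally small enough that $\Lambda'\cap O_\rho$ is simply connected (possible since $\Lambda'$ is a manifold and we can take $O_\rho$ so that $\Lambda'\cap O_\rho$ is diffeomorphic to a ball) and so that $\overline{\Lambda'\cap O_\rho}$ still has the property that any two points on a common stable leaf coincide. The only subtlety is that the nowhere-stable definition requires \emph{any} two points $\rho_1,\rho_2$ of the extension with $\rho_2\in W^-(\rho_1)$ to be equal, not just nearby ones; so I must shrink $O_\rho$ using uniform continuity of the stable foliation to ensure that if $\rho_1,\rho_2\in\Lambda'\cap O_\rho$ then $\rho_2\in W^-(\rho_1)$ already forces $\rho_2$ to be in the small neighbourhood of $\rho_1$ where isolation holds. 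Concretely: since $d(\Phi^t\rho_1,\Phi^t\rho_2)\to 0$ and the flow is distance-non-expanding in a weak sense along stable leaves up to a uniform constant, two points of a sufficiently small open set on a common stable leaf are automatically ``close'' in the required sense. Setting $\Lambda_\rho:=\Lambda\cap O_\rho$ with Lagrangian extension $\Lambda'\cap O_\rho$, each $\Lambda_\rho$ is then nowhere stable.

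Finally, $\{O_\rho\}_{\rho\in\Lambda}$ is an open cover of the precompact set $\overline{\Lambda}$, hence $\Lambda\subset\bigcup_{i=1}^n O_{\rho_i}$ for finitely many $\rho_1,\dots,\rho_n$; putting $\Lambda_i:=\Lambda\cap O_{\rho_i}$ gives $\Lambda=\bigcup_{i=1}^n\Lambda_i$ with each $\Lambda_i$ nowhere stable, as required. One should also check each $\Lambda_i$ is still a Lagrangian submanifold — it is, being an open subset of $\Lambda$ — and that it is projectable if $\Lambda$ was, which is automatic since restricting to an open set preserves being the graph of $\partial\phi$.

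I expect the main obstacle to be the second paragraph: promoting the \emph{local} isolation of stable-leaf intersections to the \emph{global} statement in the definition of ``nowhere stable'' (no constraint on how far apart $\rho_1$ and $\rho_2$ are along the stable leaf). This requires a quantitative use of hyperbolicity — roughly, a uniform modulus of contraction along stable leaves as in \eqref{eq:DefHyp} together with uniform continuity of $\rho\mapsto W^-(\rho)$ on the compact set $S^*_{[\lambda_1,\lambda_2]}X$ — to guarantee that two points of a small enough $O_{\rho_i}$ which lie on the same stable manifold are necessarily within the ``isolation radius'' of each other. Everything else is a straightforward transversality-plus-compactness argument.
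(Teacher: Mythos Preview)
Your overall strategy is the same as the paper's: use transversality to get that $\Lambda'\cap W^-(\rho)$ is discrete near $\rho$, then cover $\overline{\Lambda}$ by finitely many small pieces on which the nowhere-stable condition holds. The first and third paragraphs are fine.

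Where you diverge is in the second paragraph and in your diagnosis of the ``main obstacle''. You worry that the nowhere-stable condition is global (no constraint on how far apart $\rho_1$ and $\rho_2$ are along the stable leaf) and propose to fix this with quantitative hyperbolicity estimates from \eqref{eq:DefHyp}. This is a red herring: once you restrict to a piece $\Lambda'\cap O$ with $O$ of small diameter, any two points $\rho_1,\rho_2$ in it are automatically close in the ambient metric of $T^*X$, irrespective of their distance along $W^-(\rho_1)$. So the only thing you need is that the diameter of $O$ is smaller than the isolation radius $\varepsilon_{\rho_1}$ at every $\rho_1\in O$. No dynamics is required for this.

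The paper handles this cleanly as follows. For each $\rho\in\Lambda'$ let $\varepsilon_\rho>0$ be such that $\Lambda'\cap W^-(\rho)\cap B(\rho,\varepsilon_\rho)=\{\rho\}$. Since $\Lambda'$ is smooth and $\rho\mapsto E^-_\rho$ is H\"older continuous, one can choose $\rho\mapsto\varepsilon_\rho$ continuous; by compactness of $\overline{\Lambda}$ one gets a uniform lower bound $\varepsilon_0:=\inf_{\rho\in\overline{\Lambda}}\varepsilon_\rho>0$. Now simply cover $\Lambda$ by finitely many balls of radius $\varepsilon_0/2$: if $\rho_1,\rho_2$ lie in one such ball then $\rho_2\in B(\rho_1,\varepsilon_0)\subset B(\rho_1,\varepsilon_{\rho_1})$, so $\rho_2\in W^-(\rho_1)$ forces $\rho_2=\rho_1$. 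This replaces your appeal to contraction rates and uniform continuity of the stable foliation by a one-line compactness argument.
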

\end{tcolorbox}

\begin{proof}
Let $\Lambda'$ be a Lagrangian extension of $\Lambda$. By our transversality assumption, we know that \textcolor{black}{the points of $\Lambda'\cap W^-(\rho)$ are isolated. In other words, for any $\rho\in \Lambda'$, there exists $\varepsilon_\rho>0$ such that $\Lambda'\cap W^-(\rho)\cap B(\rho, \varepsilon_\rho) = \{\rho\}$, where $B(\rho, \varepsilon_\rho)$ denotes the open ball of center $\rho$ and of radius $\varepsilon_\rho$.}

Since $\Lambda'$ is a smooth manifold and the dependence of the unstable directions in $\rho$ is Hölder, we see that $\rho\mapsto \varepsilon_\rho$ is continuous. Hence $\varepsilon_0:= \inf_{\rho\in \overline{\Lambda}} \varepsilon_\rho$ is $>0$.

\textcolor{black}{Let us consider a covering of $\Lambda'$ by finitely many balls of radius $\frac{\varepsilon_0}{2}$, and check that each element of this covering is nowhere stable. If $\rho_1, \rho_2$ belong to the intersection of $\Lambda'$ with a ball of radius $\frac{\varepsilon_0}{2}$, then we have $\rho_2\in B(\rho_1, \varepsilon_0) \cap \Lambda\subset B(\rho_1, \varepsilon_{\rho_1}) \cap \Lambda$. Therefore, we have $(\rho_2\in W^-(\rho_1)) \Longrightarrow (\rho_2=\rho_1)$, as announced.}
\end{proof}

\subsection{Evolution of Lagrangian manifolds on Hadamard manifolds}\label{ss:ClassicHadamard}

Next we will focus on the evolution of nowhere stable Lagrangian submanifolds on the universal cover of $X$, which we denote by $\tilde{X}$. The manifold $\tilde{X}$ is then a Hadamard manifold, i.e., a complete simply connected manifold of negative curvature. In particular, we state the key estimate Lemma \ref{lem:LaBorneChiante} needed in the proof of Proposition \ref{prop:SumLag3}.

\begin{tcolorbox}[breakable, enhanced]
\begin{definition}
Let $Y$ be a Riemannian manifold and let $\pi:T^*Y\rightarrow Y$ be the canonical projection. Let $\Lambda\subset T^*Y$ be a Lagrangian submanifold. Let $(\Phi^t)_t$ be the geodesic flow, acting on $T^*Y$. We say that $\Lambda$ is \emph{expanding} if there is a Lagrangian extension $\Lambda'$ of $\Lambda$ such that for any $\rho,\rho'\in\Lambda'$ which do not belong to the same geodesic, the function $t\mapsto\textup{dist}_Y(\pi_Y\Phi^t(\rho),\pi_Y \Phi^t(\rho'))$ is increasing.
\end{definition}
\end{tcolorbox}

\begin{tcolorbox}[breakable, enhanced]
\begin{lem}\label{lem:LongTimeProj}
Let $Y$ be a complete simply connected manifold of negative curvature, and let $\Lambda\subset S^*_{[\lambda_1,\lambda_2]}{\color{black}Y}$ be a Lagrangian submanifold which is nowhere stable. Then there exists $T_0(\Lambda)>0$ such that for all $t\geq T_0$, $\Phi^{t} (\Lambda)$ is an expanding projectable Lagrangian submanifold.
\end{lem}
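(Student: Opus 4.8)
Lemma \ref{lem:LongTimeProj} — proof plan.

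The plan is to combine two facts about Hadamard manifolds of negative curvature: first, that two distinct geodesics of a common speed separate (and, for a nowhere stable family, the associated "wavefront" is eventually expanding); second, that a Lagrangian submanifold that is expanding and on which the projection $\pi_Y\circ\Phi^t$ is locally injective is automatically projectable. Concretely, I would fix a Lagrangian extension $\Lambda'$ of $\Lambda$ witnessing the nowhere stable property, i.e. a simply connected $\Lambda'$ with $\overline\Lambda\subset\Lambda'$ such that $\rho_2\in W^-(\rho_1)\Rightarrow\rho_2=\rho_1$ for $\rho_1,\rho_2\in\Lambda'$. Shrinking $\Lambda'$ slightly (using that $\overline\Lambda$ is compact and that the stable/unstable distributions are continuous), I may assume the conclusion of Lemma~\ref{lem:PetitesLag}-type control holds uniformly on $\overline\Lambda$.

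First I would show that on a Hadamard manifold, for $\rho,\rho'\in S^*_{[\lambda_1,\lambda_2]}Y$ not on the same geodesic, the function $t\mapsto \mathrm{dist}_Y(\pi_Y\Phi^t(\rho),\pi_Y\Phi^t(\rho'))$ is strictly convex in $t$ (this is the standard convexity of the distance between two geodesics on a manifold of non-positive curvature, combined with strict negativity of the curvature to rule out the flat case once the geodesics are distinct). Strict convexity means this function has at most one minimum, hence is eventually strictly increasing; the point is to make this uniform. Here I would use that $\Lambda$ is nowhere stable: if $\rho'\notin W^-(\rho)$ then $\mathrm{dist}_Y(\pi_Y\Phi^t\rho,\pi_Y\Phi^t\rho')$ does not tend to $0$, so the unique minimum of the convex function cannot be at $t=+\infty$, and in fact by the nowhere-stable hypothesis applied on all of $\Lambda'$ one excludes the stable direction entirely. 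Combining strict convexity with compactness of $\overline\Lambda$ (and a limiting/continuity argument handling pairs $\rho,\rho'$ that become close, where one passes to the differential $\mathrm{d}\Phi^t$ acting on $T_\rho\Lambda'$ and uses the hyperbolic estimate \eqref{eq:DefHyp} together with the transversality $T_\rho\Lambda'\cap E^-_\rho=\{0\}$ to see that tangent vectors are eventually expanded uniformly), I would produce a single $T_0=T_0(\Lambda)$ such that for all $t\ge T_0$ and all distinct-geodesic pairs, $s\mapsto\mathrm{dist}_Y(\pi_Y\Phi^s\rho,\pi_Y\Phi^s\rho')$ is increasing on $[t,\infty)$ — i.e. $\Phi^t(\Lambda)$ is expanding, with extension $\Phi^t(\Lambda')$.

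Next I would deduce projectability. Expanding implies, in particular, that for $t\ge T_0$ the map $\pi_Y:\Phi^t(\Lambda')\to Y$ is injective: if $\pi_Y\Phi^t\rho=\pi_Y\Phi^t\rho'$ with $\rho\ne\rho'$ then either they lie on the same geodesic — impossible for a Lagrangian graph-type family once we note that points of $\Lambda'$ on a common geodesic at the same point coincide (two covectors of the same base point and same energy on the same geodesic are equal) — or they lie on different geodesics, but then the distance of their projections is positive for all large $s$ and in particular at $s=t$, contradiction. To upgrade set-injectivity to the statement that $\Phi^t(\Lambda')$ is a graph of a smooth section over an open set (Remark~\ref{Rem:CriterionProj}), I would check that $\mathrm{d}\pi_Y$ restricted to $T_\rho\Phi^t(\Lambda')$ is an isomorphism for every $\rho$, i.e. $T_\rho\Phi^t(\Lambda')\cap\ker\mathrm{d}\pi_Y=\{0\}$; the kernel of $\mathrm{d}\pi_Y$ is the vertical space, and a nonzero vertical tangent vector to $\Phi^t(\Lambda')$ would, under $\mathrm{d}\Phi^{-t}$, contradict expansion of the infinitesimal wavefront — more precisely, pulling back by $\mathrm{d}\Phi^{-t}$ and using \eqref{eq:Matrix} together with $T_\rho\Lambda'\cap E^-_\rho=\{0\}$, one sees the projection of $\mathrm{d}\Phi^s$ applied to such a vector cannot shrink to $0$, which is what a vertical (conjugate-point) direction would force. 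An inverse function / covering argument on the simply connected $\Phi^t(\Lambda')$ then gives a global generating phase function $\phi$, so $\Phi^t(\Lambda)$ is projectable; expansion was already established, finishing the proof.

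The main obstacle I anticipate is making the choice of $T_0$ \emph{uniform} over the non-compact set of pairs $(\rho,\rho')\in\Lambda'\times\Lambda'$: the convexity argument is pointwise, and pairs can degenerate either to the diagonal (handled by passing to tangent vectors and the Anosov expansion estimate, crucially using transversality to $E^-$) or escape to the boundary of $\Lambda'$ (handled by working with the compact $\overline\Lambda$ and invoking the uniform gap $\varepsilon_0>0$ coming from the nowhere-stable hypothesis). Threading these two limiting regimes together into one finite $T_0$ is the technical heart; everything else is standard Hadamard-manifold geometry and the graph criterion of Remark~\ref{Rem:CriterionProj}.
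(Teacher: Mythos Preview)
Your approach is essentially the paper's: both arguments use the convexity of $t\mapsto\mathrm{dist}_Y(\pi_Y\Phi^t\rho_1,\pi_Y\Phi^t\rho_2)$ on a Hadamard manifold (the paper cites the quantitative estimate from \cite[Theorem 4.8.2]{Jost}) to get a dichotomy, invoke the nowhere-stable hypothesis to rule out the branch where the distance tends to zero (since that forces $\rho_2\in W^-(\rho_1)$), and then extract a uniform $T_0$ by compactness; projectability then follows from injectivity of $\pi_Y$ on $\Phi^t(\Lambda')$ together with simple connectedness.

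Two remarks. First, you are more careful than the paper about the diagonal degeneration in the compactness step, and you are right to flag it as the crux. However, the infinitesimal condition you invoke there, $T_\rho\Lambda'\cap E_\rho^-=\{0\}$, is \emph{transversality to the stable directions}, which is not the hypothesis of this lemma---only ``nowhere stable'' is assumed, and nowhere stable (a pairwise condition on $\Lambda'\cap W^-(\rho)$) does not by itself imply the tangential condition. In the paper's actual usage this is harmless, since Lemma~\ref{lem:PetitesLag} produces the nowhere-stable pieces out of a transverse $\Lambda$, so both properties hold; but if you want your argument to stand as a proof of the lemma as stated, you should either derive the needed infinitesimal control directly from the convexity inequality, or note explicitly that you are assuming the additional transversality that is present in all applications. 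Second, your argument that two points of $\Lambda'$ on the same geodesic with the same base point must coincide (``same energy on the same geodesic are equal'') overlooks the possibility $\xi'=-\xi$; the paper sidesteps this by getting $\mathrm{dist}_Y(\pi_Y\Phi^t\rho_1,\pi_Y\Phi^t\rho_2)>0$ directly from the increasing-after-$T_0$ conclusion, which is cleaner.
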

\end{tcolorbox}

\begin{proof}
Let $\Lambda', \Lambda''$ be Lagrangian extensions of $\Lambda$ with $\overline{\Lambda'}\subset \Lambda''$, both satisfying (\ref{eq:DefNonStable}).\\

Let $\rho_1,\rho_2\in \Lambda''$ be points which do not belong to the same geodesic. By the proof of \cite[Theorem 4.8.2]{Jost}, there exists $c>0$ such that $ \frac{\mathrm{d}^2}{\mathrm{d} t^2} \textup{dist}_Y^2(\Phi^t \rho_1,\Phi^t \rho_2) \geq c~ \textup{dist}_Y(\Phi^t \rho_1,\Phi^t \rho_2)$. In particular, when $t\to +\infty$, $\textup{dist}_Y(\Phi^t \rho_1,\Phi^t \rho_2) $ either converges to zero or diverges to $+\infty$.\\

Suppose that this map converges to zero as $t\to +\infty$, so that it is decreasing. Then we must also have $\mathrm{dist}_{T^*Y}(\Phi^t \rho_1,\Phi^t \rho_2)$ converging to zero. Indeed, if this were not the case, we could find large times $t$ at which the points $\Phi^t \rho_1$ and $\Phi^t \rho_2$ are very close when projected on $Y$, but have directions which are not close to each other. The distance on the base of such points cannot be a decreasing function. Therefore, we must have $\rho_2\in W^-({\color{black}\rho_1)}$, {\color{black}which contradicts the fact that $\Lambda$ is nowhere stable}.\\

Hence, we must have $\textup{dist}_Y(\pi_Y(\Phi^t \rho_1),\pi_Y(\Phi^t \rho_2))\to +\infty$ as $t\to+\infty$, so that the distance between $\Phi^t \rho_1$ and $\Phi^t \rho_2$ will be increasing after a time $T(\rho_1,\rho_2)$ where it is minimal. This time $T(\rho_1, \rho_2)$ depends continuously on $\rho_1,\rho_2$, so, by compactness, we can find $T_0$ such that for all $t\geq T_0$ and all $\rho_1,\rho_2 \in \Lambda'$, we have $\textup{dist}_Y(\pi_Y(\Phi^t \rho_1),\pi_Y(\Phi^t \rho_2))>0$, and this quantity is increasing with $t$. In particular, $\Phi^t(\Lambda')$ is a smooth section of $T^*Y$, so that it can be put in the form $\{(x, \theta_{t}(x)) : x\in \Omega'_{t} \}$. Since $\Lambda'$ is simply connected, so is $\Phi^t(\Lambda')$ and therefore $\Omega'_{t}$ is simply connected. Therefore, $\theta_t$ can be chosen as the differential of some function $\phi_t$, so that $\Phi^t(\Lambda)$ is a projectable expanding Lagrangian submanifold of $T^*Y$.
\end{proof}

For all $t \in \{0\}\cup [T_0(\Lambda), +\infty)$, let us denote by $\Omega_t$ the support of $\Phi^t(\Lambda)$, and by $\phi_{t}\in C^\infty(\Omega_{t})$ a generating function for $\Phi^{t}(\Lambda)$.
 Let $t_1\in \{0\}\cup [T_0(\Lambda), +\infty)$, and $t_2 \geq T_0(\Lambda)$. Since $\Phi^{t_1}(\Lambda)$ and $\Phi^{t_2}(\Lambda)$ are projectable, the map $\kappa_{t_1,t_2}:\Omega_{t_1}\ni x \mapsto\pi_Y(\Phi^{t_2-t_1}(x,\partial\phi_{t_1}(x)))\in\Omega_{t_2}$ is then an embedding, and we will write $\kappa_{t_1,t_2}^{-1} =: g_{t_1,t_2}:\Omega_{t_2}\rightarrow\Omega_{t_1}$. Therefore, for all $y\in \Omega_{t_2}$, we have
\begin{equation}\label{eq:DefG}
(y,\partial\phi_{t_2}(y))=\Phi^{t_2-t_1}\left(g_{t_1,t_2}(y),\partial\phi_{t_1}(g_{t_1,t_2}(y))\right)\, .
\end{equation}

Note that for any $t_1,t_2,t_3 \geq T_0$, we have
\begin{equation}\label{eq:Fonctorialite}
g_{t_1,t_2} \circ g_{t_2,t_3} = g_{t_1,t_3}.
\end{equation}

The following lemma, which we prove in the next section, gives us an estimate on the regularity of the maps $g_{t_1,t_2}$ which will be essential {\color{black}to obtain the first point in Proposition \ref{prop:SumLag3}}.

\begin{tcolorbox}[breakable, enhanced]
\begin{lemma}\label{lem:LaBorneChiante}
Let $\Lambda\subset S_{[\lambda_1,\lambda_2]}^*X$ be a Lagrangian manifold which is transverse to the stable directions, and let $T_0(\Lambda)$ be as in Lemma \ref{lem:LongTimeProj}.
Then for all $t\geq T_0(\Lambda)$, $\log(|\det(\mathrm{d}_x g_{0,t})|)$ is continuous in $x$, uniformly in $(x,t)$.
\end{lemma}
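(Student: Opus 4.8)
The plan is to express $\log|\det(\mathrm{d}_x g_{0,t})|$ via the unstable Jacobian of the geodesic flow, and then exploit the standard fact that on an Anosov system this Jacobian, computed along orbits that stay close on the base, varies by a bounded multiplicative factor — the classical ``bounded distortion'' estimate. First I would recall from \eqref{eq:DefG} that $\kappa_{0,t}(x) = \pi_Y(\Phi^t(x,\partial\phi_0(x)))$, so that $\mathrm{d}_x\kappa_{0,t}$ is the composition of the inclusion $T_x\Omega_0 \hookrightarrow T_{(x,\partial\phi_0(x))}\Lambda_{\phi_0}$, the flow differential $\mathrm{d}\Phi^t$ restricted to $T\Lambda_{\phi_0}$, and the projection $\mathrm{d}\pi_Y$; since $g_{0,t} = \kappa_{0,t}^{-1}$, we get $\log|\det\mathrm{d}_x g_{0,t}| = -\log|\det \mathrm{d}_{g_{0,t}(x)}\kappa_{0,t}|$. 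By Lemma \ref{lem:LongTimeProj} the image $\Phi^t(\Lambda)$ is projectable for $t\geq T_0$, so all these differentials are isomorphisms and the determinants (computed with respect to the Riemannian volumes on the bases, via the frames) are well-defined and non-zero.

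The main point is to show this quantity has a \emph{modulus of continuity in $x$ uniform in $t$}. Here is where I would use hyperbolicity. Since $\Lambda$ is transverse to the stable directions, there is a Lagrangian extension $\Lambda'$ whose tangent spaces avoid $E^-$; combined with the block structure \eqref{eq:Matrix}, this means that under $\mathrm{d}\Phi^t$ the tangent plane $T\Lambda'$ is asymptotically attracted (in the Grassmannian) to the weak-unstable plane $E^+\oplus E^0 \oplus \hat E^0$ at an exponential rate, uniformly. Consequently the linearized map $\mathrm{d}_x\kappa_{0,t}$ differs from the ``pure unstable'' expansion by a factor that, in $\log|\det|$, is uniformly bounded and whose $x$-dependence has a modulus of continuity independent of $t$ (one controls this using the Hölder continuity of $\rho\mapsto E^\pm_\rho, E^0_\rho$ together with the uniform contraction estimates \eqref{eq:DefHyp}; the terms from $E^0\oplus\hat E^0$ contribute factors of size comparable to $1$ and $\lambda t$ in \eqref{eq:Matrix}, but these occur identically for nearby $x$ and cancel in the difference). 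For the genuinely expanding part, I would use the cocycle identity \eqref{eq:Fonctorialite}: writing $\log|\det\mathrm{d}_x g_{0,t}|$ telescopically as $\sum_{k} \big(\log|\det\mathrm{d} g_{0,k+1}| - \log|\det\mathrm{d} g_{0,k}|\big) = \sum_k \log|\det \mathrm{d} g_{k,k+1}|$ evaluated at the appropriate points, each summand is the log-Jacobian of a time-one map, which is Hölder in its argument, and the relevant base points for two nearby $x$'s at step $k$ are exponentially close (because the flow is \emph{contracting backwards} along the unstable leaves, i.e.\ $g_{k,k+1}$ shrinks distances — here the expanding property from Lemma \ref{lem:LongTimeProj} enters). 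Summing a geometric series in the Hölder exponent then gives a bound on the modulus of continuity that is uniform in $t$.

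The hard part will be organizing the cancellation of the neutral-direction contributions cleanly and making precise the ``exponential approach to the weak-unstable plane'' in a way that survives taking determinants of projections onto the (fixed) base frames rather than onto $E^+$ itself; in other words, the bookkeeping needed to separate the uniformly-controlled part of $\mathrm{d}\kappa_{0,t}$ from the part that blows up like $\lambda t$ and to argue that the latter is locally constant up to a uniform error. I expect the lift to the universal cover $\tilde X$ (already set up in section \ref{ss:ClassicHadamard}) to be convenient here, since on $\tilde X$ the maps $g_{0,t}$ are genuine diffeomorphisms onto their images and one can use the convexity estimate $\frac{\mathrm{d}^2}{\mathrm{d}t^2}\mathrm{dist}_Y^2 \geq c\,\mathrm{dist}_Y$ from the proof of Lemma \ref{lem:LongTimeProj} to quantify how fast trajectories separate, feeding the required exponential closeness of the backward-iterated base points. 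Once the modulus of continuity is established uniformly in $t$, continuity of $x\mapsto \log|\det \mathrm{d}_x g_{0,t}|$ for each fixed $t$ is automatic, completing the proof.
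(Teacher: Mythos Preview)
Your plan is essentially the same as the paper's: both arguments reduce to the classical bounded-distortion estimate for Anosov flows by telescoping via \eqref{eq:Fonctorialite} into a sum of unit-time log-Jacobians, each H\"older in its argument, with the base points at step $k$ being exponentially close for nearby $x$ (backward contraction along the evolved Lagrangian), so the sum is dominated by a convergent geometric series. The organization differs in two respects. First, the paper factors out an initial interval $[0,T_1]$, with $T_1$ furnished by an auxiliary lemma (Lemma~\ref{l:charts}), and only telescopes $g_{T_1,t}$; the map $g_{0,T_1}$ is then a fixed diffeomorphism whose Jacobian is H\"older and whose argument $g_{T_1,t}(x)$ only improves under the contraction. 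Second, and more importantly for the point you flag as the hard part, the paper handles the neutral directions not by arguing that the $\lambda t$ factor ``occurs identically for nearby $x$ and cancels'', but by constructing (in Lemma~\ref{l:charts}) a specific one-dimensional subspace $\tilde E^0_\rho \subset E^0_\rho \oplus \hat E^0_\rho$, H\"older in $\rho$, so that $T_\rho\Lambda_t$ is $\delta_0$-close to $E^+_\rho \oplus \tilde E^0_\rho$ for all $t\ge T_1$. Replacing $T\Lambda_t$ by this approximant, the middle factor $\mathrm d_\rho\Phi^{T_1-t}|_{T_\rho\Lambda_t}$ becomes, up to an $o_{\delta_0\to 0}(1)$ error, exactly the unstable block $\det M_{\Phi^{T_1-t}(\rho),\,t-T_1}$ from \eqref{eq:Matrix}, while the neutral contribution is absorbed into the boundary factors $\mathrm d\pi|_{\Lambda_s}$ (uniformly bi-Lipschitz and H\"older) and the fixed $g_{0,T_1}$. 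This is precisely the bookkeeping you anticipate; your cancellation heuristic would be harder to make rigorous, whereas the paper's choice of frame makes the reduction to $\det M$ clean.
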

\end{tcolorbox}

\subsection{Proof of Lemma \ref{lem:LaBorneChiante}}\label{ss:labornechiante}
In this section, we prove Lemma \ref{lem:LaBorneChiante} but before doing so, we state and prove a final auxiliary lemma. Recall that we fixed a metric on $T^*X$, which allows us to define angles between vectors of $T_\rho T^*X$ for any $\rho\in S_{[\lambda_1,\lambda_2]}^*X$.

\begin{tcolorbox}[breakable, enhanced]
\begin{definition}\label{def:EtaTransversalite}
Let $\eta_0>0$. We say that a Lagrangian submanifold $\Lambda\subset S_{[\lambda_1,\lambda_2]}^*X$ is \emph{$\eta_0$- transverse to the stable directions} if, for any $\rho\in \Lambda$,  the angle between any vector of $T_\rho\Lambda$ and any vector of $E_\rho^-$ is at least $\eta_0$.
\end{definition}
\end{tcolorbox}

\begin{tcolorbox}[breakable, enhanced]
\begin{lemma}\label{l:charts}
For each $\eta_0>0$ and $\delta_0>0$, there exists $T_1=T_1(\eta_0,\delta_0)<+\infty$ such that the following holds. 

Let $\Lambda\subset S_{[\lambda_1,\lambda_2]}^*X$ be a Lagrangian manifold which is $\eta_0$-transverse to the stable directions.  Then for each $t\geq T_1$ and each $\rho\in \Phi^t(\Lambda)$, the angle between the space of $T_\rho\Phi^t(\Lambda)$ and $E_\rho^+\oplus E_\rho^0\oplus \hat{E}_\rho^0$ is smaller than $\delta_0$.

More precisely for each $t\geq T_1$ and each $\rho\in \Phi^t(\Lambda)$, we may find a vector space $\tilde{E}^0_\rho \subset E_\rho^0\oplus \hat{E}_\rho^0$, whose dependence on $\rho$ is Hölder, such that
the angle between $T_\rho\Phi^t(\Lambda)$ and $E_\rho^+\oplus \tilde{E}_\rho^0$ is smaller than $\delta_0$.
\end{lemma}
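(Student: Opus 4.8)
The plan is to track how the tangent space $T_\rho \Phi^t(\Lambda)$ is pushed forward by $\mathrm{d}\Phi^t$ in the splitting $E^+ \oplus E^0 \oplus E^- \oplus \hat E^0$, using the hyperbolic estimates (\ref{eq:DefHyp}) together with the block form (\ref{eq:Matrix}) of $\mathrm{d}_\rho\Phi^t$. Fix $\rho_0 \in \Lambda$ and a vector $v \in T_{\rho_0}\Lambda$ of unit norm; decompose $v = v^+ + v^0 + v^- + \hat v^0$ along the four subspaces. The $\eta_0$-transversality hypothesis does \emph{not} by itself bound $\|v^-\|$ from below (the issue is that $v$ could be almost neutral), so the first point to address is that I actually want to control the \emph{image} direction, and along the unstable flow the $E^-$ component contracts like $CA^{-t}$ while the $E^+$ component expands like $CA^{t}$ (after inverting the lower bound on $M_{\rho,t}$). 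The neutral block $E^0 \oplus \hat E^0$ evolves by the unipotent $2\times 2$ matrix $\begin{pmatrix} 1 & \lambda t \\ 0 & 1\end{pmatrix}$, so it grows only linearly. Hence $\mathrm{d}_{\rho_0}\Phi^t(v)$, after renormalizing to unit length, has its $E^-$-component of size $O(A^{-t}/\text{(norm of the renormalized vector)})$, and I need to know the renormalizing norm is not itself exponentially small.

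This is exactly where the $\eta_0$-transversality enters, and it is the main obstacle. The point is that $v \notin E^-_{\rho_0}$ quantitatively: the angle condition forces $v$ to have a component \emph{outside} $E^-_{\rho_0}$ of size bounded below by a function of $\eta_0$ alone, i.e. $\|v^+ + v^0 + \hat v^0\| \geq c(\eta_0) > 0$. The subtlety is that this outside-component might be concentrated in the slowly-growing neutral part $E^0 \oplus \hat E^0$ rather than the exponentially-growing $E^+$. To handle this, I would split into two cases according to whether the $E^+$-component $v^+$ is large or small relative to $c(\eta_0)$. If $\|v^+\|$ is not too small, then $\|\mathrm{d}\Phi^t(v)\| \gtrsim A^t$ and the $E^-$-part, being $\lesssim 1$, becomes negligible relative to it, so the angle with $E^+ \oplus E^0 \oplus \hat E^0$ goes to $0$. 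If $\|v^+\|$ is very small, then the outside component lives essentially in $E^0 \oplus \hat E^0$, and here I use the refinement: because $\Lambda$ is Lagrangian and $E^0 \oplus \hat E^0$ is a symplectic plane symplectically orthogonal to $E^+ \oplus E^-$ (by (\ref{eq:CestRelouDeReprendreLePapier2}) and the discussion after it, with the nondegeneracy (\ref{eq:CestRelouDeReprendreLePapier3})), the restriction of $v$ to the $(E^0,\hat E^0)$-plane together with the isotropy of $T\Lambda$ forces the $\hat E^0$-component to be controlled. Concretely, since $\dim T_\rho \Phi^t(\Lambda) = d$ and it is isotropic, it can meet the $2$-plane $E^0_\rho \oplus \hat E^0_\rho$ in at most a line, and the symplectic nondegeneracy constant $c(\rho)$ bounded away from $0$ identifies which line. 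This line is precisely the candidate $\tilde E^0_\rho$: I would define $\tilde E^0_\rho := T_\rho\Phi^t(\Lambda) \cap (E^0_\rho \oplus \hat E^0_\rho)$ when this intersection is a line (Hölder dependence follows from Hölder dependence of $E^0, \hat E^0$ and smoothness of the flowed Lagrangian), and observe that the remaining directions in $T_\rho\Phi^t(\Lambda)$ have nontrivial $E^+$-component, hence get pushed into $E^+$ exponentially.

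So the structure of the argument is: (1) express $\mathrm{d}_\rho\Phi^t$ in the adapted block form (\ref{eq:Matrix}) and record that $E^-$ contracts by $CA^{-t}$, $E^+$ expands by $\geq CA^t$, neutral block grows linearly; (2) use the lower bound $c(\eta_0)$ on the $(E^+ \oplus E^0 \oplus \hat E^0)$-component of any unit vector in $T\Lambda$ coming from $\eta_0$-transversality and compactness of $S^*_{[\lambda_1,\lambda_2]}X$, with the angles measured in the fixed metric $g_0$; (3) do the two-case analysis above to show that for any $v \in T_\rho \Phi^t(\Lambda)$, the $E^-$-component of $v$ (normalized) is $\leq C A^{-(t-T_\Lambda)}$ for some $T_\Lambda$ depending on $\eta_0$, which forces the angle with $E^+ \oplus E^0 \oplus \hat E^0$ below $\delta_0$ once $t \geq T_1(\eta_0,\delta_0)$; (4) upgrade to the ``more precise'' statement by defining $\tilde E^0_\rho$ as the one-dimensional intersection (using (\ref{eq:CestRelouDeReprendreLePapier2})--(\ref{eq:CestRelouDeReprendreLePapier3}) to see it is a genuine line and depends Hölder-continuously on $\rho$) and checking that after discarding this line the complementary directions of $T_\rho\Phi^t(\Lambda)$ are exponentially close to $E^+_\rho$, so the whole $d$-plane is within angle $\delta_0$ of $E^+_\rho \oplus \tilde E^0_\rho$. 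I expect step (3), specifically the small-$\|v^+\|$ subcase where one must rule out that the image direction lingers in $\hat E^0$, to be the delicate point, and the resolution is the symplectic geometry of the neutral plane combined with the isotropy of $T\Lambda$; uniformity over $t$ and $\rho$ is then just compactness and the Hölder continuity of the Anosov splitting.
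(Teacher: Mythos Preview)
Your overall strategy—push forward by $\mathrm{d}\Phi^t$, use the hyperbolic contraction/expansion, and exploit that $T\Lambda$ is Lagrangian together with the symplectic structure of the neutral plane—matches the paper. But there are two points worth flagging.

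\textbf{The first part is simpler than you make it.} Your two-case analysis (large vs.\ small $\|v^+\|$) is unnecessary. The paper observes directly that for a unit $\xi\in T_{\rho_0}\Lambda$, the $\eta_0$-transversality gives $\|\xi^+\|^2+|\xi^0|^2+|\hat\xi^0|^2 \geq |\sin\eta_0|^2$, and the flow does not contract the $E^+\oplus E^0\oplus\hat E^0$ block (unstable expands, neutral has determinant $1$). So $\|\xi_t\|\gtrsim |\sin\eta_0|$ while $\|\xi_t^-\|\leq CA^{-t}$, and $\|\xi_t^-\|/\|\xi_t\|\to 0$ uniformly with no case split. Your version works, but the paper's is cleaner.

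\textbf{The definition of $\tilde E^0_\rho$ has a genuine gap.} You propose $\tilde E^0_\rho := T_\rho\Phi^t(\Lambda)\cap(E^0_\rho\oplus\hat E^0_\rho)$. But a $d$-dimensional subspace and a $2$-dimensional subspace of a $2d$-dimensional space generically intersect only in $\{0\}$ (for $d\geq 2$), and there is no reason the flowed Lagrangian should meet the neutral plane exactly. You acknowledge this with ``when this intersection is a line'' but give no construction otherwise—and otherwise is the typical case. The paper's fix is to intersect $T_\rho\Phi^t(\Lambda)$ with the \emph{larger} space $E^-_\rho\oplus E^0_\rho\oplus\hat E^0_\rho$, which has dimension $d+1$ and therefore meets any $d$-plane in at least a line; pick a unit $\zeta=(0,\zeta^-,\zeta^0,\hat\zeta^0)$ there and \emph{define} $\tilde E^0_\rho$ as the span of its neutral projection $(0,0,\zeta^0,\hat\zeta^0)$. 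By the first part $\zeta^-$ is small, so this projection is nonzero and depends H\"older-continuously on $\rho$. Then for any $\xi\in T_\rho\Phi^t(\Lambda)$ the Lagrangian condition $\sigma(\xi,\zeta)=0$, combined with (\ref{eq:CestRelouDeReprendreLePapier2})--(\ref{eq:CestRelouDeReprendreLePapier3}), yields the explicit relation $\hat\xi^0=\frac{\hat\zeta^0}{\zeta^0}\xi^0+O(\|\zeta^-\|\,\|\xi\|)$, which is exactly what forces $\xi$ to lie within $\delta_0$ of $E^+_\rho\oplus\tilde E^0_\rho$. Your sketch gestures at this symplectic mechanism but does not produce a well-defined $\tilde E^0_\rho$ to feed into it.
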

\end{tcolorbox}

\begin{proof}
First of all, note that there exists $c>0$ such that for all $\rho\in S^*_{[\lambda_1,\lambda_2]} X$ and all $\xi=(\xi^+,\xi^-,\xi^0, \hat{\xi}^0)\in E^+_{\rho}\oplus E^-_{\rho}\oplus E^0_{\rho}\oplus \hat{E}^0_{\rho}$, we have
\begin{equation}\label{eq:EquivNorm}
 \|\xi\|^2 \geq c \|\xi^-\|^2, 
 ~~~~\|\xi\|^2 \geq c \left( \|\xi^+\|^2 + |\xi^0|^2 + |\hat{\xi}^0|^2 \right).
\end{equation}
For a given $\rho$, this follows from the fact that all norms are equivalent on a finite-dimensional space, and the constant $c$ involved depends on the angle between the directions $E^0_\rho$, $\hat{E}^0_\rho$ and $E^+_\rho$. By compactness, the constant may hence be taken independent of $\rho$.

Let us fix $\eta_0>0$ and $\Lambda$ as in the statement. Let $\rho\in \Lambda$ and let $t\in \R$. Write $\rho_t := \Phi^t(\rho)$ and $\xi_t := (d_\rho \Phi^t)(\xi)\in T_{\rho_t} \Phi^t(\Lambda)$. Decomposing $\xi_t$ as $\xi_t=(\xi_t^+,\xi_t^-,\xi_t^0, \hat{\xi}_t^0)\in E^+_{\rho_t}\oplus E^-_{\rho_t}\oplus E^0_{\rho_t}\oplus \hat{E}^0_{\rho_t}$, our {\color{black}first} aim is to show that $\frac{\|\xi_t^-\|^2}{\|\xi_t\|^2}$ converges to zero as $t\longrightarrow \infty$ uniformly in $\rho\in \Lambda$, $\xi\in T_\rho \Lambda$.

Thanks to (\ref{eq:DefHyp}) and (\ref{eq:EquivNorm}), we have 
$$\|\xi_t^-\|^2 \leq C A^{-2t} \|\xi^-\|^2 \leq \frac{C}{c} A^{-2t} \|\xi\|^2.$$

On the other hand, (\ref{eq:DefHyp}) and (\ref{eq:EquivNorm}) also imply that
$$\|\xi_t\|^2 \geq \frac{c}{C}  (|\xi^0|^2+ |\hat{\xi}^0|^2 +  \|\xi^+\|^2) \geq \frac{c}{C} |\sin \eta_0|^2 \|\xi\|^2.$$

Thus, we have
\begin{equation}\label{eq:AlejandroTuDoisCandidater}
\frac{\|\xi_t^-\|^2}{\|\xi_t\|^2}\leq \frac{C^2}{c^2} \frac{A^{-2t}}{|\sin \eta_0|^2},
\end{equation}
and the first claim follows.

We now move to the construction of $\tilde{E}_\rho^0$. Let $\rho\in \Phi^t(\Lambda)$ for $t\geq T_1$.

 The space $T_\rho \Phi^t(\Lambda)$ is $d$-dimensional, so {\color{black}its intersection with $E^-_{\rho}\oplus E^0_{\rho}\oplus \hat{E}^0_{\rho}$ must contain a} \textcolor{black}{norm-one} vector $\zeta=(0,\zeta^-, \zeta^0, \hat{\zeta}^0)\in E^+_{\rho}\oplus E^-_{\rho}\oplus E^0_{\rho}\oplus \hat{E}^0_{\rho}$. We denote by $\tilde{E}_\rho^0$ the vector space generated by $(0, 0,  \zeta^0, \hat{\zeta}^0)$. In particular, it depends \textcolor{black}{Hölder-}continuously on $\rho$, as claimed. \textcolor{black}{Let us also denote by $\boldsymbol{\zeta}^-$ the vector $(0,\zeta^-, 0, 0)$ and by $\boldsymbol{\zeta}^0$ the vector $(0, 0,  \zeta^0, \hat{\zeta}^0)$.}

Thanks to (\ref{eq:AlejandroTuDoisCandidater}), we know that\textcolor{black}{, for $t$ large enough, we have $|\zeta^-|< \frac{c_1 \delta_0}{6C_0}$, with $C_0$ as in (\ref{eq:CestRelouDeReprendreLePapier}) and $c_1$ as in (\ref{eq:CestRelouDeReprendreLePapier3}). Therefore, if $\delta_0$ is chosen small enough,  either $|\zeta^0|$ or $|\hat{\zeta}^0|$ must be $> \frac{1}{3}$. Up to exchanging the role of $\zeta^0$ and $\hat{\zeta}^0$, we may suppose that $|\zeta^0| \geq \frac{1}{3}$.}

\textcolor{black}{ Let $\xi= (\xi^+, \xi^-, \xi^0, \hat{\xi}^0)\in T_\rho \Phi^t(\Lambda)$. We denote by $\boldsymbol{\xi}^0$ the vector $(0, 0,  \xi^0, \hat{\xi}^0)$.
 If $\sigma$ denotes the natural symplectic structure on $T^*X$, we have thanks to (\ref{eq:CestRelouDeReprendreLePapier2}) that $\sigma(\xi, \zeta)= \sigma(\boldsymbol{\xi}^0, \boldsymbol{\zeta}^0) + \sigma(\xi, \boldsymbol{\zeta}^-)$. Recall from the discussion before (\ref{eq:CestRelouDeReprendreLePapier3}) that $\sigma(\boldsymbol{\xi}^0, \boldsymbol{\zeta}^0)= c(\rho) \left( \xi^0 \hat{\zeta}^0 -  \zeta^0 \hat{\xi}^0\right)$.}
 
\textcolor{black}{On the other hand, since $\Phi^t(\Lambda)$ is Lagrangian, $\sigma(\xi, \zeta)$ must be zero. Therefore, we have}
\begin{equation}
\textcolor{black}{\hat{\xi}^0 = \frac{ \xi^0 \hat{\zeta}^0}{\zeta^0} + \frac{1}{c(\rho) \zeta_0} \sigma(\xi, \boldsymbol{\zeta}^-),}
\end{equation}
\textcolor{black}{and the last term has an absolute value smaller than $\frac{\delta_0}{2} \|\xi\|$.} 

\textcolor{black}{Therefore, we may write}
$$
\textcolor{black}{\xi = (\xi^+, 0, 0, 0) + \frac{\xi^0}{\zeta^0} (0, 0, \zeta^0, \hat{\zeta}^0) + (0,0,0, \frac{1}{c(\rho)\zeta_0} \sigma(\xi, \boldsymbol{\zeta}^-)) +  (0, \xi^-, 0,0).}$$
\textcolor{black}{The result follows, as the first two terms belong to $E_\rho^+\oplus \tilde{E}_\rho^0$, while, for $t\geq T^1$, the sum of the last terms has absolute value smaller than $\delta_0 \|\xi\|$ thanks to (\ref{eq:AlejandroTuDoisCandidater}).}
\end{proof}

We may now proceed with the proof of Lemma \ref{lem:LaBorneChiante}. Recall that it says that,  if we consider the family of functions $D_t\in C^\infty(X)$ indexed by $t\geq T_0(\Lambda)$, defined by $D_t(x)=\log(|\det(d_x g_{0,t})|)$, then for all $\varepsilon>0$, there exists $\mu>0$ such that for all $t\geq T_0(\Lambda)$ and all $x,x'\in \Omega_{T_0}$ at mutual distance at most $\mu$,
\begin{equation}\label{eq:unif_cont}
|D_t(x)-D_t(x')|\leq \varepsilon\, .
\end{equation}

\begin{proof}[Proof of Lemma \ref{lem:LaBorneChiante}]
By compactness, we may find $\eta_0>0$ such that $\Lambda$ is $\eta_0$-transverse to the stable directions. Let $\varepsilon>0$, let $\delta_0>0$ which we will choose later, depending on $\varepsilon$, and let $T_1=T_1(\eta_0,\delta_0)$ be as in Lemma \ref{l:charts}, which we may assume to be greater than $T_0$. Clearly, it is enough to establish \eqref{eq:unif_cont} for $t\geq T_1$. Let $t\geq T_1$. By (\ref{eq:Fonctorialite}), for any $x\in\Omega_t$, defining $\rho\in\Lambda_t$ by $\pi(\rho)=x$, we have
\begin{equation}\label{eq:log_1}
\begin{aligned}
\mathrm{d}_xg_{0,t}&=\left(\mathrm{d}_{g_{T_1,t}(x)}g_{0,T_1}\right) \mathrm{d}_xg_{T_1,t}\\
&=(\mathrm{d}_{g_{T_1,t}(x)}g_{0,T_1})  (\mathrm{d}_{\Phi^{T_1-t}(\rho)}\pi|_{\Lambda_{T_1}})\br{\mathrm{d}_\rho \Phi^{T_1-t}|_{T_\rho\Lambda_t\to T_{\Phi^{T_1-t}(\rho)}\Lambda_{T_1}}} (\mathrm{d}_\rho\pi|_{\Lambda_t})^{-1}.
\end{aligned}
\end{equation}
To study the right-hand side of this decomposition, we make the three following observations.
\begin{itemize}
\item Thanks to the second part of Lemma \ref{l:charts}, for each $t\geq T_1$, we may find a subspace $\tilde{E}^0_\rho \subset E_\rho^0\oplus \hat{E}_\rho^0$ such that the orthogonal projector $P_\rho:T_\rho S^*X\rightarrow E_\rho^+\oplus \tilde{E}_\rho^0$ is H\"older continuous in $\rho$ and is $\delta_0$-close to the orthogonal projection onto $T_\rho\Lambda_s$. Therefore, up to an error $o_{\delta_0\to 0}(1)$, we may replace the factors $\mathrm{d}_{\rho'}\pi|_{\Lambda_s}$ in \eqref{eq:log_1} by factors $(\mathrm{d}_{\rho'}\pi P_{\rho'}^*)$, and the factor $\mathrm{d}_\rho \Phi^{T_1-t}|_{T_\rho\Lambda_t\to T_{\Phi^{T_1-t}(\rho)}\Lambda_{T_0}}$ by $P_{\Phi^{T_1-t}(\rho)}\mathrm{d}_\rho\Phi^{T_1-t} P_\rho^*$.

\item Thanks to (\ref{eq:Matrix}), we have $\det(P_{\Phi^{T_1-t}(\rho)}\mathrm{d}_\rho\Phi^{T_1-t} P_\rho^*) = \det( M_{\Phi^{T_1-t}(\rho), \rho})$.

\item For all $s\geq 0$, $\pi|_{\Lambda_s}:\Lambda_s\rightarrow \Omega_s$ is a diffeomorphism and uniformly bi-Lipschitz in $s$. Moreover, for each $s>0$, the maps $\Phi^{-s}:\Lambda_s\rightarrow\Lambda$ and $g_{s,s'}$ for $s<s'$ are contracting. Consequently, the mappings $x\mapsto \mathrm{d}_\rho\pi P_\rho^*$, $x\mapsto \mathrm{d}_{\Phi^{T_1-t}(\rho)}\pi P_{\Phi^{T_1-t}}^*$ and $x\mapsto \mathrm{d}_{g_{T_1,t}(x)}g_{0,T_1}$ (where $\pi(\rho)=x$) are H\"older continuous on $\Omega_t$, uniformly in $t\geq T_1$.
\end{itemize}

From these observations we deduce the existence of $\delta_0=\delta_0(\varepsilon)>0$ and $\mu_1=\mu_1(\varepsilon)>0$ such that if $T_1=T_1(\eta_0,\delta_0)>0$ is chosen accordingly, for any $t\geq T_1$ and $x,x'\in\Omega_t$ such that $\textup{dist}(x,x')\leq\mu_1$,
\begin{equation}\label{eq:log_3}
|D_t(x)-D_t(x')|\leq \varepsilon/2+|\log \det( M_{\Phi^{T_1-t}(\rho), T_1-t}) - \log \det( M_{\Phi^{T_1-t}(\rho'), T_1-t}) |,
\end{equation}
where $\rho,\rho'\in \Phi^t(\Lambda)$ are such that $\pi_X(\rho)= x$, $\pi_X(\rho')=x'$.

Now, by the chain rule, we have 
$$\det( M_{\Phi^{T_1-t}(\rho), \rho}) = \left(\prod_{k=0}^{\lfloor t- T_1 \rfloor- 1} \det( M_{\Phi^{T_1-t+k}(\rho), 1})\right) \times \det( M_{\Phi^{ - \langle t- T_1\rangle }(\rho), \langle t- T_1\rangle }),$$
where $\langle s\rangle = s- \lfloor s \rfloor$. We thus get that
\begin{align*}
&|\log \det( M_{\Phi^{T_1-t}(\rho), T_1-t}) - \log \det( M_{\Phi^{T_1-t}(\rho'), T_1-t}) | \\
&\leq \sum_{k=0}^{\lfloor t- T_1 \rfloor- 1} \left|\log \det( M_{\Phi^{T_1-t+k}(\rho), 1}) - \log \det( M_{\Phi^{T_1-t+k}(\rho{\color{black}'}), 1})\right| \\
&+ \left|\log \det( M_{\Phi^{ - \langle t- T_1\rangle }(\rho), \langle t- T_1\rangle }) - \det( M_{\Phi^{ - \langle t- T_1\rangle }(\rho'), \langle t- T_1\rangle })\right|.
\end{align*}

We note that the map $\rho'\mapsto \log \det( M_{\Phi^{ - s }(\rho'), s })$ is Hölder continuous uniformly in $s\in[0,1]$, and that the distance between $\Phi^{T_1-t+k}(\rho)$ and $\Phi^{T_1-t+k}(\rho')$ decays exponentially in $k$ as $|T_1-t+k|$ increases from $1$ to $T_1-t$, with an exponent of decay independent of $\rho, \rho'$. We deduce from this that 
$$|\log \det( M_{\Phi^{T_1-t}(\rho), T_1-t}) - \log \det( M_{\Phi^{T_1-t}(\rho'), T_1-t}) | \leq C \text{dist}_{S^*X}(\rho, \rho')^\alpha,$$
for some $C,\alpha$ independent of $\rho, \rho'$. The result follows from this and (\ref{eq:log_3}).
\end{proof}

\subsection{The action of the Schrödinger propagator on Lagrangian states}\label{ss:demo_propagation}
The aim of this section is to prove Proposition \ref{prop:SumLag3}, which describes the action of the Schrödinger propagator on Lagrangian states that are transverse to the stable directions. We will start with the following proposition, which treats the case of Lagrangian states that are nowhere stable, on a complete simply connected manifold. The discussion is simplified by the fact that, here, we only consider Lagrangian states associated with Lagrangian manifolds that are projectable. We start by establishing the following proposition, which is an adaptation of results from \cite[Section 4.1]{NZ}.
\begin{tcolorbox}[breakable, enhanced]
\begin{proposition}\label{cor:SchrodHadam}
Let $\widetilde{X}$ be the universal cover of a Riemannian manifold of negative sectional curvature, let $0<\lambda_1<\lambda_2$, and
let $\Lambda \subset S^*_{[\lambda_1,\lambda_2]}$ be a projectable Lagrangian submanifold of $T^*\widetilde{X}$ with \textcolor{black}{support} $\Omega_0$, generated by a phase function $\phi_0$. Suppose that $\Lambda$ is nowhere stable. Then there exists $T_0\geq 0$ such that, for any $t\geq T_0$,  $\widetilde{\Phi}^t(\Lambda)$ is a projectable Lagrangian manifold, whose \textcolor{black}{support} and phase we denote by $\Omega_t$ and $\phi_t$. Furthermore, for any symbol $a\in C^\infty_c(\Omega_0)$, any $t\geq T_0$ and any $k\in \N$, the application of the operator $U_h(t)$ to the  Lagrangian state
\begin{equation*}
a e^{i\phi_0/h}
\end{equation*}
associated with $\Lambda_0$ can be written as
\begin{equation*}
U_h(t) (a e^{i\phi_0/h})(x) = e^{i\phi_{t}(x)/h+i\beta(t)/h} b_{t}(x) +O_{H^k}(h)
\end{equation*}
for some $\beta(t)\in \R$. Here, the $b_{t}$ are smooth compactly supported functions on $\widetilde{X}$ such that $\phi_{t}$ is smooth on a neighbourhood of the support of $b_{t}$. The two functions are defined as follows:
\begin{enumerate}
\item Let $\pi_{\widetilde{X}}:T^*\widetilde{X}\rightarrow\widetilde{X}$ be the canonical projection. For any $y\in \Omega_t$, there exists a unique $x\in \Omega_0$ such that $\pi_{\widetilde{X}}\widetilde{\Phi}^t(x, \partial \phi_0(x)) =y$. We then write  $\widetilde{\Phi}^t(x, \partial \phi_0(x)) = (y, \partial \phi_t(y))$.
\item The function $b_t$ is defined by
\[
b_t(x)=|\det(dg_t(x))|^{1/2}a\circ g_t(x)
\]
where $g_t(x)= g_{0,t}(x)$ is the projection on the base manifold of $\widetilde{\Phi}^{-t}(x,\partial_t\phi(x))$. Furthermore, $\log|b_t(x)|$ is continuous in $x$, uniformly in $(x,t)$.
\end{enumerate}
\end{proposition}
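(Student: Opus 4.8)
The plan is to build an approximate solution of the semiclassical Schr\"odinger equation $ih\partial_t w=-\tfrac{h^2}{2}\Delta w$ with initial datum $ae^{i\phi_0/h}$ by the WKB method, and then to compare it with the exact solution $U_h(t)(ae^{i\phi_0/h})$ through Duhamel's formula and the unitarity of $U_h$. By Lemma~\ref{lem:LongTimeProj} there is $T_0\ge 0$ such that for all $t\ge T_0$ the manifold $\widetilde{\Phi}^t(\Lambda)$ is an expanding, projectable, simply connected Lagrangian submanifold of $T^*\widetilde X$, hence generated by a phase $\phi_t$, unique up to an additive constant, with support $\Omega_t$. The ansatz will be $w_h(t,x)=e^{i(\phi_t(x)+\beta(t))/h}\,b_t(x;h)$ with $b_t(\cdot;h)\sim\sum_{\ell\ge 0}h^\ell b_t^{(\ell)}$ a classical symbol, whose principal part $b_t^{(0)}$ will be the half-density in the statement. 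For $s\in[0,T_0]$ the manifold $\widetilde{\Phi}^s(\Lambda)$ need not be a graph, so one first propagates $ae^{i\phi_0/h}$ by the general theory of Lagrangian states (as in \cite{Anan,AN,NZ}), obtaining a semiclassical Lagrangian distribution attached to $\widetilde{\Phi}^s(\Lambda)$, and then rewrites it in single-phase form once the manifold is projectable; any Maslov constant phase acquired on $[0,T_0]$ is unimodular and may be absorbed into $b_t$, affecting neither $|b_t|$ nor the equations below.

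\textbf{Eikonal equation, the constant $\beta(t)$, and the amplitude.} Inserting $w_h$ into $P_h:=ih\partial_t+\tfrac{h^2}{2}\Delta$ and collecting powers of $h$, the $h^0$ term vanishes exactly when $\phi_t+\beta(t)$ solves the Hamilton--Jacobi equation $\partial_t\psi+\tfrac12|\partial\psi|^2=0$, the classical generating equation for the geodesic flow on $T^*\widetilde X$ (Hamiltonian $\tfrac12|\xi|^2$). Its solution with datum $\phi_0$ is the action $\psi_t(y)=\phi_0(g_{0,t}(y))+\tfrac t2|\partial\phi_0(g_{0,t}(y))|^2$ (the speed being constant along geodesics), and by a standard computation the graph of $\partial\psi_t$ is $\widetilde{\Phi}^t(\Lambda)$. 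Thus $\partial\psi_t=\partial\phi_t$ on the connected set $\Omega_t$, so $\psi_t-\phi_t$ is a genuine constant, which we call $\beta(t)$ (simple connectedness of $\Lambda$ enters twice here: to produce $\phi_t$ and to make $\beta(t)$ unambiguous). The $h^1$ term then vanishes exactly when $b_t^{(0)}$ solves the transport equation $\partial_t b_t^{(0)}+\partial\psi_t\cdot\partial b_t^{(0)}+\tfrac12(\Delta\psi_t)b_t^{(0)}=0$; equivalently $|b_t^{(0)}|^2$ satisfies the continuity equation for the vector field $\partial\psi_t$, whose time-$t$ flow on the base is $\kappa_{0,t}=g_{0,t}^{-1}$. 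Pushing $|a|^2\,\mathrm dx$ forward gives $|b_t^{(0)}|^2=|\det\mathrm dg_{0,t}|\,|a\circ g_{0,t}|^2$, and integrating along characteristics (the argument of $a$ being transported unchanged) gives $b_t^{(0)}=|\det\mathrm dg_{0,t}|^{1/2}\,a\circ g_{0,t}$, as claimed. The higher profiles $b_t^{(\ell)}$ solve inhomogeneous transport equations of the same type, are smooth and compactly supported in $g_{0,t}^{-1}(\operatorname{supp}a)$, near which $\phi_t$ is smooth.

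\textbf{Error estimate.} Fix $k$ and truncate: $b_t^{[N]}:=\sum_{\ell\le N}h^\ell b_t^{(\ell)}$. By construction $R_h(t):=P_h\big(e^{i(\phi_t+\beta(t))/h}b_t^{[N]}\big)=h^{N+1}e^{i(\phi_t+\beta(t))/h}r_t$, with $r_t$ smooth, compactly supported, and with $C^m$ norms locally bounded in $t$; hence $\|R_h(t)\|_{H^k}=O(h^{N+1-k})$, $t$ fixed. Since $U_h(t)=e^{ith\Delta/2}$ commutes with $\Delta$, it is unitary on every Sobolev space $H^k$, so Duhamel's identity
\[
U_h(t)(ae^{i\phi_0/h})-e^{i(\phi_t+\beta(t))/h}b_t^{[N]}=\frac ih\int_0^t U_h(t-s)\,R_h(s)\,\mathrm ds
\]
gives an $H^k$ error of size $O(h^{N-k})$, which is $O(h)$ as soon as $N\ge k+1$. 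Absorbing the subprincipal terms $h^\ell b_t^{(\ell)}$, $\ell\ge 1$, into this error yields the stated expansion with $b_t=b_t^{(0)}$. (As $\widetilde X$ is non-compact one must check that the Lagrangian distribution and all its profiles stay compactly supported for each fixed finite $t$; this holds since their supports lie in $g_{0,t}^{-1}(\operatorname{supp}a)$, and no large-time issue arises here.)

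\textbf{Uniform continuity of $\log|b_t|$, and the main difficulty.} Writing $\log|b_t(x)|=\tfrac12\log|\det\mathrm d_xg_{0,t}|+\log|a(g_{0,t}(x))|$, the first summand is continuous in $x$ with modulus of continuity uniform in $t\ge T_0$ by Lemma~\ref{lem:LaBorneChiante}, while the second is controlled because $g_{0,t}=g_{0,T_0}\circ g_{T_0,t}$ with $g_{T_0,t}$ non-expanding (the Lagrangian is expanding for $t\ge T_0$, Lemma~\ref{lem:LongTimeProj}), so $x\mapsto a(g_{0,t}(x))$ has Lipschitz constant bounded uniformly in $t$; together these give the claimed uniform control (in the form of \eqref{eq:LaBorneLog} of Proposition~\ref{prop:SumLag3}). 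The WKB construction itself is routine; the genuinely nontrivial point is precisely this uniform-in-$(x,t)$ estimate on $\log|b_t|$. The naive bound on $\mathrm d_xg_{0,t}$ degrades as $t\to\infty$, and taming the distortion uniformly requires understanding how the tangent spaces $T_\rho\widetilde{\Phi}^t(\Lambda)$ asymptotically align with $E^+_\rho\oplus E^0_\rho\oplus\hat E^0_\rho$ — the content of Lemmas~\ref{l:charts} and~\ref{lem:LaBorneChiante}, resting on the hyperbolicity of the geodesic flow and the symplectic geometry of $T^*X$. I expect this to be the main obstacle.
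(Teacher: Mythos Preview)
Your argument is correct and complete, but it follows a genuinely different route from the paper's. You run the classical time-dependent WKB construction (eikonal and transport equations along the flow, then Duhamel against the $H^k$-unitary propagator), and you handle the possibly non-projectable interval $[0,T_0]$ by invoking the general theory of Lagrangian distributions before rewriting in single-phase form. The paper instead takes a one-shot approach that avoids the caustic interval altogether: it exploits the Hadamard geometry of $\widetilde X$ to observe that, since any two points are joined by a unique geodesic (\cite[Theorem~4.8.1]{Jost}), the canonical relation $\Lambda(t)=\{(x,x';\xi,-\xi'):\Phi^t(x,\xi)=(x',\xi')\}$ is a projectable Lagrangian in $T^*(\widetilde X^2)$ for every $t\neq 0$. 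After inserting a compact frequency cutoff $A\in\Psi^{comp}_h$ (harmless on $ae^{i\phi_0/h}$ by stationary phase), the kernel of $U_h(t)A$ is therefore a single oscillatory integral $u(x,x')e^{i\psi_t(x,x')/h}$; passing to global Cartan--Hadamard coordinates reduces everything to $\R^d$, and the action on the Lagrangian state is then read off directly from \cite[Lemma~4.1]{NZ}. Your approach is more portable (it does not rely on the special global structure of the propagator kernel in nonpositive curvature) but pays the price of the detour through Maslov theory on $[0,T_0]$; the paper's is shorter and sidesteps that issue entirely, at the cost of being specific to the Hadamard setting. Both proofs converge on Lemma~\ref{lem:LaBorneChiante} for the uniform-in-$t$ continuity of $\log|b_t|$, which, as you correctly identify, is where the genuine difficulty lies.
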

\end{tcolorbox}

\begin{proof}
This proposition essentially follows from the fact that the Schrödinger propagator is a Fourier Integral Operator, and by using the WKB method. This method has been described in \cite[Lemma 4.1]{NZ}, in coordinate charts. Let us explain how we can reduce the proof to this setting. In the coming steps we will use tools from semiclassical analysis, some of which are presented in section \ref{sec:semi-classique} of the appendix.\\

\noindent\textbf{Step 1: the Schrödinger propagator as a Fourier Integral Operator}\\
For any $t\in \R$, we denote
\[
\Lambda(t):= \{(x,x'; \xi, -\xi'); \Phi^t(x,\xi)= (x', \xi')\} \subset T^*(\widetilde{X}^2)\, .
\]
We claim that {\color{black} if $t\neq 0$,} $\Lambda(t)$ is a projectable Lagragian submanifold of $T^*(\widetilde{X}^2)$. Indeed, since $\widetilde{X}$ has negative curvature,  \cite[Theorem 4.8.1]{Jost} implies that for any $x,x'\in \widetilde{X}$ and any $t\in \R{\color{black}\setminus\{0\}}$, there exists unique $\xi\in T_x\widetilde{X}$, $\xi'\in T_{x'}\widetilde{X}$, depending smoothly on $x$ and $x'$, such that $\Phi^t(x,\xi)= (x', \xi')$. In other words, $\Lambda(t)$ is a smooth section of $T^*(\widetilde{X}^2)$, so it is a projectable Lagrangian manifold thanks to Remark \ref{Rem:CriterionProj}. Next, recall the standard fact that the {\color{black} frequency-localized} Schrödinger propagator is a Fourier Integral Operator associated to the geodesic flow, whose proof is similar to  \cite[Theorem 2.1]{LapSig} (see also  \cite[Theorem 10.4]{Zworski_2012}). This means that, if $\psi_t$ is a phase generating $\Lambda(t)$, and if\footnote{The space $\Psi_h^{comp}$ is the space of (Weyl) pseudo-differential operators with \textcolor{black}{compactly supported} symbols. It is defined in section \ref{sec:semi-classique} of the appendix.} $A\in \Psi_h^{comp}(\widetilde{X})$, then there exists \textcolor{black}{$u\in C_c^\infty(\widetilde{X}^2)$} such that $U(t)A$ is the sum of an operator whose Schwartz kernel  is 
\[
u(x,x') e^{\frac{i}{h} \psi_t(x,x')}
\]
and of an $O_{L^2\longrightarrow L^2}(h^\infty)$ remainder. \textcolor{black}{Here, $u$ can depend on $h$, but its supports and $C^k$ norms are bounded independently of $h$}.\\

\noindent\textbf{Step 2: Using coordinates}\\
Fourier Integral Operators are easier to describe in some system of coordinates. Since $\widetilde{X}$ is a complete simply connected manifold of negative curvature, by the Cartan-Hadamard theorem, there exists a diffeomorphism $\kappa : \widetilde{X} \longrightarrow \R^d$, which is simply given by the exponential map at any point. We denote by $T^*\widetilde{X}$ the co-tangent bundle of $\widetilde{X}$, and by $\Phi^t : T^*\widetilde{X}\longrightarrow T^* \widetilde{X}$ the geodesic flow at time $t$. We equip $T^*\widetilde{X}$ with its natural symplectic structure. The diffeomorphism $\kappa$ can be lifted to a \emph{symplectomorphism} by
\begin{equation}\label{eq:SymplLift}
K :
\begin{cases}
T^* \widetilde{X} \longrightarrow T^* \R^d \\
(y,\xi)\mapsto \big{(}\kappa(y), (d\kappa(y)^t)^{-1}\xi\big{)}.
\end{cases}
\end{equation}

For any $t\in \R$, let us write
\[
\widehat{\Lambda}(t):= \left\{\big{(}\kappa(x), \kappa(x') ; (d\kappa(x)^t)^{-1}\xi, (d\kappa(x')^t)^{-1}\xi'\big{)} ~|~  (x,x',\xi,\xi')\in \Lambda(t) \right\}\subset T^*\R^{2d}\, .
\]
We deduce from the previous step that $\widehat{\Lambda}(t)$ is a projectable Lagrangian submanifold of $T^*\R^{2d}$. 
Furthermore, if $\widehat{\psi}_t$ is a phase generating $\widehat{\Lambda}(t)$,
 and if $A\in \Psi_h^{comp}(\widetilde{X})$, then there exists \textcolor{black}{$\widehat{u}\in C_c^\infty(\R^d)$} such that 
$\tilde{U}_h(t):= (\kappa^{-1})^* U_h(t)A (\kappa)^* : L^2(\R^d)\longrightarrow L^2(\R^d)$
is the sum of an operator whose Schwartz kernel  is 
\begin{equation}\label{eq:IntKer}
\widehat{u}(y,y') e^{\frac{i}{h} \widehat{\psi}_t(y,y')},
\end{equation}
and of an $O_{L^2\longrightarrow L^2}(h^\infty)$ remainder. \textcolor{black}{Here, again, $\widehat{u}$ can depend on $h$, but its supports and $C^k$ norms are bounded independently of $h$}.\\

\noindent\textbf{Step 3: Using the WKB method}\\
Let $\phi_0\in C^\infty(\Omega_0)$ and $a\in C_c^\infty(\Omega_0)$ be as in the statement of Proposition \ref{cor:SchrodHadam}. Recall that $\lambda_1\leq |\partial\phi_0|\leq \lambda_2$.
Take $\alpha_1\in C_c^\infty(\tilde{X})$ with $\alpha_1\equiv 1$ on the support of $a$, and $\alpha_2\in C_c^\infty(0, +\infty)$ with $\alpha_2\equiv 1$ on $[\lambda_1, \lambda_2]$. Then, if we write $\alpha : T^*X\ni (x,\xi) \mapsto \alpha_1(x) \alpha_2(|\xi|)\in \R$ and $A:= \mathrm{Op}_h (\alpha)$, the method of stationnary phase shows that $a e^{\frac{i}{h}\phi_0} = A a e^{\frac{i}{h}\phi_0} + O_{H^k}(h^\infty)$ for any $k\in\N$. Let us write $\omega_0:= \kappa(\Omega_0)\subset \R^d$, $\varphi_0 := (\kappa^{-1})^*\phi_0\in C^\infty(\omega_0)$, and $\mathrm{a} := (\kappa^{-1})^*a\in C_c^\infty(\omega_0)$. Moreover, for each $t\in\R$, let us write $\widehat{\Phi}^t=K\circ\Phi^t\circ K^{-1}:T^*\R^d\rightarrow T^*\R^d$.\\

We thus want to apply the operator $(\kappa^{-1})^* U_h(t)(\kappa)^*$ to the Lagrangian state $\mathrm{a} e^{\frac{i}{h}\varphi_0}$. Up to a $O_{H^k}(h^\infty)$ (for any $k\in\N$), it does therefore amount to applying the operator $ (\kappa^{-1})^* U_h(t)A (\kappa)^*$, whose integral kernel is described by (\ref{eq:IntKer}).
We are then exactly in the framework of \cite[Lemma 4.1]{NZ}, and we can conclude using the following lemma, the last point in Proposition \ref{cor:SchrodHadam} coming from Lemma \ref{lem:LaBorneChiante}.
 
\begin{tcolorbox}[breakable, enhanced]
\begin{lemma}\label{lem:SchrodEucl} 
Let $\Lambda_0$ be a projectable Lagrangian submanifold of $T^*\R^d$, generated by a phase function $\varphi_0$, with \textcolor{black}{support} $\omega_0\subset \R^d$. Fix $t>0$. Suppose that $\widehat{\Phi}^t(\Lambda_0)$ is a projectable Lagrangian submanifold of $T^*\omega_t$, for some open subset $\omega_t\subset\R^d$, generated by a phase function $\varphi_t$. Then, for any $\mathrm{a}\in C_c^\infty(\omega_0)$ and any $k\in \N$, the application of the operator $(\kappa^{-1})^* U_h(t)(\kappa)^*$  to the  Lagrangian state
\begin{equation*}
\mathrm{a} e^{i\varphi_0/h}
\end{equation*}
associated with $\Lambda_0$ can be written as
\begin{equation*}
(\kappa^{-1})^* U_h(t)(\kappa)^*(\mathrm{a} e^{i\varphi_0/h})(x) = e^{i\varphi_t/h} \left(\mathrm{a}_t+ r_h\right),
\end{equation*}
where $\|r_h\|_{H^k}= O(h)$, and where $\mathrm{a}_t\in C_c^\infty(\R^d)$ is given by
\begin{equation}\label{eq:FormulaAmplitude}
\mathrm{a}_t(x)= e^{\frac{i}{h}\beta(t)} |\det \mathrm{d}\mathrm{g}_{0,t}(x)|^{\frac{1}{2}} (\mathrm{a}\circ \mathrm{g}_{0,t})(x),
\end{equation}
for some $\beta(t)\in \R$. Here, $\mathrm{g}_{0,t} : \omega_t\longrightarrow \omega_0$ is given by
$\mathrm{g}_{0,t} = \kappa^{-1} \circ g_{0,t} \circ \kappa$, with $\mathrm{g}_{0,t} :\omega_t\longrightarrow \omega_0$
as in (\ref{eq:DefG}).
\end{lemma}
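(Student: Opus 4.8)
The plan is to run the standard WKB computation; in these global Euclidean coordinates it reduces to a single application of the stationary phase lemma.

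\emph{Step 1: reduction to an oscillatory integral.} As in Step 2 of the proof of Proposition \ref{cor:SchrodHadam}, I would first pick $A=\mathrm{Op}_h(\alpha)\in\Psi_h^{comp}(\widetilde X)$ microlocalized so that $\mathrm a\,e^{i\varphi_0/h}=A\big(\mathrm a\,e^{i\varphi_0/h}\big)+O_{H^k}(h^\infty)$ for every $k$ (this uses $\lambda_1\le|\partial\varphi_0|\le\lambda_2$ and stationary phase). By the Fourier integral operator representation \eqref{eq:IntKer}, the operator $(\kappa^{-1})^*U_h(t)A(\kappa)^*$ has Schwartz kernel $\widehat u(x,y')e^{\frac i h\widehat\psi_t(x,y')}$ plus a smoothing $O(h^\infty)$ remainder, hence an $O_{H^k}(h^\infty)$ remainder on our compactly supported state of size $O(1)$. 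This gives
\[
(\kappa^{-1})^*U_h(t)(\kappa)^*\big(\mathrm a\,e^{i\varphi_0/h}\big)(x)=\int_{\R^d}\widehat u(x,y')\,\mathrm a(y')\,e^{\frac i h\left(\widehat\psi_t(x,y')+\varphi_0(y')\right)}\,dy'+O_{H^k}(h^\infty).
\]

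\emph{Step 2: the critical point.} Since $\widehat\psi_t$ generates $\widehat\Lambda(t)=\{(x,y';\xi,-\xi'):\widehat\Phi^t(y',\xi')=(x,\xi)\}$, one has $\partial_x\widehat\psi_t(x,y')=\xi$ and $\partial_{y'}\widehat\psi_t(x,y')=-\xi'$ with $(x,\xi)=\widehat\Phi^t(y',\xi')$. I would then observe that the stationarity equation $\partial_{y'}(\widehat\psi_t+\varphi_0)=0$ reads $\xi'=\partial\varphi_0(y')$, i.e. $(y',\xi')\in\Lambda_0$, equivalently $(x,\xi)\in\widehat\Phi^t(\Lambda_0)$; since the latter is projectable over $\omega_t$, this determines a unique critical point $y'=\mathrm g_{0,t}(x)$, smooth in $x\in\omega_t$ and consistent with \eqref{eq:DefG}, and forces $\xi=\partial\varphi_t(x)$. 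Differentiating the stationarity relation in $x$ then shows the $y'$-Hessian of $\widehat\psi_t+\varphi_0$ is invertible at $\mathrm g_{0,t}(x)$ (again because $\widehat\Phi^t(\Lambda_0)$ is a graph) and that $d\mathrm g_{0,t}(x)=-\big(\partial^2_{y'}(\widehat\psi_t+\varphi_0)\big)^{-1}\partial^2_{x,y'}\widehat\psi_t$ there. For $x$ outside a fixed neighbourhood of $\omega_t$, or $y'$ outside $\mathrm{supp}\,\mathrm a$, the phase is non-stationary, so after inserting a cutoff these contributions are $O(h^\infty)$ in every $C^k$ norm.

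\emph{Step 3: stationary phase and bookkeeping.} I would then apply the stationary phase lemma with remainder (see section \ref{sec:semi-classique}) in the variable $y'$, treating $x\in\omega_t$ as a smooth parameter and differentiating under the integral, to obtain a $C^k$ (in particular $H^k$) expansion with remainder $O(h)$ whose leading term is a constant multiple of
\[
\frac{\widehat u\big(x,\mathrm g_{0,t}(x)\big)\,\mathrm a\big(\mathrm g_{0,t}(x)\big)}{\big|\det\partial^2_{y'}(\widehat\psi_t+\varphi_0)(x,\mathrm g_{0,t}(x))\big|^{1/2}}\;e^{\frac i h\left(\widehat\psi_t(x,\mathrm g_{0,t}(x))+\varphi_0(\mathrm g_{0,t}(x))\right)}.
\]
The phase in this expression equals $\varphi_t(x)+\beta(t)$ for some constant $\beta(t)$: its $x$-derivative is $\partial_x\widehat\psi_t(x,\mathrm g_{0,t}(x))=\xi=\partial\varphi_t(x)$ by the chain rule and stationarity, so it differs from $\varphi_t$ by a function locally constant on the connected set $\omega_t$ (the image of the connected $\omega_0$), hence by a constant. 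It remains to check that the product of $\widehat u(x,\mathrm g_{0,t}(x))$, the Hessian-determinant factor, and the Maslov phase collapses to $|\det d\mathrm g_{0,t}(x)|^{1/2}$: this is the classical transport/half-density identity for the FIO $U_h(t)$, whose principal symbol is the canonical half-density transported by the geodesic flow, for which I would cite \cite[Lemma 4.1]{NZ} and \cite[Chapter 10]{Zworski_2012}. Putting everything together yields $e^{\frac i h\beta(t)}|\det d\mathrm g_{0,t}(x)|^{1/2}(\mathrm a\circ\mathrm g_{0,t})(x)\,e^{i\varphi_t(x)/h}$ for the leading term, i.e.\ \eqref{eq:FormulaAmplitude}, with an $O_{H^k}(h)$ error.

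\emph{Main obstacle.} The geometric input — that projectability of $\widehat\Phi^t(\Lambda_0)$ produces a single non-degenerate critical point — is handed to us by hypothesis. The genuinely delicate points will be the half-density bookkeeping that identifies the stationary-phase amplitude (the FIO symbol $\widehat u$ together with the Hessian determinant) with $|\det d\mathrm g_{0,t}|^{1/2}$, the verification that the global phase correction $\beta(t)$ is a true constant rather than merely locally constant, and the upgrade of the FIO remainder from $O_{L^2}$ to $O_{H^k}$; each is standard but must be done with care.
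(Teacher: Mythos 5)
Your proof is essentially the same argument the paper intends. The paper does not prove Lemma \ref{lem:SchrodEucl} directly: after reducing to coordinates and expressing $(\kappa^{-1})^*U_h(t)A(\kappa)^*$ as an FIO with kernel \eqref{eq:IntKer}, it states that "we are then exactly in the framework of \cite[Lemma 4.1]{NZ}" and presents Lemma \ref{lem:SchrodEucl} as that reference's conclusion. What you have written out — insert the FIO kernel, identify the unique non-degenerate critical point $y'=\mathrm g_{0,t}(x)$ from the graph condition, apply stationary phase in $y'$ with $x$ as a parameter, and match the resulting amplitude with $|\det d\mathrm g_{0,t}|^{1/2}$ via the half-density symbol calculus — is precisely the WKB computation that \cite[Lemma 4.1]{NZ} carries out, so your blind reconstruction tracks the intended proof closely and your identification of the critical point and phase bookkeeping are correct.

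Two small remarks. First, you are right to flag that $\beta(t)$ is a priori only locally constant if $\omega_0$ is disconnected; the statement of the lemma does not assume connectedness. In the paper's actual use (Proposition \ref{cor:SchrodHadam}), only $|\mathrm a_t|$ and $\partial\varphi_t$ are ever used downstream, so a $\beta$ that is merely locally constant on $\omega_t$ is harmless; but if one wanted the formula as literally stated one should either assume $\omega_0$ connected or allow $\beta$ to depend on the component. Second, your upgrade from the $O_{L^2\to L^2}(h^\infty)$ FIO remainder to $O_{H^k}(h^\infty)$ is fine but worth a word: the remainder acts on a fixed compactly supported state, each $\partial_x$ costs at most a factor $h^{-1}$, and $h^\infty$ absorbs any power of $h^{-1}$, so the loss is immaterial. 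Neither point affects the validity of your argument.
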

\end{tcolorbox}
\end{proof}

We may now proceed with the proof of Proposition \ref{prop:SumLag3}, after introducing a few notations. Let us write $\pr : \widetilde{X}\longrightarrow X$ for the covering map of $X$. It induces a projection $\pr^* : T^*\widetilde{X}\longrightarrow T^*X$, such that $\pr^*\circ \tilde{\Phi}^t = \Phi^t \circ \pr^*$. We shall write $\pr^{-1}(x) := \{y\in \widetilde{X}~|~ \pr (y) = x\}$. We also define a map $\Pi: C^0(\tilde{X}) \longrightarrow C^0(X)$ by $(\Pi f)(x) = \sum_{y\in \pr^{-1}(x)} f(y)$.
Let us denote by $\tilde{U}_h(t) : L^2(\tilde{X})\longrightarrow  L^2(\tilde{X})$ the semi-classical Schrödinger propagator on $\tilde{X}$. If $f\in C^0(\tilde{X})$, we have
\begin{equation}\label{eq:PropagRevet}
\Pi \tilde{U}_h(t) f = U_h(t) \Pi f,
\end{equation}
because both side satisfy the same differential equation with the same initial conditions.

\begin{proof}[Proof of Proposition \ref{prop:SumLag3}]
Thanks to Lemma \ref{lem:PetitesLag}, we know that we may find finitely many open sets $(\mathcal{O}_n)_{n=1,\dots,N}$ in $X$ such that $T^*\mathcal{O}_n\cap \Lambda$ is nowhere stable. Let $(\chi_n)_{n=1,\dots,N}$ be a family of smooth functions with $\chi_n\in C_c^\infty(\mathcal{O}_n)$ and $\sum_{n=1}^N \chi_n \equiv 1$ on $\Omega_0$. For each $n\in \{1,\dots,N\}$, let $\Lambda_n = \Lambda\cap T^*\mathcal{O}_n\subset T^*X$, which is a projectable nowhere stable Lagrangian submanifold with support $\mathcal{O}_n$. Then, there exists a projectable Lagrangian submanifold $\widetilde{\Lambda}_n\subset T^*\tilde{X}$ such that the projection $\textup{pr}^*$ restricts to a diffeomorphism $\widetilde{\Lambda}_n\rightarrow\Lambda_n$. For the rest of the proof $T_0=\max_n T_0(\Lambda_n)\vee T_0(\tilde{\Lambda}_n)$ as in Lemma \ref{lem:LongTimeProj}. Moreover, any other projectable Lagrangian $\widetilde{\Lambda}'_n\subset T^*\tilde{X}$ with the same property is the image of $\widetilde{\Lambda}_n$ by some element of $\Gamma$. We call $\widetilde{\Lambda}_n$ a \emph{lift} of $\Lambda_n$. Note that $\Lambda_n\subset S_{[\lambda_1,\lambda_2]}^*X$ if and only if $\widetilde{\Lambda}_n\subset S_{[\lambda_1,\lambda_2]}^*\tilde{X}$. Let us fix a lift of each $\Lambda_n$, and denote by $\widetilde{\Omega}_n$ and $\widetilde{\phi}_n$ its \textcolor{black}{support} and phase function. We may also lift each $a\chi_n\in C_c^\infty(\mathcal{O}_n)$ to some $\widetilde{a\chi_n}\in C_c^\infty(\widetilde{\mathcal{O}_n})$ such that $a\chi_n= \widetilde{a\chi_n}\circ \pr$. Hence, $\Pi \left(\widetilde{a\chi_n} e^{\frac{i}{h}\widetilde{\phi}_0}\right) = a\chi_n e^{\frac{i}{h}\phi_0}$. We then apply Corollary \ref{cor:SchrodHadam} to describe the action of $\tilde{U}_h(t)$ on the Lagrangian state $\widetilde{a\chi_n}e^{\frac{i}{h}\widetilde{\phi_0}}$. Let $b_{t,n}$ and $\phi_{t,n}$ be the phases appearing in the statement of the proposition. Thanks to equation (\ref{eq:PropagRevet}), we get

\begin{align*}
U_h(t) \left(a e^{i\phi_0/h}\right) = \sum_{n=1}^N U_h(t) \left(a\chi_n e^{i\phi_0/h}\right) &= \sum_{n=1}^N \Pi \Big{(}b_{t,n} e^{\frac{i}{h}\phi_{t,n}}\Big{)} +O_{C^0}(h).
\end{align*}

Now, the group of deck transformations of $X$ is freely acting, properly discontinuous group of isometries of $\tilde{X}$, and  for each $t\geq T_0$, $b_{t,n}$ has compact support. Therefore, $\Pi \Big{(}b_{t,n} e^{\frac{i}{h}\phi_{t,n}}\Big{)}(x) = \sum_{y\in \pr^{-1}(x)} b_{t,n} e^{\frac{i}{h}\phi_{t,n}} $ is made of finitely many terms. Equation (\ref{eq:PropagLagSum2}) follows. The rest of the statements follow from the corresponding properties in Corollary \ref{cor:SchrodHadam}, noting that (\ref{eq:LaBorneLog}) is equivalent to the fact that $\log |b_{t,n}|$ is continuous, uniformly in $(x,t)$.\\

It remains to prove equation \eqref{eq:phase_bound}. Suppose for contradiction that this bound does not hold. Then, we could find a sequence $t_p$ of times larger than $T_0$, a sequence $j_p\in \{1,\dots, M(t_p)\}$ and sequences of points $x_p, y_p\in \mathcal{U}'_{j_p, t_p}$ such that
\[
\textup{dist}_{X}(x_p,y_p)=o(1)\textup{ and }\textup{dist}_{X}(x_p,y_p) = o\left(\textup{dist}_{T^*X} \left((x_p,\partial \phi_{j_p,t_p}(x_p)), (y_p , \partial \phi_{j_p,t_p}(y_p))\right)\right)\, ,
\]
where the distance on $T^*X$ is computed thanks to the metric $g_0$ we fixed. Now, the points $(x_p, \partial \phi_{j_p,t_p}(x_p))$ and $(y_p, \partial \phi_{j_p,t_p}(y_p))$ can be lifted to points $(\widetilde{x_p}, \widetilde{\xi_p})$ and $(\widetilde{y_p}, \widetilde{\xi'_p})$ in $\Phi^{t_p}(\widetilde{\Lambda}_n)\subset T^*\widetilde{X}$. But we would then have $\textup{dist}_{\widetilde{X}}(\widetilde{x_p},\widetilde{y_p})=o(1)$ and
 $\textup{dist}_{\widetilde{X}}(\widetilde{x_p},\widetilde{y_p})=o \left(d_{T^*\widetilde{X}} \left((\widetilde{x_p}, \widetilde{\xi_p}), (\widetilde{y_p}, \widetilde{\xi'_p})\right)\right)$. In particular, for $p$ large enough, the map $t\mapsto \textup{dist}_{\widetilde{X}}\left(\widetilde{\Phi}^{t}((\widetilde{x_p}, \widetilde{\xi_p}),(\widetilde{y_p}, \widetilde{\xi'_p})\right)$ would not be increasing for $t$ close to zero,  which would contradict the fact that $\widetilde{\Lambda}_n$ is expanding.

\end{proof}
\section{Rational independence of phases is generic}\label{sec:PhaseGen}

Let $\Omega\subset X$ be an open subset. Let $0<\lambda_1<\lambda_2$. Let $\Sigma\subset X$ be an orientable hypersurface of $X$. We fix $\nu$ a vector field on $\Sigma$ normal to $T\Sigma$ at each point and of unit norm. Recall the objects $\mathcal{E}_{(\lambda_1,\lambda_2)}(\Omega)$, $\mathcal{E}^T_{(\lambda_1,\lambda_2)}(\Omega)$ and $\mathcal{E}^{T,irr}_{(\lambda_1,\lambda_2)}(\Omega)$ defined in \eqref{eq:SpacePhases}, \eqref{eq:TransversePhases} and \eqref{eq:DefIrr} respectively. Moreover, recall $\calC(\Sigma)$, $\calC^T(\Sigma)$ and $\mathcal{C}^{T,irr}(\Sigma)$ defined in \eqref{eq:calC_def}, \eqref{eq:calC_trans} and \eqref{eq:DefGeneMono} respectively. The goal of the present section is to prove the two following propositions, which we use in the proofs of Theorems \ref{th:Pointwise} and \ref{theoMono} respectively. We equip all these sets with the topology of uniform convergence of derivatives on compact sets.

\begin{tcolorbox}[breakable, enhanced]
\begin{proposition}\label{prop:polychrom_conclusion}
The set $\mathcal{E}^{T,irr}_{(\lambda_1,\lambda_2)}(\Omega)$ is a residual subset of $\mathcal{E}^{T}_{(\lambda_1,\lambda_2)}(\Omega)$.
\end{proposition}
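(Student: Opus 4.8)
The plan is to produce a dense $G_\delta$ subset of $\mathcal{E}^{T}_{(\lambda_1,\lambda_2)}(\Omega)$ lying inside $\mathcal{E}^{T,irr}_{(\lambda_1,\lambda_2)}(\Omega)$, by a parametric transversality argument applied, for each primitive integer vector $\boldsymbol n$, to the map $(t,x_0)\mapsto \sum_j n_j\,\xi^{t,x_0}_j$ built from \eqref{eq:xi_beta}. \emph{Reductions.} The conditions defining the time $T_0(\Lambda)$ in Lemma \ref{lem:LongTimeProj} (projectability and the expanding property of $\Phi^t(\Lambda)$) are stable under small perturbations, so $\phi\mapsto T_0(\phi)$ is upper semi-continuous and $U_m:=\{\phi\in\mathcal{E}^{T}_{(\lambda_1,\lambda_2)}(\Omega)\,:\,T_0(\phi)<m\}$ is open, with $\mathcal{E}^{T}_{(\lambda_1,\lambda_2)}(\Omega)=\bigcup_{m\in\N}U_m$. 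Since residuality is a local property for open covers, it suffices to fix $m\in\N$ and show that $\mathcal{E}^{T,irr}_{(\lambda_1,\lambda_2)}(\Omega)\cap U_m$ is residual in $U_m$. On $U_m$ the covectors $\xi^{t,x_0}_j$ are defined for all $t\ge m$, and there, by \eqref{eq:DefIrr}, membership in $\mathcal{E}^{T,irr}_{(\lambda_1,\lambda_2)}(\Omega)$ amounts to: for a.e.\ $x_0\in X$ and all $t\ge m$, one has $\sum_j n_j\,\xi^{t,x_0}_j\ne0$ for every $\boldsymbol n\in\Z^{N_{x_0}(t)}\setminus\{0\}$. Exhausting $[m,\infty)$ by the compact windows $[m,m+q]$ ($q\in\N$) and enumerating the countably many primitive integer vectors, it is enough to prove that for each $(m,q,\boldsymbol n)$ the set of $\phi\in U_m$ for which an appropriate transversality holds on $[m,m+q]\times X$ is open and dense in $U_m$; this transversality will force the bad set $\{x_0:\exists\,t\in[m,m+q],\ \sum_j n_j\xi^{t,x_0}_j=0\}$ to be Lebesgue-null.

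\emph{Transversality.} Stratify $[m,m+q]\times X$ according to the number $N(t,x_0)$ of distinct momenta $\partial\phi_{j,t}(x_0)$ of the sheets of $\Phi^t(\Lambda_\phi)$ above $x_0$ (equivalently, according to the combinatorial type of the singularities of $\pi_X\circ\Phi^t|_{\Lambda_\phi}$): for $\phi$ in a residual set this is a locally finite, hence finite, partition into submanifolds $S_0,S_1,\dots$, with $S_0$ open of dimension $d+1$ and $\dim S_i\le d$ for $i\ge1$, and on each $S_i$ the $N_i:=N|_{S_i}$ distinct momenta $\xi^{t,x_0}_1,\dots,\xi^{t,x_0}_{N_i}$ (which already incorporate the identification $\sim_{x_0,t}$ of merging sheets) depend smoothly on $(t,x_0)$ and on $\phi$ in the $C^\infty$ topology. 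The crucial point is that this dependence on $\phi$ is rich: perturbing $\phi$ by $\epsilon\psi$ with $\psi$ supported in a small ball around the base point $y_{j,x_0,t}\in\Omega$ of the $j$-th geodesic (point (2) of Proposition \ref{prop:SumLag3}) moves $\Lambda_\phi$ there to $(y,\partial\phi(y)+\epsilon\partial\psi(y))$; since $\partial\psi(y_{j,x_0,t})$ may be an arbitrary covector (and $\lambda_1<|\partial\phi|<\lambda_2$ leaves room), since $\mathrm{d}\Phi^t$ is invertible, and since the base points of distinct sheets are automatically distinct ($\Lambda_\phi$ is a graph and $\Phi^t$ a diffeomorphism), this moves $\xi^{t,x_0}_j$ in an arbitrary direction of $\R^d$ while leaving the other momenta unchanged. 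Hence on each stratum the map $(t,x_0,\phi)\mapsto\sum_{j}n_j\,\xi^{t,x_0}_j(\phi)\in\R^d$ (with $\boldsymbol n\in\Z^{N_i}\setminus\{0\}$) is submersive in $\phi$ at every zero, so the parametric transversality theorem gives an open dense set of $\phi\in U_m$ for which $(t,x_0)\mapsto\sum_j n_j\xi^{t,x_0}_j(\phi)$ is transverse to $\{0\}$ on $S_i$.

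\emph{Conclusion and main obstacle.} On $S_0$ (dimension $d+1$) the zero set of such a transverse map has dimension $1$, and since $d=\dim X\ge2$ (because $X$ carries a metric of negative sectional curvature) its image under $\pi_X$ is the image of a $1$-manifold, hence Lebesgue-null in $X$; on $S_i$ ($i\ge1$) the zero set has dimension $\dim S_i-d\le0$, so $\pi_X$ of it is countable. Taking the intersection, over all $(m,q,\boldsymbol n)$ and all (finitely many per $q$) strata, of these open dense subsets of $U_m$, we obtain a dense $G_\delta$ in $U_m$ along which the bad set in $X$ is a countable union of null sets, hence $\phi\in\mathcal{E}^{T,irr}_{(\lambda_1,\lambda_2)}(\Omega)$; by the first reduction this proves the proposition. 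I expect the main obstacle to be the rigorous and uniform (over the compact windows $[m,m+q]\times X$) implementation of the second step: justifying that perturbed phases remain in $\mathcal{E}^{T}_{(\lambda_1,\lambda_2)}(\Omega)$ (one uses the openness of the defining conditions on compact sets, cf.\ Remark \ref{rem:NonEmpty}, together with $\psi$ supported in a small ball), verifying the claimed richness of the perturbation-to-momentum map from the structure of Proposition \ref{prop:SumLag3}, and controlling the caustic stratification (existence of the finite submanifold partition with the stated dimensions for generic $\phi$). Beyond the Anosov property, negative curvature enters here only through $d\ge2$, which is exactly what makes codimension $d$ enough to annihilate the projection to $X$.
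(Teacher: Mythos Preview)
Your approach is broadly in the right spirit, but it takes a substantially more complicated route than the paper and leaves the most delicate step (the stratification) unjustified. The paper avoids the target-side bookkeeping entirely by working in the \emph{source} space: it fixes $k\ge2$ and $\boldsymbol n\in\Z^k$, defines
\[
\Psi:(T^*\Omega)^k\times\R\ni((x_1,\xi_1),\dots,(x_k,\xi_k),t)\mapsto(\Phi^t(x_1,\xi_1),\dots,\Phi^t(x_k,\xi_k))\in(T^*X)^k,
\]
observes that $\Psi$ is a submersion (since $\Phi^t$ is a diffeomorphism), so $\Psi^{-1}(\mathcal T_{\boldsymbol n})$ is a submanifold of codimension $kd$, i.e.\ dimension $kd+1$. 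A short geometric argument (the $\xi_j$ being not all equal forces the $\partial_t$ direction out of $T\Psi^{-1}(\mathcal T_{\boldsymbol n})$) shows that the projection to $(T^*\Omega)^k$ has injective differential there, so its image $P_{\boldsymbol n}$ is a countable union of $(kd+1)$-dimensional submanifolds of $(T^*\Omega)^k$. Then the \emph{multijet transversality theorem} gives a residual set of $\phi\in C^\infty(\Omega)$ for which $(d\phi,\dots,d\phi)$ is transverse to $P_{\boldsymbol n}$; for such $\phi$ the set $\mathcal O_{\boldsymbol n}$ of bad $(x_1,\dots,x_k,t)$ is a countable union of $1$-manifolds, and its projection to $X$ has measure zero. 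Intersecting over all $\boldsymbol n$ and with $\mathcal E^T_{(\lambda_1,\lambda_2)}(\Omega)$ finishes the proof. No time windows, no sheet-counting, no caustic strata.

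Comparing with your argument: your ``richness'' claim is correct but your justification (``$d\Phi^t$ is invertible'') is not the right reason. What you actually need is that the vertical fiber at $(y_{j,x_0,t},\partial\phi(y_{j,x_0,t}))$ and the tangent to $\Lambda_\phi$ there are transverse Lagrangians, hence their images under $d\Phi^t$ remain transverse, which is what makes $\zeta\mapsto\xi_j^{t,x_0}(\phi+\epsilon\psi)$ submersive. The more serious gap is the stratification of $[m,m+q]\times X$ by $N(t,x_0)$: the set of $(t,x_0)$ where a given subset of the $\mathcal U_{j,t}$ contains $x_0$ is a manifold with corners, not a submanifold, and the further refinement by coincidences $\partial\phi_{j,t}(x_0)=\partial\phi_{j',t}(x_0)$ requires its own transversality argument before you can assert ``locally finite partition into submanifolds with $\dim S_i\le d$''. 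The paper's source-side formulation sidesteps both issues at once: by letting $(x_1,\dots,x_k)$ range freely in $\Omega^k$ it needs neither to track which sheets lie over a given $x_0$ nor to control merging, and the multijet theorem replaces your hands-on parametric transversality.
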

\end{tcolorbox}

\begin{tcolorbox}[breakable, enhanced]
\begin{proposition}\label{prop:monochrom_conclusion}
The set $\mathcal{C}^{T,irr}(\Sigma)$ is a residual subset of $\mathcal{C}^{T}(\Sigma)$.
\end{proposition}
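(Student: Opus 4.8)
The plan is to carry over, in the monochromatic setting and through the correspondence $u\mapsto\phi_u$ of Section~\ref{sec:IntroMono}, the scheme already used for Proposition~\ref{prop:polychrom_conclusion}; the only genuinely new feature is that here the exceptional set of base points disappears, so one gets rational independence at \emph{every} $x_0\in X$ and not only at almost every one.

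\textbf{Reduction to countably many open dense conditions.} I would first cover $\mathcal{C}^T(\Sigma)$ by the nested sets $\mathcal{O}_N:=\{u\in\mathcal{C}^T(\Sigma)\,:\,T_0(\phi_u)<N\}$, $N\in\mathbb{N}$, where $T_0(\phi_u)$ is the time given by Lemma~\ref{lem:LongTimeProj}; each $\mathcal{O}_N$ is open, by continuity of $\rho\mapsto E^-_\rho$ together with the stability of the transversality condition recorded in Section~\ref{sec:IntroMono}, and $\bigcup_N\mathcal{O}_N=\mathcal{C}^T(\Sigma)$. On $\mathcal{O}_N$ the flowed manifolds $\Phi^t(\Lambda_{\phi_u})$ are projectable and expanding for all $t\ge N$, and on a bounded time window the number $M(t)$ of branches in Proposition~\ref{prop:SumLag3} is bounded. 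For $N,q,P\in\mathbb{N}$, let $\mathcal{A}_{N,q,P}\subset\mathcal{O}_N$ be the set of $u$ such that for every $x_0\in X$, every $t\in[N,N+q]$ and every nonzero integer vector $n$ with $\|n\|_\infty\le P$ indexing the branches above $x_0$, one has $\sum_j n_j\xi_j^{t,x_0}\ne0$. Then $\mathcal{C}^{T,irr}(\Sigma)=\bigcup_N\bigcap_{q,P}\mathcal{A}_{N,q,P}$, and since being comeager in each member of an open cover implies being comeager in the union, it suffices to show that each $\mathcal{A}_{N,q,P}$ is open and dense in $\mathcal{O}_N$.

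\textbf{Openness.} This follows from the compactness of $X\times[N,N+q]$, the boundedness of $M(t)$ on that window, and the continuity in $(u,x_0,t)$ — in the Hausdorff sense, across the mergings of branches — of the multiset $\{(\xi_j^{t,x_0},\beta_j^{t,x_0})\}_j$, which is built into the construction of Section~\ref{sec:ClasDyn}: the complement of $\mathcal{A}_{N,q,P}$ in $\mathcal{O}_N$ is then the set of $u$ realizing one of finitely many rational relations at some point of a compact set, hence closed.

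\textbf{Density, and the main obstacle.} The local input is a submersion lemma. Fix $u_0\in\mathcal{O}_N$, $t\ge N$, $x_0\in X$ and a branch $j$; by \eqref{eq:GradientsAreCool} and Proposition~\ref{prop:SumLag3} the direction $\xi_j^{t,x_0}$ is obtained by following the geodesic flow backward from $(x_0,\xi_j^{t,x_0})$ until it meets $L_{u_0}$ at a point $(\bar y_j,v_{u_0}(\bar y_j))$ with $\bar y_j\in\Sigma$, and two distinct branches must meet $\Sigma$ at distinct points, since otherwise they would share the prescribed incoming velocity $v_{u_0}$ and hence have equal directions. Because $u_0\in\mathcal{C}^T(\Sigma)$ the flow-out is expanding, so a perturbation $u_0+\varepsilon w$ with $w$ supported in a small neighbourhood of $\bar y_j$ (still in $\mathcal{C}(\Sigma)\cap\mathcal{C}^T(\Sigma)$ for $\varepsilon$ small, and leaving every $\xi_{j'}^{t,x_0}$, $j'\ne j$, unchanged) moves $\xi_j^{t,x_0}$ submersively on the sphere $S^{d-1}$, the $d-1$ degrees of freedom of $v_u$ near $\bar y_j$ matching $\dim S^{d-1}$; in particular it breaks any prescribed relation $\sum_j n_j\xi_j^{t,x_0}=0$ in which $n_j\ne0$. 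To upgrade this to density of $\mathcal{A}_{N,q,P}$ one runs a Baire argument inside $\mathcal{O}_N$: write the complement as the finite union, over the relevant $n$ (finitely many, since $M(t)$ is bounded on the window and $\|n\|_\infty\le P$), of the sets $\mathcal{B}_{N,q,P,n}$ of $u$ realizing that relation at some $(x_0,t)\in X\times[N,N+q]$; each $\mathcal{B}_{N,q,P,n}$ is closed by the continuity above, and should have empty interior by a parametric transversality argument built on the submersion lemma. This last point is where the proof goes beyond a routine reprise of Proposition~\ref{prop:polychrom_conclusion} and is the main obstacle: one must show that the zero set in $X\times[N,N+q]$ of $(x_0,t)\mapsto\sum_j n_j\xi_j^{t,x_0}$ is generically \emph{empty}, not merely of zero measure in $x_0$. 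The reason this succeeds in the monochromatic case is that the source $\Sigma$ is a full hypersurface, so every geodesic branch reaching a given $x_0$ can be perturbed independently and cleanly through $u$, whereas in the polychromatic case the fixed domain $\Omega$ obstructs such perturbations near its boundary and only yields an exceptional $x_0$-set. Granting this, $\mathcal{B}_{N,q,P}$ has empty interior, $\mathcal{A}_{N,q,P}=\mathcal{O}_N\setminus\mathcal{B}_{N,q,P}$ is dense, and the proof is complete.
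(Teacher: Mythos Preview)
Your overall strategy—reduce to countably many open dense conditions and prove each one by a perturbation argument—is reasonable, but the proof as written has a genuine gap at precisely the point you flag as ``the main obstacle'', and your diagnosis of why the monochromatic case is better than the polychromatic one is not quite right.

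\textbf{The gap.} You correctly observe that, for fixed $(x_0,t)$, a compactly supported perturbation of $u$ near a single hitting point $\bar y_j$ moves the corresponding direction $\xi_j^{t,x_0}$ and so can break any fixed rational relation. But you then assert that ``a parametric transversality argument built on the submersion lemma'' upgrades this to generic \emph{emptiness} of the bad set in $X\times[N,N+q]$, without carrying out the necessary dimension count. This is exactly the content of the paper's Proposition~\ref{prop:irrational_monochromatic}: one shows that the bad locus $Z''_{\boldsymbol{n}}\subset (T^*\Sigma)^k$ is a countable union of submanifolds of codimension at least $kd-1$, and since $\dim\Sigma^k=k(d-1)<kd-1$ for $k\ge 2$, the multijet transversality theorem gives generic avoidance, not merely generic small dimension. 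The codimension estimate in turn rests on Lemma~\ref{lem:RanProj}, which bounds the rank drop of the hitting map on $\mathcal{S}_{\boldsymbol{n}}$; your submersion lemma is a cousin of this, but you never translate it into the inequality $k(d-1)<\mathrm{codim}\,Z''_{\boldsymbol{n}}$ that is actually needed. Without that count, your argument only shows that for each $(x_0,t)$ the bad $u$'s form a meager set, which does not by itself prevent the union over the $(d+1)$-dimensional family of $(x_0,t)$ from having nonempty interior.

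\textbf{The polychromatic comparison.} Your explanation (``$\Sigma$ is a full hypersurface, so every branch can be perturbed independently'') is backwards: $\Sigma$ being $(d-1)$-dimensional gives \emph{fewer} degrees of freedom in $u$ than $\Omega$ gives in $\phi$. What actually makes the monochromatic case stronger is the arithmetic of the dimension count: the section $(du,\dots,du)$ lives in a $k(d-1)$-dimensional base, while the bad set $Z''_{\boldsymbol n}$ has codimension $\ge kd-1$; since $k(d-1)<kd-1$ for $k\ge2$, transversality means \emph{no} intersection. In the polychromatic case the analogous count yields a one-dimensional intersection, which is why the paper only obtains rational independence for almost every $x_0$ there. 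The paper encodes the extra flexibility of the monochromatic setting not through the boundary of $\Omega$ but through the map $\Psi^k$ having domain $(\hat S^*\Sigma)^k\times\R^k$ (with $k$ independent times), which is then analyzed via Lemma~\ref{lem:RanProj}.

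\textbf{Minor issues.} Your claim that distinct branches hit $\Sigma$ at distinct points needs care: two branches at $(x_0,t)$ can correspond to the same $y\in\Sigma$ reached at different flight times, in which case your ``independent perturbation'' fails. The paper's stratification of $\mathcal{S}_{\boldsymbol n}$ into the collinear loci $\mathcal{S}^\sigma_{\boldsymbol n}$ handles exactly this degeneracy. Also, the openness of your sets $\mathcal{O}_N=\{u:T_0(\phi_u)<N\}$ is not immediate, since $T_0$ in Lemma~\ref{lem:LongTimeProj} is an existence time rather than a continuous function of $u$.
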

\end{tcolorbox}

\subsection{The polychromatic case: proof of Proposition \ref{prop:polychrom_conclusion}}\label{subsec:poly}

In this subsection we prove Proposition \ref{prop:polychrom_conclusion}. For the proof, we will need the following definition. Let $k$ be an integer no smaller than two. For each finite sequence of relative, non-zero integers, $\boldsymbol{n}=(n_1,\dots,n_k)$, let
\begin{align*}
\calT_{\boldsymbol{n}}=\Big{\{}((x,\xi_1),&\dots,(x,\xi_k))\, :\, x\in X,\, \xi_1,\dots,\xi_k\in T^*_xX\setminus\{0\},\\
&\text{such that the $\xi_j$ are not all equal and }  \sum_{j=1}^kn_j\xi_j=0\Big{\}}\, .
\end{align*}
Proposition \ref{prop:polychrom_conclusion} will be a consequence of the following result.
\begin{tcolorbox}[breakable, enhanced]
\begin{lemma}[Rationally independent polychromatic phases are generic]\label{lem:polychrom_irr}
Let $k\geq 2$, $\boldsymbol{n}\in \N^k$.
There exists a residual subset $E_{\boldsymbol{n}}\subset C^\infty(\Omega)$ such that for each $\phi\in E_{\boldsymbol{n}}$,
the set 
$$\mathcal{O}_{\boldsymbol{n}} := \left\{ (x_1,\dots,x_k,t) \in \Omega^k\times \R \text{ such that } (\Phi^{t}(x_1,\partial \phi(x_1)), \dots , \Phi^{t}(x_k,\partial \phi(x_k)))\in\calT_{\boldsymbol{n}} \right\}$$
is a countable union of one dimensional submanifolds.
\end{lemma}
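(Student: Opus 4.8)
The plan is to realize $\mathcal{O}_{\boldsymbol n}$ as the common zero set of finitely many maps depending on $\phi$, and to apply a parametric transversality (Thom) argument to show that, for $\phi$ in a residual set, the defining map is transverse to the relevant submanifold; dimension counting will then give that the zero set is a countable union of $1$-dimensional submanifolds. First I would fix $\boldsymbol n = (n_1,\dots,n_k)$ and consider the evaluation map
\[
F_\phi : \Omega^k \times \R \longrightarrow TX, \qquad F_\phi(x_1,\dots,x_k,t) = \sum_{j=1}^k n_j\, \xi_j(x_j,t),
\]
where $(y_j,\xi_j(x_j,t))$ is shorthand for $\Phi^t(x_j,\partial\phi(x_j))$, this being well-defined near the ``diagonal'' locus where all base points $y_j$ coincide (which is where $\calT_{\boldsymbol n}$ lives; away from that locus the condition is vacuous and we may restrict attention to a neighbourhood of the diagonal and use a chart). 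On the open subset where the base points agree, $F_\phi$ takes values in a fixed fibre $T^*_yX\cong\R^d$, and $\mathcal{O}_{\boldsymbol n}$, intersected with this region and with the complement of the ``all $\xi_j$ equal'' locus, is exactly $F_\phi^{-1}(0)$. Since $\Omega^k\times\R$ has dimension $dk+1$ and $0\in\R^d$ has codimension $d$, if $F_\phi$ is transverse to $\{0\}$ the preimage is a submanifold of dimension $dk+1-d = d(k-1)+1$; but I actually want dimension $1$, so transversality to $\{0\}$ alone is not enough — I must also exploit that the $k$ base points are constrained to be equal. The right formulation is: work on the submanifold $\Delta\subset\Omega^k$ of dimension $d+$ (something) cut out by $y_1=\dots=y_k$... this requires care, and in fact the cleanest route is to parametrize differently.

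A cleaner parametrization: the condition is that there exist $y\in X$, $t\in\R$, and $(x_1,\dots,x_k)$ with $\Phi^{-t}(y,\eta_j)=(x_j,\partial\phi(x_j))$ for some covectors $\eta_1,\dots,\eta_k\in T^*_yX$ satisfying $\sum n_j\eta_j=0$ and not all equal. So I would instead take as parameters $(y,\eta_1,\dots,\eta_k,t)$ living in the codimension-$d$ submanifold $\Sigma_{\boldsymbol n}\subset T^*X\times(\R^d)^{k-1}\times\R$ where $\sum n_j\eta_j=0$ (using $\eta_k$ eliminated), subject to $\eta_j\neq\eta_{j'}$ for some pair; this has dimension $2d + d(k-1) + 1 - d = dk+1$. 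For each such tuple, define $x_j = \pi\Phi^{-t}(y,\eta_j)$, and consider the map
\[
G_\phi : \Sigma_{\boldsymbol n} \longrightarrow \prod_{j=1}^k T^*X, \qquad G_\phi(y,\eta_\bullet,t) = \bigl( (x_j, \partial\phi(x_j) - \xi_j(y,\eta_j,t)) \bigr)_{j=1}^k,
\]
where $\xi_j(y,\eta_j,t)$ is the covector component of $\Phi^{-t}(y,\eta_j)$. Then $\mathcal{O}_{\boldsymbol n}$ corresponds to $G_\phi^{-1}(Z)$ where $Z=\prod_j(\text{zero section})$ has codimension $dk$. If $G_\phi\pitchfork Z$, then $G_\phi^{-1}(Z)$ is a submanifold of $\Sigma_{\boldsymbol n}$ of dimension $dk+1-dk = 1$, exactly as desired; and a $1$-dimensional manifold is a countable disjoint union of connected $1$-manifolds, hence a countable union of $1$-dimensional submanifolds, and mapping down to $\Omega^k\times\R$ preserves this (the map to $(x_1,\dots,x_k,t)$ is an immersion away from a lower-dimensional set, or one simply covers by countably many charts).

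The parametric transversality argument is then standard Thom–Abraham: I would set up the ``universal'' map $\widehat G : \Sigma_{\boldsymbol n} \times C^\infty(\Omega) \to \prod_j T^*X$, $(s,\phi)\mapsto G_\phi(s)$, show it is transverse to $Z$ — which is where the real work is: at a point with $G_\phi(s)\in Z$, I must show that by perturbing $\phi$ near the (distinct, hence, at least two distinct among them) base points $x_j$ one can move the components $\partial\phi(x_j)$ independently; distinctness of at least two of the $\eta_j$ translates via the diffeomorphism $\Phi^{-t}$ into distinctness of the corresponding $x_j$ (this uses that $\Phi^{-t}$ is a bijection, i.e. \cite[Theorem 4.8.1]{Jost} lifted, or on $X$ itself injectivity up to the subtlety that distinct $\eta_j$ over the same $y$ can still flow to the same $x_j$ only if they were equal — need to check this, it is the one genuinely delicate point), and then bump functions supported near distinct points give independent control of the jets of $\phi$; then invoke the infinite-dimensional parametric transversality theorem (e.g. as in Abraham's transversality theorem, valid because $C^\infty(\Omega)$ with its Fréchet topology is a Baire space and the relevant maps are smooth in the appropriate sense — one may need to work with $C^\infty$ as a limit of $C^r$ and diagonalize) to conclude that $\{\phi : G_\phi\pitchfork Z\}$ is residual; call it $E_{\boldsymbol n}$. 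The main obstacle, as flagged, is precisely verifying that the base points $x_j$ corresponding to distinct $\eta_j$ are genuinely distinct (so that independent perturbations are possible) — if two coincide while the $\eta$'s differ, the fibre over that $y$ carries two covectors but after flowing by $\Phi^{-t}$ they would have to hit the same point with the same $\partial\phi$-value, which over-determines things; one handles the coincidence locus as a lower-dimensional stratum and argues separately, or notes it contributes only to a lower-dimensional (hence still countable-union-of-manifolds, in fact a $0$- or negative-dimensional, i.e. generically empty) piece. Finally, I would intersect over all $\boldsymbol n$ (countably many) to get a single residual set, though for this lemma only a single $\boldsymbol n$ is needed; the reduction of Proposition \ref{prop:polychrom_conclusion} to this lemma (via Proposition \ref{prop:SumLag3}, which presents the propagated phases locally and shows the $\xi_j^{t,x_0}$ vary nicely, plus Baire) is carried out afterwards.
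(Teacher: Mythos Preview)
Your approach is workable but takes a different and more laborious route than the paper's. The paper does \emph{not} set up a Thom--Abraham parametric transversality argument by hand. Instead it defines the submersion
\[
\Psi : (T^*\Omega)^k\times\R \to (T^*X)^k,\qquad ((x_j,\xi_j)_j,t)\mapsto (\Phi^t(x_j,\xi_j))_j,
\]
so that $\Psi^{-1}(\calT_{\boldsymbol n})$ is a submanifold of dimension $kd+1$. The crucial (and short) geometric computation is that the projection $\mathrm{pr}:(T^*\Omega)^k\times\R\to(T^*\Omega)^k$ is an immersion along $\Psi^{-1}(\calT_{\boldsymbol n})$: a tangent vector in the $t$-direction is sent by $d\Psi$ to $((\sigma_x\xi_1,*),\dots,(\sigma_x\xi_k,*))$, whose horizontal parts cannot all agree since the $\xi_j$ are not all equal, hence it is not tangent to $\calT_{\boldsymbol n}$. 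Thus $P_{\boldsymbol n}:=\mathrm{pr}(\Psi^{-1}(\calT_{\boldsymbol n}))$ is a countable union of $(kd+1)$-dimensional submanifolds of $(T^*\Omega)^k$, and one then invokes the \emph{multijet transversality theorem} (Golubitsky--Guillemin) to conclude that for residual $\phi$ the section $(d\phi,\dots,d\phi)$ is transverse to $P_{\boldsymbol n}$; a dimension count gives a $1$-dimensional intersection, and the immersion claim lifts this back to $\mathcal O_{\boldsymbol n}$.

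What this buys over your plan: the multijet theorem is precisely the off-the-shelf tool that packages the ``independent bump-function perturbations at distinct base points'' argument you are sketching, together with the Baire bookkeeping. The delicate point you flag (whether distinct $\eta_j$ give distinct $x_j$, and what to do on coincidence strata) is exactly the content of the multijet machinery, which works over the configuration space of \emph{distinct} $k$-tuples; the paper simply cites it. Your approach, by contrast, forces you to reprove that machinery in situ and to stratify by coincidence type. Note also that your side observation resolves your own worry: on the locus $G_\phi^{-1}(Z)$ one has $\xi_j=\partial\phi(x_j)$, so equal $x_j$'s force equal $(x_j,\xi_j)$'s and hence (since $\Phi^{-t}$ is a diffeomorphism) equal $\eta_j$'s; thus distinct $\eta_j$'s do give distinct $x_j$'s there, and the remaining coincidence strata (some $\eta_j$ equal) reduce to the same problem for a smaller $k$. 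So your route can be completed, but the paper's is shorter and avoids the stratification entirely by citing \cite{GG}.
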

\end{tcolorbox}
Before proving Lemma \ref{lem:polychrom_irr} let us deduce Proposition \ref{prop:polychrom_conclusion} from it.
\begin{proof}[Proof of Proposition \ref{prop:polychrom_conclusion}]
A countable intersection of residual sets is still a residual set. Hence, thanks to the previous lemma, we know that there exists a residual subset $E\subset C^\infty(\Omega)$ such that for all $\phi\in E$, the following holds. For all $k\geq 2$ and all $\boldsymbol{n}\in \N^k$, the sets $\mathcal{O}_{\boldsymbol{n}}$ are countable unions of one dimensional submanifolds. Let $\phi \in E$, and let $k\geq 2$. We shall write $\Psi_\phi : \Omega^k \times \R \ni (x_1,\dots, x_k,t) \mapsto \pi_X \left(\Phi^t(x_1, \partial \phi(x_1) )\right)\in X$. Then, for all $\boldsymbol{n}\in \N^k$, $\Psi_\phi (\mathcal{O}_{\boldsymbol{n}})$ has measure zero. 

{\color{black}We claim that} the set  $\Psi_\phi (\mathcal{O}_{\boldsymbol{n}})$ is exactly the set of $x\in X$ such that there exists $x_1,\dots, x_k\in X$, $t\in\R$ and $\xi_1,\dots, \xi_k \in T^*_x X$ such that $\Phi^t(x_j, \partial \phi(x_j)) = (x,\xi_j)$ for all $j=1,\dots, k$ and $\sum_{j=1}^k n_j \xi_j= 0$. Indeed, by the discussion after Proposition \ref{prop:SumLag3}, the directions $\xi_1,\dots, \xi_k$ are all different{\color{black}, so the claim follows from definition of $\calT_{\boldsymbol{n}}$}. 

{\color{black}All in all}, if $\phi\in E\cap \mathcal{E}^T_{(\lambda_1,\lambda_2)}(\Omega)$, we have that for almost every $x_0\in X$, the vectors $(\xi_j^{t,x_0})_{j=1,\dots,N_{x_0}(t)}$ are rationally independent for all $t\geq T_0(\phi)$. This is precisely saying that $E\cap \mathcal{E}^T_{(\lambda_1,\lambda_2)}(\Omega)\subset \mathcal{E}^{T,irr}_{(\lambda_1,\lambda_2)}(\Omega)$, which proves the result.
\end{proof}

\begin{proof}[Proof of Lemma \ref{lem:polychrom_irr}]
Let us write
\begin{align*}
\Psi:(T^*\Omega)^k\times\R&\rightarrow (T^*X)^k\\
((x_1,\xi_1),\dots,(x_k,\xi_k),t)&\mapsto \left( \Phi^{t}(x_1,\xi_1),\dots,\Phi^{t}(x_k,\xi_k)\right)\, .
\end{align*}

Since $\Phi^{t}$ is a diffeomorphism, the map $\Psi$ is a submersion. Moreover, $\calT_{\boldsymbol{n}}$ is a submanifold of $(T^*X)^k$ of codimension $kd$. Therefore, $\Psi^{-1}(\calT_{\boldsymbol{n}})$ is a submanifold of $(T^*\Omega)^k\times\R$ of codimension $kd$ (and hence of dimension $kd+1$). Let $\textup{pr}_{(T^*\Omega)^k}$ denote the projection of $(T^*\Omega)^k\times\R$ onto $(T^*\Omega)^k$. We claim that

\begin{equation}\label{eq:claim}
\textup{dim}(\textup{Ker}(d\textup{pr}_{(T^*\Omega)^k)})\cap T\Psi^{-1}(\calT_{\boldsymbol{n}}))=0.
\end{equation}

The proof of (\ref{eq:claim}) is a geometric argument which we postpone to the end of the proof of the present lemma. By (\ref{eq:claim}), $P_{\boldsymbol{n}}=\textup{pr}_{(T^*\Omega)^k}(\Psi^{-1}(\calT_{\boldsymbol{n}}))$ is a countable union of submanifolds of $(T^*\Omega)^k$ of dimension $kd+1$. By the multijet transversality theorem (see Theorem 4.13, Chap.2 of \cite{GG}) the set $E_{\boldsymbol{n}}$ of phases $\phi\in\calE_{(\lambda_1,\lambda_2)}(\Omega)$ such that the section of $(T^*\Omega)^k$ defined by $(d\phi,\dots,d\phi)$ is transversal to $P_{\boldsymbol{n}}$ is a residual subset of $C^\infty(\Omega)$. In particular, if $\phi \in E$, the intersection of the section $(d\phi,\dots,d\phi)$ with $P_{\boldsymbol{n}}$ is a countable union of submanifolds of dimension 1 of $(T^*\Omega)^k$.
Using (\ref{eq:claim}) again, we deduce that $\mathcal{O}_{\boldsymbol{n}}$ is a countable union of one dimensional submanifolds of $\Omega^k\times \R$.\\

To conclude, we now prove \eqref{eq:claim}.
Firstly, for all $p\in (T^*\Omega)^k\times\R$, $\textup{Ker}(d_p\textup{pr}_{(T^*\Omega)^k})=\{0\}\times\R$. In order to prove the statement of the claim, we must therefore study the image by $d_p\Psi$ of perturbations of $p$ along the $t$ variable Let $\tau =(0,\dots,0, t)\in \{0\}\times\R^k\subset T_p((T^*\Omega)^k\times\R^k)$ be different from zero. We want to show that $\tau \notin T_p \Psi^{-1}(\mathcal{T}_{\boldsymbol{n}})$. Since $\Psi$ is a submersion, this amounts to proving that $d_p\Psi(\tau) \notin T_{\Psi(p)}\calT_{\boldsymbol{n}}$. Let us write $d_p\Psi(\tau)=((v_1,w_1),\dots,(v_k,w_k))\in (T_x^*X)^k$. Then, we have $v_j= t \sigma_x(\xi_j)$ for $j=1,\dots,k$, where $\sigma_x : T_x^*X\longrightarrow T_xX$ is the identification of the cotangent and tangent spaces induced by the Riemannian metric. 
But on the other hand, for $d_p\Psi(\tau)$ to belong to $T_{\Psi(p)}\calT_{\boldsymbol{n}}$, the vectors $v_k$ would have to be all equal.  Since the $\xi_j$ are not all equal, this would imply $\tau=0$. Hence,
\[
\textup{Ker}(d_p\textup{pr}_{(T^*\Omega)^k)})\cap T_p\Psi^{-1}(\calT_{\boldsymbol{n}})=\{0\}\times\R\cap d_p\Psi^{-1}(T_p\calT_{\boldsymbol{n}})=\{0\}
\]
as announced.
\end{proof}

\subsection{The monochromatic case: proof of Proposition \ref{prop:monochrom_conclusion}}\label{subsec:mono}

The aim of this section is to prove Proposition \ref{prop:monochrom_conclusion}. As in section \ref{subsec:poly}, we will use an intermediate result, Proposition \ref{prop:irrational_monochromatic} below, for which we now introduce certain geometric objects.\\

Let us denote by $\hat S^*\Sigma$ for the set of $(x,\xi)\in S^*X$ with $x\in \Sigma$ and $\xi\notin T_x^*\Sigma$. If $k\in \N$, consider the map
$$\Psi^k : \begin{cases}  \hat{S}^*\Sigma^k  \times \R^k  &\longrightarrow (S^*X)^k\\
((x_1,\xi_1),\dots, (x_k, \xi_k),t_1,\dots,t_k) &\mapsto \left(\Phi^{t_1}(x_1,\xi_1),\dots, \Phi^{t_k}(x_k,\xi_k)\right).
\end{cases}$$
The differential of $\Psi^k$ at any point is invertible, so, by the inverse function theorem, the image of $\Psi^k$ is an open set, which we will call the \emph{reachable set}, and denote by $\mathcal{R}^k(\Sigma)\subset (S^*X)^k$. For each finite sequence of non-zero relative integers $\boldsymbol{n}=(n_1,\dots,n_k)$ with $k\geq 2$ let
\[
\calS_{\boldsymbol{n}}=\left\{((x,\xi_1),\dots,(x,\xi_k))\st x\in X,\ \xi_1,\dots,\xi_k\in S^*_xX,\ \sum_{i=1}^k n_i\xi_i=0\right\}\subset (S^*X)^k\, .
\]

Let $\pi : (\hat{S}^*\Sigma)^k \times \R^k \longrightarrow (\hat{S}^*\Sigma)^k$ denote the projection on the first component.
We will write 
\begin{align*}
Z_{\boldsymbol{n}} &:= (\Psi^k)^{-1} \left(\calS_{\boldsymbol{n}} \cap \mathcal{R}^k(\Sigma) \right)\subset (\hat{S}^*\Sigma)^k \times \R^k \\
Z'_{\boldsymbol{n}} &:= \pi(Z_{\boldsymbol{n}})\\
Z''_{\boldsymbol{n}} &:=p_\Sigma^k (Z'_{\boldsymbol{n}}) \subset (T^*\Sigma)^k,
\end{align*}
where, if $(x,\xi)\in \hat{S}^*\Sigma$, we write $p_\Sigma(x,\xi) = (x,\zeta)\in T^*\Sigma$, where $\zeta$ is the orthogonal projection of $\xi$ on $T_x^*\Sigma$ and $p_\Sigma^k$ acts as $p_\Sigma$ on each coordinate of $\hat{S}^*\Sigma$. In particular, $|\zeta|<1$. Proposition \ref{prop:monochrom_conclusion} will follow from Proposition \ref{prop:irrational_monochromatic} below.

\begin{tcolorbox}[breakable, enhanced]
\begin{proposition}\label{prop:irrational_monochromatic}
Let $k\in \N \setminus \{1\}$ and $\boldsymbol{n}\in \N^k$. There exists a residual subset $E_{\boldsymbol{n}}\subset \calC(\Sigma)$ such that for all $u\in E_{\boldsymbol{n}}$, the set 
$$\left\{(x_1,\dots, x_k)\in \Sigma^k \text{ such that } ((x_1, \partial u(x_1)),\dots,(x_k, \partial u(x_k)) \in Z''_{\boldsymbol{n}} \right\}.$$
is empty.
\end{proposition}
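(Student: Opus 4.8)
\textbf{The plan} is to mimic the proof of Lemma~\ref{lem:polychrom_irr}. Fix $k\geq 2$, $\boldsymbol n\in\N^k$, and, for $u\in\calC(\Sigma)$, consider the set $B_u:=\{(x_1,\dots,x_k)\in\Sigma^k : ((x_1,\partial u(x_1)),\dots,(x_k,\partial u(x_k)))\in Z''_{\boldsymbol n}\}$ whose emptiness we want. This is the preimage of $Z''_{\boldsymbol n}$ under the map $j_u:\Sigma^k\ni(x_l)_l\mapsto((x_l,\partial u(x_l)))_l\in(T^*\Sigma)^k$, and over the open dense set $\Sigma^{(k)}$ of $k$-tuples with pairwise distinct entries, $j_u$ is the $k$-fold $1$-jet extension $j^1_ku$ of $u$ followed by the value-forgetting projection $J^1_k(\Sigma,\R)\to(T^*\Sigma)^k$. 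Therefore, \emph{it suffices to exhibit $Z''_{\boldsymbol n}$ as a countable union of submanifolds $W_i\subset(T^*\Sigma)^k$ with $\dim W_i<\dim\Sigma^k=k(d-1)$}: the multijet transversality theorem (\cite[Chap.~2, Thm.~4.13]{GG}) then provides a residual set of $u$ for which $j^1_ku$ is transverse to the lift $\widetilde W_i$ of every $W_i$, and transversality against a submanifold of codimension $>\dim\Sigma^k$ forces disjointness, i.e.\ $B_u\cap\Sigma^{(k)}=\emptyset$. Tuples with coinciding entries will be handled separately (see below).

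\textbf{Controlling $\dim Z''_{\boldsymbol n}$} is the heart of the matter, and here one uses crucially that the differential of $\Psi^k$ is everywhere invertible, so that $\Psi^k$ is a local diffeomorphism: consequently $Z_{\boldsymbol n}=(\Psi^k)^{-1}(\calS_{\boldsymbol n}\cap\mathcal R^k(\Sigma))$ is, stratum by stratum, a submanifold of the same codimension as $\calS_{\boldsymbol n}$ inside $(S^*X)^k$. I would stratify $\calS_{\boldsymbol n}$ according to the rank of the family $(\xi_1,\dots,\xi_k)$: where that rank is $\geq 2$ the relation $\sum_l n_l\xi_l=0$ is nondegenerate and $\calS_{\boldsymbol n}$ there has codimension $kd$, while on the collinear stratum (all $\xi_l=\varepsilon_l\xi$, which is nonempty only when $\sum_l\varepsilon_l n_l=0$ for some $\varepsilon=(\varepsilon_1,\dots,\varepsilon_k)\in\{+1,-1\}^k$) the stratum has dimension only $2d-1$. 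I would then push forward by $\pi$ and $p^k_\Sigma$: the latter is a finite covering onto its image, hence dimension-preserving, while $\pi$ (forgetting the $k$ times) drops the dimension by one along the collinear part, since there all flowed points $\Phi^{t_l}(x_l,\xi_l)$ lie on a single geodesic and the whole one-parameter family $(t_l+\varepsilon_l r)_l$, $r\in\R$, lies in a single fibre of $\pi$. For $k\geq 3$ this already yields $\dim Z''_{\boldsymbol n}\leq k(d-1)-1$; for $k=2$ the same bound has to come from the observation that the two crossings of the common geodesic with $\Sigma$ impose an extra condition that the polychromatic count of Lemma~\ref{lem:polychrom_irr} does not see --- this is the specifically monochromatic input, reflecting $|\partial\phi|\equiv 1$ (equivalently, after normalising, $\xi_l\in\hat S^*\Sigma$) together with transversality to the stable directions.

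\textbf{Assembling the proof.} Given the decomposition of $Z''_{\boldsymbol n}$ above, the multijet transversality theorem produces the residual set $E_{\boldsymbol n}\subset\calC(\Sigma)$ on which $B_u\cap\Sigma^{(k)}=\emptyset$. For tuples with, say, $x_i=x_j$, membership in $Z''_{\boldsymbol n}$ forces $\partial u(x_i)=\partial u(x_j)$, and collapsing the blocks of coinciding indices turns the relation $\sum_l n_l\xi_l=0$ into one with strictly fewer vectors --- an instance of Proposition~\ref{prop:irrational_monochromatic} for a smaller $k$, so the diagonal case follows by induction on $k$ (the base case $k=2$ being the collinear computation just discussed). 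Finally, intersecting over the countably many pairs $(k,\boldsymbol n)$ with $k\geq 2$, $\boldsymbol n\in\N^k$, of the residual sets so obtained gives the residual subset asserted in the statement.

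\textbf{The main obstacle} is obtaining the \emph{strict} inequality $\dim Z''_{\boldsymbol n}<k(d-1)$. With only $\dim Z''_{\boldsymbol n}\leq k(d-1)$ --- which is what the naive polychromatic bookkeeping gives --- transversality would leave a $0$-dimensional, generically non-empty, residue of bad tuples; this is precisely why Theorem~\ref{th:Pointwise} is stated for \emph{almost every} $x_0$ only. Extracting the missing unit of codimension from the monochromaticity, and carrying out the stratified dimension count rigorously at the collinear and diagonal degenerations --- where that codimension is least visible --- is where the real work of the proof is concentrated.
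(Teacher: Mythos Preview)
Your plan is the paper's plan: reduce to a codimension bound on $Z''_{\boldsymbol n}\subset (T^*\Sigma)^k$ and invoke the multijet transversality theorem, after stratifying $\calS_{\boldsymbol n}$ into the collinear locus $\cup_\sigma\calS_{\boldsymbol n}^\sigma$ and its complement. Your observation that the time-forgetting projection has one-dimensional kernel along the collinear stratum and trivial kernel off it is exactly the content of the paper's Lemma~\ref{lem:RanProj} (there phrased via the ``hitting map'' $H_q=\pi\circ(\Psi^k)^{-1}_{\mathrm{loc}}$). So there is no genuinely different route here.

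The gap is the one you yourself flag, and your sketch does not close it. Your sentence ``for $k\geq 3$ this already yields $\dim Z''_{\boldsymbol n}\leq k(d-1)-1$'' is not supported by your own bookkeeping: on the rank~$\geq 2$ stratum you give $\operatorname{codim}\calS_{\boldsymbol n}=kd$ in $(S^*X)^k$, hence $\dim Z_{\boldsymbol n}=k(d-1)$, and since the kernel of $\pi$ is trivial there this dimension is \emph{preserved}, giving $\dim Z''_{\boldsymbol n}=k(d-1)$ on that stratum --- no unit is gained, for any $k$. The one-dimensional drop you correctly identify occurs only on the collinear piece, which for $k\geq 3$ is already lower-dimensional and so does not help with the top stratum. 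The paper closes this by asserting $\operatorname{codim}(Z'_{\boldsymbol n})\geq kd-1$ in $(\hat S^*\Sigma)^k$ and then using $k(d-1)<kd-1$ for $k\geq 2$; that assertion is precisely where the extra monochromatic codimension has to come from, and it is exactly what you leave as ``the main obstacle''. You should check that step carefully rather than hope it falls out of the stratification.

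Two minor points. The countable intersection over all $(k,\boldsymbol n)$ belongs to the proof of Proposition~\ref{prop:monochrom_conclusion}, not of Proposition~\ref{prop:irrational_monochromatic}, which is stated for a single $(k,\boldsymbol n)$. And your induction on $k$ for tuples with coinciding entries is unnecessary in the intended application: equal $y_i=y_j\in\Sigma$ give equal $(y,v_u(y))$, hence the same geodesic and the same direction at $x_0$, so only $\Sigma^{(k)}$ is ever needed in the proof of Proposition~\ref{prop:monochrom_conclusion}.
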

\end{tcolorbox}

Before proving Proposition \ref{prop:irrational_monochromatic}, let us show that it implies Proposition \ref{prop:monochrom_conclusion}.
\begin{proof}[Proof of Proposition \ref{prop:monochrom_conclusion}]
By Proposition \ref{prop:irrational_monochromatic}, $E= \bigcap_{k\in \N\setminus\{1\}, \boldsymbol{n}\in \N^k} E_{\boldsymbol{n}}$ is a residual set. If $u\in E$, then for all $k\in \N\setminus\{1\}$, for all $x_1,\dots, x_k\in \Omega_u$ and all $t_1,\dots, t_k$, if there exists $x\in X$ such that for all $j=1,\dots,k$, we have $\Phi^{t_j}(x_j, \partial \phi_u(x_j)) = (x,\xi_j)$, then the $\xi_j$ are rationally independent. Indeed, by \eqref{eq:phi_from_u},
\[
((x_1,\partial u(x_1),\dots,(x_k,\partial \phi_u(x_k))\in Z''_{\boldsymbol{n}}\Leftrightarrow ((x_1,\partial u(x_1),\dots,(x_k,\partial \phi_u(x_k))\in Z'_{\boldsymbol{n}}
\]
Therefore, if $u\in E \cap \calC^T(\Sigma)$, we have $u\in \mathcal{C}^{T,irr}(\Sigma)$.
\end{proof}

\begin{proof}[Proof of Proposition \ref{prop:irrational_monochromatic}]
By the inverse function theorem, for any $q\in (\hat{S}^*\Sigma)^k \times \R^k$, there exists a neighbourhood $V_q$ of $\Psi^k(q)$ and a map  $\mathcal{I}^k_q : V_q \longrightarrow  (\hat{S}^*\Sigma)^k \times \R^k$ such that $\Psi^k \circ \mathcal{I}^k_q \equiv \mathrm{Id}$ on $V_q$, and $\mathcal{I}^k_q (\Psi^k(q)) = q$. We define the \emph{hitting map} by $H_q = \pi \circ \mathcal{I}^k_q : V_q \longrightarrow (S^*X)^k$. We would like to show that $Z'_{\boldsymbol{n}}$ is a countable union of submanifolds of $(S^*X|_\Sigma)^k$ of codimension at least $kd-1$. To this end, we note that $Z'_{\boldsymbol{n}}$ can be written as the union of $H_q (\mathcal{S}_{\boldsymbol{n}} \cap V_q)$ for a countable family of points $q\in Z_{\boldsymbol{n}}$. Hence, it suffices to study the structure of $H_q (\mathcal{S}_{\boldsymbol{n}} \cap V_q)$ for a given $q\in Z_{\boldsymbol{n}}$.\\

If $\sigma=(\sigma_1,\dots,\sigma_k)\in\{+1,-1\}^k$, we define
$$\calS^\sigma_{\boldsymbol{n}}  := \left\{ ((x,\xi_1),\dots,(x,\xi_k))\in\calS_{\boldsymbol{n}} \text{ such that $\forall i,j\in\{1,\dots,k\}$, we have $\sigma_i\xi_i=\sigma_j\xi_j$} \right\}.$$
Then, $\calS^\sigma_{\boldsymbol{n}}$ is a smooth submanifold of $\calS_{\boldsymbol{n}}$, and $\calS_{\boldsymbol{n}}\setminus\left(\cup_{\sigma\in\{+1,-1\}^k}\calS^\sigma_{\boldsymbol{n}}\right)$ is open in $\calS_{\boldsymbol{n}}$. We shall write
$$Z_{\boldsymbol{n}}^\sigma := (\Psi^k)^{-1} \left(\calS^\sigma_{\boldsymbol{n}} \cap (\mathcal{R}(\Sigma))^k\right).$$

The following lemma can be deduced from a simple geometric argument which we will give at the end of the proof.
\begin{tcolorbox}
\begin{lemma}\label{lem:RanProj}
We have
\begin{align*}
&\forall \sigma\in\{+1,-1\}^k,~  \forall q \in Z_{\boldsymbol{n}}^\sigma,~~ &\dim(\textup{Ker}(d_{\Psi^k(q)} H_q)\cap T_{\Psi^k(q)}\calS_{\boldsymbol{n}}^\sigma)=1,\\
&\forall q \in Z_{\boldsymbol{n}}  \setminus\left(\cup_{\sigma\in\{+1,-1\}^k}Z_{\boldsymbol{n}}^\sigma\right),  ~~&\dim(\textup{Ker}(d_{\Psi^k(q)} H_q)\cap T_{\Psi^k(q)}\calS_{\boldsymbol{n}})=0.
\end{align*}
\end{lemma}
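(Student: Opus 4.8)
The plan is to prove Lemma \ref{lem:RanProj} by a direct geometric analysis of the kernel of $d H_q$, which by construction coincides with the tangent directions to the fibers of $\Psi^k$ read through the inverse chart $\mathcal{I}^k_q$. Concretely, since $\Psi^k \circ \mathcal{I}^k_q \equiv \mathrm{Id}$, a vector $w \in T_{\Psi^k(q)}(S^*X)^k$ lies in $\mathrm{Ker}(d_{\Psi^k(q)} H_q)$ if and only if its preimage $d\mathcal{I}^k_q(w) \in T_q((\hat{S}^*\Sigma)^k \times \R^k)$ lies in $\mathrm{Ker}(d_q\pi)$, i.e.\ is of the form $(0,\dots,0,s_1,\dots,s_k)$ with $s_i \in \R$. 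Pushing such a vector forward through $d_q\Psi^k$ gives $w = (s_1 X_{H}(\Phi^{t_1}(x_1,\xi_1)), \dots, s_k X_H(\Phi^{t_k}(x_k,\xi_k)))$, where $X_H$ denotes the generator of the geodesic flow on $S^*X$. So the intersection we must compute is the set of such tuples of (rescaled) geodesic-vector-field values that are moreover tangent to $\calS_{\boldsymbol{n}}$ (resp.\ $\calS_{\boldsymbol{n}}^\sigma$) at the common base point $\rho_i := \Phi^{t_i}(x_i,\xi_i) = (x,\xi_i)$.

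First I would record the structure of $T\calS_{\boldsymbol{n}}$. A curve in $\calS_{\boldsymbol{n}}$ keeps all $k$ footpoints equal to a common moving point and keeps the covector relation $\sum n_i \xi_i = 0$; differentiating, a tangent vector $(v_1,\dots,v_k)$ with $v_i \in T_{\rho_i}S^*X$ lies in $T\calS_{\boldsymbol{n}}$ iff all the base components $d\pi(v_i)$ coincide (call the common value $\dot x$) and the covector components satisfy $\sum_i n_i \dot\xi_i = 0$ once transported to $T_x^*X$. Now impose $v_i = s_i X_H(\rho_i)$. The base component of $X_H(\rho_i)$ is $\sigma_x(\xi_i)$, the vector dual to $\xi_i$; so the condition ``all base components equal'' forces $s_i \sigma_x(\xi_i) = s_j \sigma_x(\xi_j)$ for all $i,j$. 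If the $\xi_i$ are not all equal up to sign — which is exactly the case $q \notin \cup_\sigma Z_{\boldsymbol{n}}^\sigma$ — then two of the $\sigma_x(\xi_i)$ are linearly independent, which forces all $s_i = 0$, giving the second displayed identity (dimension $0$). If instead $q \in Z_{\boldsymbol{n}}^\sigma$, so $\sigma_i\xi_i = \sigma_j\xi_j =: \xi$ for all $i,j$, then $s_i\sigma_x(\xi_i) = s_i\sigma_i\sigma_x(\xi)$, and ``all equal'' becomes $s_i\sigma_i = s_j\sigma_j$; together with the covector equation (which on $S^*X$, where the neutral direction $X_H$ contributes $\dot\xi_i$ proportional to the curvature term but stays in the fixed line, is automatically compatible since $\sum n_i \xi_i = 0$ is preserved by the flow) this leaves a one-parameter family $s_i = \sigma_i s$, hence dimension exactly $1$, which is the first displayed identity. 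I would double-check the covector side of the computation using the matrix form \eqref{eq:Matrix} of $d\Phi^t$ and the fact that $\calS_{\boldsymbol{n}}$, being flow-invariant in an appropriate sense, has $X_H$-tangency built in; this is the one spot needing care rather than being purely formal.

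With Lemma \ref{lem:RanProj} in hand I would finish the proof of Proposition \ref{prop:irrational_monochromatic} as follows. Stratifying $\calS_{\boldsymbol{n}} = \left(\calS_{\boldsymbol{n}} \setminus \cup_\sigma \calS_{\boldsymbol{n}}^\sigma\right) \sqcup \bigsqcup_\sigma \calS_{\boldsymbol{n}}^\sigma$, one checks dimensions: $\calS_{\boldsymbol{n}}$ has codimension $(k-1)d$ in $(S^*X)^k$ (the constraints being ``footpoints equal'', codimension $(k-1)d$ on the base, plus the single linear relation $\sum n_i\xi_i=0$ on covectors, but this relation cuts down $d$ more dimensions — so in fact codimension $(k-1)d + d = kd$; more carefully, $\dim \calS_{\boldsymbol{n}} = d + k(d-1) - (d-1) = kd - k + 1$ inside $(S^*X)^k$ of dimension $k(2d-1)$, giving codimension $kd + k - 1 - k + \dots$ — I would redo this count cleanly), so that after applying the hitting maps $H_q$ and using the kernel-dimension bounds from Lemma \ref{lem:RanProj}, $Z'_{\boldsymbol{n}}$ is a countable union of submanifolds of $(\hat{S}^*\Sigma)^k$ of codimension at least $kd-1$, and hence $Z''_{\boldsymbol{n}} = p_\Sigma^k(Z'_{\boldsymbol{n}})$ (each $p_\Sigma$ dropping dimension by $1$ per factor, i.e.\ total drop $k$) is a countable union of submanifolds of $(T^*\Sigma)^k$ of codimension at least $kd - 1 - k = k(d-1) - 1$. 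Since $\dim (T^*\Sigma)^k = 2k(d-1)$ and the section $(x_1,\dots,x_k) \mapsto ((x_1,\partial u(x_1)),\dots,(x_k,\partial u(x_k)))$ of $(T^*\Sigma)^k$ over $\Sigma^k$ has dimension $k(d-1)$, the multijet transversality theorem (Theorem 4.13, Chap.~2 of \cite{GG}) yields a residual set $E_{\boldsymbol{n}} \subset \calC(\Sigma)$ of $u$ for which this section is transverse to each stratum of $Z''_{\boldsymbol{n}}$; transversality with a codimension $> k(d-1)$ stratum against a $k(d-1)$-dimensional source forces empty intersection. The main obstacle is getting the codimension bookkeeping exactly right — in particular correctly accounting for the $X_H$-direction that is built into $\calS_{\boldsymbol{n}}$ and for the loss of one dimension under each $p_\Sigma$ — so that the final codimension strictly exceeds the dimension of the jet section; the kernel computation in Lemma \ref{lem:RanProj} is precisely what controls this, which is why it is isolated as a lemma.
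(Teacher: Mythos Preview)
Your proof of Lemma \ref{lem:RanProj} is correct and follows essentially the same route as the paper: both identify $\mathrm{Ker}(d_{\Psi^k(q)}H_q)$ with the span of the geodesic-flow directions $\{(s_1 X_H(\rho_1),\dots,s_k X_H(\rho_k))\}$, observe that tangency to $\calS_{\boldsymbol{n}}$ forces the base components $s_i\sigma_x(\xi_i)$ to all coincide, and conclude by the collinearity dichotomy. Your hesitation about the ``covector side'' is unnecessary: in the Levi--Civita horizontal/vertical splitting the geodesic generator $X_H$ is purely horizontal, so $\dot\xi_i=0$ and the constraint $\sum n_i\dot\xi_i=0$ is trivially satisfied --- the paper simply notes that $\mathrm{Ker}(d_pH_q)$ lies in the horizontal subspace and never mentions the vertical equation. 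Your third paragraph goes beyond the lemma into the proof of Proposition \ref{prop:irrational_monochromatic}; there the dimension bookkeeping is indeed delicate (the paper gets codimension $kd-1$ for $Z'_{\boldsymbol{n}}$ in $(\hat S^*\Sigma)^k$ directly, and $p_\Sigma^k$ is a local diffeomorphism rather than dropping dimension), so you should redo that count carefully, but this does not affect the lemma itself.
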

\end{tcolorbox}
It follows from Lemma \ref{lem:RanProj} and from the fact that $\calS_{\boldsymbol{n}}$ is a smooth submanifold of $(S^*X)^k$ of codimension $kd-1$ that $Z'_{\boldsymbol{n}}= \pi \left( (\Psi^k)^{-1}(\calS_{\boldsymbol{n}}\cap (\mathcal{R}(\Sigma))^k) \right)$ is a countable union of submanifolds of $(S^*X|_\Sigma)^k$ of codimension at least $kd-1$. As a consequence, $Z''_{\boldsymbol{n}}$ is also a countable union of submanifolds of $(T^*\Sigma)^k$ of codimension at least $kd-1$. Now, by the multijet transversality theorem (see Theorem 4.13, Chap. 2 of \cite{GG}) the set $E_{\boldsymbol{n}}$ of $u\in C^\infty(\Sigma)$ such that the section $(du,\dots,du):(y_1,\dots,y_k)\mapsto (d_{y_1}u,\dots,d_{y_k}u)$ of $(T^*\Sigma)^k\simeq T^*(\Sigma^k)$ is transversal to $Z''_{\boldsymbol{n}}$, a residual in $C^\infty(\Sigma)$. But since $\textup{dim}(\Sigma^k)=k(d-1)<kd-1=\textup{codim}(Z''_{\boldsymbol{n}})$ (as we have assumed that $k\geq 2$), transversality in this case implies that the range of $(du,\dots,du)$ never intersects $Z''_{\boldsymbol{n}}$.
The result follows.
\end{proof}
\begin{proof}[Proof of Lemma \ref{lem:RanProj}]
The map $H_q$ is clearly a submersion, and is invariant by the action of the geodesic flow on each component. If we write $p= ((x_1,\xi_1),\dots,(x_k,\xi_k)) = \Psi^k(q)$, then we see that
$\textup{Ker}(d_p H_q)$ is generated by the $(0,\dots,(\xi_j,0),\dots,0)$ where the factor $(\xi_j,0)$ corresponds to the $j$-th factor of $S^*X$ and $\xi_j$ acts on the horizontal part of the tangent bundle $T(S^*X)$. In particular, $\textup{Ker}(d_p H_q)$ is a subspace of the horizontal subspace of $T_p(S^*X)^k$.  On the other hand, if $p\in\calS_{\boldsymbol{n}}$, the intersection of $T_p\calS_{\boldsymbol{n}}$ with the horizontal subspace of $T_p(S^*X)^k$ is exactly the diagonal $D_p$ of this horizontal subspace (i.e., the set of $((v,0),\dots,(v,0))$ where $v$ ranges over all of $T_{x_1}X$). Thus, the corank of $H_q$ at $p$ is the dimension of the space $\textup{Ker}(d_p H_q)\cap D_p$. This space is trivial except when the $\xi_j$'s are all colinear, in which case it is exactly the line generated by $((\xi_1,0),\dots,(\xi_1,0))$. The statement follows.
\end{proof}

\begin{appendix}
\section{A review of semi-classical analysis}\label{sec:semi-classique}
In this section we review some basic definitions from semiclassical analysis and state Egorov's theorem. Let $Y$ be a smooth $d$-dimensional Riemannian manifold. In all the paper, $Y$ is either the compact manifold $X$ or its universal cover $\widetilde{X}$.\\

\textcolor{black}{We shall use the class $S^{comp}(T^*Y)$ of symbols $a\in C_c^\infty(T^*Y)$,
which may depend on $h$, but whose semi-norms and supports are all bounded
independently of $h$. }

\textcolor{black}{Using coordinate charts, and the standard Weyl quantization on $\mathbb{R}^d$ as in \cite[\S 14.2]{Zworski_2012}, we may associate to each symbol in $a\in S^{comp}(T^*Y)$ an operator $\mathrm{Op}_h(a)$, acting on functions of $Y$. We thus obtain a quantization map}
\begin{equation*}
\text{Op}_h:S^{comp}(T^*Y)\longrightarrow \Psi^{comp}_h(Y).
\end{equation*}
This construction is not intrinsic. However, as explained in \cite[Theorem 14.2]{Zworski_2012}, the principal
symbol map
\begin{equation*}
\sigma_h : \Psi^{comp}_h (Y)\longrightarrow S^{comp}(T^*Y)/
h S^{comp}(T^*Y)
\end{equation*} is intrinsic, and we have
\begin{equation*}\sigma_h(A\circ B) = \sigma_h (A) \sigma_h(B)
\end{equation*}
and
\begin{equation*}
\sigma_h\circ \text{Op}_h: S^{comp}(T^* Y) \longrightarrow S^{comp} (T^*Y) /h
S^{comp}(T^*Y)
\end{equation*}
is the natural projection map. The operators in $\Psi_h^{comp}(Y)$ are always bounded independently of $h$ when acting on $L^2(Y)$, as explained in \cite[Theorem 14.2]{Zworski_2012}.\\

Let $(f_h)$ be a bounded family in $L^2(Y)$, and let $\nu$ be a measure on $T^*Y$. We \textcolor{black}{say} that $(f_h)$ has a semi-classical measure $\nu$ (which is then unique), if, for any $a\in C_c^\infty(T^*Y)$, we have
\[
\left\langle f_h, \text{Op}_h(a) f_h \right\rangle \underset{h\to 0}{\longrightarrow} \int_{T^*Y} a(x,\xi) \mathrm{d}\nu(x,\xi)\, .
\]
Let $f_h:= e^{\frac{i}{h} \phi(x)} a(x)$, with $a\in C_c^\infty(Y)$ and $\phi$ a smooth function defined in a neighbourhood of the support of $a$. As explained in \cite[\S 5.1, Example 2]{Zworski_2012}, $(f_h)$ has a semi-classical measure, which is given by
\begin{equation*}
\nu = |a(x)|^2 \mathrm{d}x \delta_{\xi= \partial \phi(x)}\, .
\end{equation*}
More generally, if  $f_h(x) = \sum_{j=1}^N e^{\frac{i}{h} \phi_j(x)} a_{j}(x)$, with $\partial \phi_j(x) \neq \partial \phi_{j'}(x)$ for all $j\neq j'$ and all $x$ in the support of both $a_j$ and $a_{j'}$,  then a similar proof (using non-stationary phase to show that the non-diagonal terms are negligible) implies that $f_h$ has a semi-classical measure, which is given by
\begin{equation}\label{eq:semi-classical_measure}
\nu = \sum_{j=1}^N |a_j(x)|^2 \mathrm{d}x \delta_{\xi= \partial \phi_j(x)}.
\end{equation}

The following result, known as Egorov's theorem, whose proof can be found in \cite[\S 15]{Zworski_2012}, says that, when considering semi-classical measures, the classical and quantum evolutions commute. \textcolor{black}{Recall that $U_h(t):=e^{ith\frac{\Delta}{2}}:L^2(X)\rightarrow L^2(X)$ is the semiclassical Schrödinger propagator.}

\begin{tcolorbox}[breakable, enhanced]
\begin{theorem}[Egorov's theorem, Theorem 15.2 of \cite{Zworski_2012}]\label{t:Egorov}
Let $(f_h)$ be a bounded family in $L^2(Y)$ having a semi-classical measure $\nu_0$. Then, for any $t\in \R$, the family $\left( U_h(t) f_h \right)$ has a semi-classical measure $\nu_t$, which is given by 
\[
\nu_t = \Phi^t_* \nu_0\, .
\]
\end{theorem}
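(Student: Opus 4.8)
The plan is to follow the classical argument for Egorov's theorem, whose details can be found in \S 15 of \cite{Zworski_2012}; we recall its structure. The heart of the matter is to show that conjugating a quantized observable by the Schr\"odinger propagator again produces a pseudodifferential operator, with principal symbol the observable transported by the geodesic flow and an error of size $O(h)$ in operator norm. Once this is known, the statement about semi-classical measures follows at once: since $U_h(t)$ is unitary, $(U_h(t)f_h)$ is again a bounded family in $L^2(Y)$, so it is meaningful to speak of its semi-classical measure.

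In more detail, write $P := -\tfrac{h^2}{2}\Delta$, so that $U_h(t) = e^{-itP/h}$ and $\tfrac{d}{dt}U_h(t) = -\tfrac{i}{h}P\,U_h(t)$; the semiclassical principal symbol of $P$ is $p(x,\xi) = \tfrac12|\xi|^2$, whose Hamiltonian flow is the geodesic flow $\Phi^t$. Fix $a\in C_c^\infty(T^*Y)$ and set $a_t := a\circ\Phi^t$; this is a compactly supported symbol for each fixed $t$ (the flow preserves the energy layers and $\Phi^{-t}(\mathrm{supp}\,a)$ is compact), and it solves the transport equation $\partial_t a_t = \{p,a_t\} = H_p a_t$ with $a_0 = a$. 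I would then introduce $D(t) := \mathrm{Op}_h(a)\,U_h(t) - U_h(t)\,\mathrm{Op}_h(a_t)$, which vanishes at $t = 0$, and differentiate in $t$. Using $\tfrac{d}{dt}U_h(t) = -\tfrac{i}{h}P\,U_h(t)$, the fact that $P$ commutes with $U_h(t)$, the transport equation for $a_t$, and the symbolic identity $\tfrac{i}{h}[P,\mathrm{Op}_h(a_t)] = \mathrm{Op}_h(\{p,a_t\}) + O_{L^2\to L^2}(h)$, a short computation gives $\tfrac{d}{dt}D(t) = -\tfrac{i}{h}D(t)\,P + O_{L^2\to L^2}(h)$. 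Applying Duhamel's formula to $t\mapsto D(t)U_h(-t)$ and using the unitarity of $U_h(\pm t)$, one obtains $\|D(t)\|_{L^2\to L^2} = O(h|t|)$, and therefore
\[
U_h(-t)\,\mathrm{Op}_h(a)\,U_h(t) = \mathrm{Op}_h(a\circ\Phi^t) + O_{L^2\to L^2}(h)
\]
for $t$ in any bounded interval (in particular for every fixed $t\in\R$).

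To finish, observe that for every $a\in C_c^\infty(T^*Y)$,
\[
\langle U_h(t)f_h,\ \mathrm{Op}_h(a)\,U_h(t)f_h\rangle = \langle f_h,\ U_h(-t)\mathrm{Op}_h(a)U_h(t)\,f_h\rangle = \langle f_h,\ \mathrm{Op}_h(a\circ\Phi^t)\,f_h\rangle + O(h),
\]
where the last step uses $\|f_h\|_{L^2} = O(1)$. Since $\nu_0$ is the semi-classical measure of $(f_h)$, the right-hand side converges to $\int_{T^*Y} (a\circ\Phi^t)\,d\nu_0 = \int_{T^*Y} a\,d(\Phi^t_*\nu_0)$, so $(U_h(t)f_h)$ has semi-classical measure $\Phi^t_*\nu_0$, as claimed. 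The one point that genuinely requires care is the symbolic identity for $\tfrac{i}{h}[P,\mathrm{Op}_h(a_t)]$: since $P$ does not belong to the compactly-supported calculus $\Psi_h^{comp}(Y)$, one first replaces it, at the cost of an $O(h^\infty)$ error when composed with $\mathrm{Op}_h(a_t)$, by $\mathrm{Op}_h(\chi)\,P\,\mathrm{Op}_h(\chi)$ for a cutoff $\chi\in C_c^\infty(T^*Y)$ equal to $1$ near the (compact) frequency support of $a$, and hence, by the transport equation, near that of $a_t$ for bounded $t$. After this reduction the standard symbol calculus applies directly; this is exactly the argument carried out in \S 15 of \cite{Zworski_2012}.
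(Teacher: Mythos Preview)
Your argument is correct and is precisely the standard Egorov argument; the paper does not give its own proof but simply cites \cite[\S 15]{Zworski_2012}, whose proof is exactly the conjugation-and-Duhamel computation you outline. In particular, your handling of the commutator $\tfrac{i}{h}[P,\mathrm{Op}_h(a_t)]$ via a cutoff reducing to the compactly supported calculus is the one technical point to be careful about, and you have addressed it correctly.
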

\end{tcolorbox}

\section{Construction of monochromatic phases}\label{sec:AppMono}

In this section we describe how to construct the phase $\phi_u$ starting from a function $u\in\calC^T(\Sigma)$ as announced in section \ref{sec:IntroMono}. Let $\Sigma\subset X$ be an embedded orientable simply connected hypersurface. Let us denote by $\nu$ a vector field defined on $\Sigma$ such that for each $y\in\Sigma$, $\nu(y)$ has unit norm and is orthogonal to $T_y\Sigma$. Recall the definition of $\mathcal{C}^T(\Sigma)$ from \eqref{eq:calC_trans}. Consider the following first order PDE, where the unkown is a smooth function $\psi:\Omega\rightarrow\R$ defined on an open neighbourhood $\Omega$ of $\Sigma$ in $X$:

\begin{equation}\label{e:eikonal_PDE}
\begin{cases}
|\partial\psi| &=1;\\
\psi|_\Sigma &=u;\\
\partial\psi|_\Sigma &=v_u\, .
\end{cases}
\end{equation}
According to the following lemma Equation \eqref{e:eikonal_PDE} always admits solutions and these satisfy a certain (weak) uniqueness property.
\begin{tcolorbox}[breakable, enhanced]
\begin{lemma}\label{l:existence_uniqueness}
Let $u\in\calC^T(\Sigma)$. Then, there exists a neighbourhood $\Omega$ of $\Sigma$ in $X$ and a function $\psi\in C^\infty(\Sigma)$ which solves \eqref{e:eikonal_PDE}. Moreover, if $(\psi_1,\Omega_1)$ and $(\psi_2,\Omega_2)$ are two such solutions then $\psi_1$ and $\psi_2$ coincide on some open subset $\Omega_3\subset\Omega_1\cap\Omega_2$.
\end{lemma}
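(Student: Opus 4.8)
The plan is to recognize Lemma \ref{l:existence_uniqueness} as a standard statement about the method of characteristics for the eikonal equation $|\partial\psi|=1$, adapted to the Riemannian setting. The characteristic flow of the Hamiltonian $H(x,\xi)=\tfrac12|\xi|^2$ on the level set $\{H=\tfrac12\}=S^*X$ is exactly the (co)geodesic flow $\Phi^t$. So the construction is: start from the initial hypersurface $\Sigma$ together with the prescribed initial cotangent lift $y\mapsto (y,v_u(y))\in S^*X$, i.e. the manifold $L_u=\{(y,v_u(y)):y\in\Sigma\}$, and flow it under $\Phi^t$. First I would check the noncharacteristic condition: since $u\in\calC(\Sigma)$ means $|\partial_y u|<1$, the normal component of $v_u(y)$ in the $\nu(y)$ direction is $(1-|\partial_y u|^2)^{1/2}>0$, so the geodesic issued from $(y,v_u(y))$ is transverse to $\Sigma$ at $t=0$. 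Consequently the map $(y,t)\mapsto \pi_X(\Phi^t(y,v_u(y)))$ has invertible differential at $(y,0)$ for every $y\in\Sigma$, hence by the inverse function theorem it is a diffeomorphism from a neighbourhood of $\Sigma\times\{0\}$ in $\Sigma\times\R$ onto a neighbourhood $\Omega$ of $\Sigma$ in $X$.

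Next I would use this diffeomorphism to define $\psi$ on $\Omega$. Writing each $x\in\Omega$ uniquely as $x=\pi_X(\Phi^t(y,v_u(y)))$, set
\[
\psi(x):= u(y) + t .
\]
(The term $t$ is the classical "action" increment: $\int_0^t |\dot\gamma|^2\,ds = \int_0^t 1\,ds = t$ along a unit-speed geodesic; equivalently it arises from integrating $\xi\cdot\dot x$ along the characteristic, which equals $H$-times-$t$ up to the $u$ boundary term.) I would then verify the three conditions in \eqref{e:eikonal_PDE}. The boundary conditions $\psi|_\Sigma=u$ and $\partial\psi|_\Sigma=v_u$ are immediate from the construction ($t=0$ on $\Sigma$, and the characteristic strip carries the cotangent data $v_u$). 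For the eikonal equation itself, the key point is the standard fact from the method of characteristics that the "strip condition" is propagated: one shows $\Lambda_u:=\{\Phi^t(y,v_u(y)):y\in\Sigma,\ (y,t)\in\text{domain}\}$ is a Lagrangian submanifold of $T^*X$ (it is the flow-out of the isotropic manifold $L_u$ under the Hamiltonian flow of $H$, hence Lagrangian) which is projectable by the inverse function theorem argument above, and that $\Lambda_u$ is contained in $S^*X=\{|\xi|=1\}$ because $H$ is constant along its flow and $|v_u(y)|=1$. A projectable Lagrangian contained in $\{|\xi|=1\}$ is exactly the graph of $\partial\psi$ for a function $\psi$ with $|\partial\psi|=1$; checking that this $\psi$ agrees (up to an additive constant, fixed by $\psi|_\Sigma=u$) with the one defined by the action formula above is a short computation using $d\psi = \xi\cdot dx$ on $\Lambda_u$.

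For the uniqueness statement, suppose $(\psi_1,\Omega_1)$ and $(\psi_2,\Omega_2)$ both solve \eqref{e:eikonal_PDE}. Then the graphs $\Lambda_{\psi_i}=\{(x,\partial\psi_i(x)):x\in\Omega_i\}$ are projectable Lagrangian submanifolds of $S^*X$, both containing $L_u$ (because $\psi_i|_\Sigma=u$ and $\partial\psi_i|_\Sigma=v_u$). The integral curves of the geodesic flow through points of $L_u$ lie in $\Lambda_{\psi_i}$ for $i=1,2$: indeed, for any solution $\psi$ of the eikonal equation, $\Lambda_\psi$ is invariant under $\Phi^t$ as long as it stays projectable, since the Hamiltonian vector field of $H$ is tangent to $\Lambda_\psi$ (this is again the characteristic-strip computation: $\frac{d}{dt}\partial\psi(x(t)) = \partial\psi$-compatible with $\dot x = \partial_\xi H$). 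Therefore both $\Lambda_{\psi_1}$ and $\Lambda_{\psi_2}$ contain the flow-out $\Lambda_u$, and on a small enough neighbourhood $\Omega_3$ of $\Sigma$ (small enough that the flow-out is projectable and covers $\Omega_3$) they coincide with $\Lambda_u$; hence $\partial\psi_1=\partial\psi_2$ on $\Omega_3$, and since $\psi_1=\psi_2=u$ on the connected hypersurface $\Sigma$, we conclude $\psi_1=\psi_2$ on $\Omega_3$ (shrinking $\Omega_3$ if needed to keep it connected).

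The routine parts here are the inverse-function-theorem verification of the noncharacteristic condition and the bookkeeping of the action formula; the conceptual heart — and the step I would be most careful about — is the assertion that the flow-out of $L_u$ under $\Phi^t$ is a Lagrangian submanifold lying in $S^*X$ and that any eikonal solution's graph is flow-invariant and must therefore coincide with it. This is classical (it is precisely the geometric content of Hamilton--Jacobi theory / the method of characteristics, see e.g.\ \cite[Chapter 1]{DiSj}), but it uses in an essential way that $v_u(y)$ has unit norm (so the flow-out stays in $S^*X$) and that $v_u$ is built so that $L_u$ is isotropic (which, since $\Sigma$ has dimension $d-1$, is automatic: any $(d-1)$-dimensional submanifold of $T^*X$ on which one can write things appropriately — more precisely, $L_u$ is isotropic because $v_u = \partial u + (\cdots)\nu$ is constructed so that the restriction of the canonical one-form $\xi\,dx$ to $L_u$ equals $du$, which is exact hence closed). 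Note that the transversality hypothesis $u\in\calC^T(\Sigma)$ is \emph{not} needed for Lemma \ref{l:existence_uniqueness} — only $u\in\calC(\Sigma)$ is used; the transversality to stable directions enters only later, to guarantee the long-time projectability in Lemma \ref{lem:LongTimeProj}.
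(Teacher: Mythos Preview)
Your proof is correct and follows essentially the same approach as the paper: both invoke the method of characteristics for the eikonal equation, recognizing that $\Sigma$ is noncharacteristic because $|\partial u|<1$ forces the normal component of $v_u$ to be nonzero. The paper's proof is terser---it simply cites Evans for local existence and uniqueness and then patches the local solutions---whereas you unpack the same mechanism in Hamilton--Jacobi language (the flow-out of $L_u$ under $\Phi^t$ as a projectable Lagrangian in $S^*X$, with $\psi(x)=u(y)+t$ the action integral); your closing observation that only $u\in\calC(\Sigma)$ is used, not the transversality condition $u\in\calC^T(\Sigma)$, is correct and consistent with the paper's own argument.
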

\end{tcolorbox}
\begin{proof}
The lemma follows by the method of characteristics. Indeed, for the PDE \eqref{e:eikonal_PDE}, the hypersurface $\Sigma$ is non-characteristic (in the sense of \cite[(36) p.106]{Evans}). Therefore, we can apply the method of characteristics and deduce local existence and uniqueness of the solution near each point of $\Sigma$ (see \cite{Evans}). Then, piecing together local solutions using the uniqueness, we obtain a global solution near $\Sigma$. Moreover, given two solutions $(\psi_1,\Omega_1)$ and $(\psi_2,\Omega_2)$ of \eqref{e:eikonal_PDE}, by local uniqueness, for each $x\in\Sigma$, there exists $U_x\subset\Omega_1\cap\Omega_2$ on which they coincide. In particular, $\Omega_3=\cup_{x\in\Sigma} U_x$ is a neighbourhood of $\Sigma$ on which they coincide.
\end{proof}
To better understand the solutions to Equation \eqref{e:eikonal_PDE}, we check that solutions to this equation satisfy a property that makes them simple do describe in terms of the initial condition.
\begin{tcolorbox}[breakable, enhanced]
\begin{lemma}\label{l:description_eikonal}
Let $U\subset X$ be an open subset and let $\phi\in C^\infty(U)$ be such that $|\partial\phi|=1$. Then, if $x\in U$ and $v=\partial\phi(x)$, we have, for each $t\in \R$ close enough to $0$, $\phi(\Phi^t(x,v))=\phi(x)+t$.
\end{lemma}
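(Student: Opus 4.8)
The plan is to exploit the fact that $\Lambda_\phi:=\{(y,\partial\phi(y))\, :\, y\in U\}$ is a Lagrangian submanifold of $T^*X$ contained in the unit cotangent bundle $S^*_1X$, hence invariant under the geodesic flow for small times, and then to differentiate $\phi$ along the projected trajectory.

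\textbf{Step 1: invariance of $\Lambda_\phi$ under $\Phi^t$ for small $t$.} Recall that $\Phi^t$ is the Hamiltonian flow of $H(y,\xi)=\tfrac12|\xi|^2$. Since $|\partial\phi|\equiv 1$ on $U$, the manifold $\Lambda_\phi$ is contained in the energy level $H^{-1}(\{\tfrac12\})$, so $\mathrm{d}H$ vanishes on $T_\rho\Lambda_\phi$ for every $\rho\in\Lambda_\phi$. On the other hand $\Lambda_\phi$ is Lagrangian, being the graph of the closed one-form $\mathrm{d}\phi$, so it is equal to its own orthogonal with respect to the canonical symplectic form $\sigma$. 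Since the Hamiltonian vector field $X_H$ satisfies $\sigma(X_H,\cdot)=-\mathrm{d}H$, the previous two facts give $X_H(\rho)\in T_\rho\Lambda_\phi$ for every $\rho\in\Lambda_\phi$. Thus $X_H$ restricts to a smooth vector field on the manifold $\Lambda_\phi$, and by uniqueness of integral curves of $X_H$ in $T^*X$, the flow it generates on $\Lambda_\phi$ coincides with $\Phi^t$ wherever both are defined. Hence, for our fixed $(x,v)=(x,\partial\phi(x))\in\Lambda_\phi$, there is $\varepsilon>0$ such that $\Phi^t(x,v)\in\Lambda_\phi$ for all $|t|<\varepsilon$; we then write $\Phi^t(x,v)=(\gamma(t),\partial\phi(\gamma(t)))$, where $\gamma(t):=\pi(\Phi^t(x,v))$ and $\pi:T^*X\to X$ is the canonical projection.

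\textbf{Step 2: computing $\phi$ along $\gamma$.} Because $\Phi^t$ is the geodesic flow, $\gamma$ is a geodesic and the cotangent component of $\Phi^t(x,v)$ is the metric dual of $\dot\gamma(t)$; combined with Step 1 this yields $\dot\gamma(t)=(\partial\phi(\gamma(t)))^\sharp$. Therefore
\[
\frac{\mathrm{d}}{\mathrm{d}t}\phi(\gamma(t))=\big\langle\partial\phi(\gamma(t)),\dot\gamma(t)\big\rangle=|\partial\phi(\gamma(t))|^2=1\, ,
\]
using $|\partial\phi|\equiv 1$ once more. Integrating from $0$ to $t$ gives $\phi(\gamma(t))=\phi(x)+t$, which is precisely the claimed identity once one recalls that $\phi(\Phi^t(x,v))$ stands for $\phi(\gamma(t))$ (the point $\Phi^t(x,v)$ lies on $\Lambda_\phi$ by Step 1).

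\textbf{Main obstacle.} There is no substantial difficulty here; the only point requiring care is that $\Lambda_\phi$ is defined only over the open set $U$ and need not be globally preserved by the flow, so the identity is genuinely local in $t$ — this is exactly why the statement is restricted to $t$ close to $0$, and it is also why Step 1 is phrased via a local-flow/uniqueness argument rather than a global invariance statement. As an alternative to the symplectic argument of Step 1, one may instead invoke the method of characteristics already used in the proof of Lemma \ref{l:existence_uniqueness}: the characteristic curves of the eikonal equation $|\partial\psi|=1$ are exactly the geodesics lifted to $T^*X$, along which $\partial\psi$ evolves as the momentum, which gives $\Phi^t(x,\partial\phi(x))=(\gamma(t),\partial\phi(\gamma(t)))$ directly.
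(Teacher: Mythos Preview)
Your proof is correct and takes a genuinely different route from the paper's. The paper argues by constructing an auxiliary function: it fixes the level set $\Sigma=\phi^{-1}(\phi(x))$, flows it by geodesics in the direction $\partial\phi|_\Sigma$ to define $\psi(\gamma_t(y))=t$, checks via a Gauss-lemma style computation with the Levi--Civita connection that $\psi$ solves the eikonal equation $|\partial\psi|=1$, and then invokes the uniqueness part of Lemma~\ref{l:existence_uniqueness} to conclude that $\psi=\phi$ near $\Sigma$. Your argument instead stays entirely on the symplectic side: the key observation that $X_H$ is tangent to $\Lambda_\phi$ (because $\Lambda_\phi$ is Lagrangian and sits in a level set of $H$) gives local flow-invariance of $\Lambda_\phi$ in one line, after which the computation of $\tfrac{d}{dt}\phi(\gamma(t))$ is immediate. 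Your approach is shorter, avoids the circular-looking appeal to Lemma~\ref{l:existence_uniqueness}, and makes transparent why the result is local in $t$; the paper's approach, on the other hand, yields as a byproduct the explicit description of $\phi$ as signed distance to a level set, which is conceptually useful elsewhere in the appendix.
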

\end{tcolorbox}
\begin{proof}
For simplicity, we assume that $\phi(x)=0$. Since $|\partial_x\phi|=1$, the level set $\Sigma=\phi^{-1}(0)$ is a smooth hypersurface orthogonal to the vector field $\partial\phi$. For each $t\in\R$, let $\gamma_t:\Sigma\rightarrow X$ be the geodesic flow starting from $\Sigma$ in the direction $\partial\phi|_\Sigma$. There exists a neighbourhood $\tilde{W}\subset\Sigma\times\R$ of $\Sigma\times\{0\}$ such that the map $\gamma:(y,t)\mapsto\gamma_t(y)$ from $W$ into $X$ is a diffeomorphism onto its image $W$. Let $\psi\in C^\infty(W)$ be defined by $\psi(\gamma_t(y))=t$. Let us show that $\psi$ solves \eqref{e:eikonal_PDE}. To do so, first note that the level sets of $\psi$ are of the form $\gamma_t(U)$ where $U\subset\Sigma$ is an open subset. Next, note that for each $y\in\Sigma$ and each $t\in\R$, $\dot{\gamma}_t(y)\perp T_y(\gamma_t(\Sigma))$. Indeed, this is true for $t=0$ and, if we write $\nabla$ for the Levi-Civita connection on $X$ {\color{black}induced by $g$}, for each vector field $v$ on $\Sigma$,
\begin{align*}
\frac{d}{dt} g(d\gamma_t(v),\dot{\gamma}_t)&=g(\nabla_{\frac{d}{dt}}(d\gamma_t v),\dot{\gamma}_t)+g(d\gamma_t v,\nabla_{\frac d dt}\dot{\gamma}_t)\textup{ by compatibility}\\
&=g(\nabla_{\frac{d}{dt}}(d\gamma_t v),\dot{\gamma}_t)\textup{ since }\nabla_{\frac d dt}\dot{\gamma}_t=0\\
&=g(\nabla_v(\dot{\gamma}_t),\dot{\gamma}_t)\textup{ by symmetry}\\
&=(1/2)d\left[g(\dot{\gamma}_t,\dot{\gamma}_t)\right]v\textup{ by compatibility}\\
&=0\textup{ because }g(\dot{\gamma}_t,\dot{\gamma}_t)=1\, .
\end{align*}
Since we also have $|\dot{\gamma}_t|=1=\frac{d}{dt}\psi(\gamma_t)$, we deduce that for each $(y,t)\in\tilde{W}$, $\partial\psi(\gamma_t(y))=\dot{\gamma}_t(y)$. In particular, $\psi$ solves \eqref{e:eikonal_PDE} as announced. Moreover, $\psi|_\Sigma=\phi|_\Sigma$ so, by Lemma \ref{l:existence_uniqueness}, they coincide near $\Sigma$. In particular, $\phi(\Phi^t(x,\partial_x\phi))=\psi(\Phi^t(x,\partial_x\psi))=\psi(\gamma_t(x))=t$ for $|t|$ small enough, and the proof is over.
\end{proof}
\begin{remark}
Though we never use this property in the article, note that Lemma \ref{l:description_eikonal} allows us to describe a maximal choice of $\Omega_3$ from Lemma \ref{l:existence_uniqueness}. More precisely, if $(\psi_1,\Omega_1)$ and $(\psi_2,\Omega_2)$ are two such solutions then $\psi_1$ and $\psi_2$ coincide on 
\[
\Omega_1\cap \Omega_2\cap\{\Phi^t(x,\partial\psi_1(x))\, :\, x\in\Sigma,\, t\in\R\}\, .
\]
Indeed, notice that the characteristic curves of \eqref{e:eikonal_PDE} are gradient lines of $\psi_1$ and $\psi_2$. But by Lemma \ref{l:description_eikonal}, these are geodesics started at points of the form $(x,\partial\psi_j(x))$ for $j=1,2$ and $x\in\Omega_j$. In particular, taking $x\in\Sigma$, these gradient lines coincide for both $\psi_1$ and $\psi_2$ and the two functions must coincide along them as long as they are well defined.
\end{remark}
\end{appendix}


\begin{thebibliography}{10} 

\bibitem{ABLM}
M. Abert, N. Bergeron, and  E. Le Masson, \emph{Eigenfunctions and random waves in the Benjamini-Schramm limit}, arXiv preprint arXiv:1810.05601, 2018.


\bibitem{Anan}
N. Anantharaman, \emph{Entropy and the localization of eigenfunctions},
    Ann. Math. (2) \textbf{168(2)}, (2008), 435-475.

\bibitem{AN}
N. Anantharaman, S. Nonnenmacher, \emph{Half-delocalization of eigenfunctions for the Laplacian on an Anosov manifold},
Ann. Inst. Fourier \textbf{57 (7}), (2007) 2465-2523.

\bibitem{Ber}
M.V. Berry, \emph{Regular and irregular semiclassical wavefunctions}, J. Phys. A \textbf{10 (12)} (1977), p. 2083.

\bibitem{Bili}
P. Billingsley, \emph{Probability and measure}, John Wiley \& Sons (2008).

\bibitem{Bou}
J. Bourgain, \emph{On toral eigenfunctions and the random wave model}, Isr. J. Math. \textbf{201(2)} (2014), 611-630

\bibitem{BW}
J. Buckley, I. Wigman, \emph{On the number of nodal domains of toral eigenfunctions}, Ann. Inst. Henri Poincare \textbf{17(11)} (2016), 3027-3062

\bibitem{CI}
S. Cohen, J. Istas, \emph{Fractional Fields
and Applications}, Berlin: Springer. (2013).

\bibitem{Ebe}
P. Eberlein, \emph{Geodesic 
flows in manifolds of nonpositive curvature}, Proc. Sympos. Pure Math.  \textbf{69} (2001), 525-572.

\bibitem{Evans}
L. Evans \emph{Partial differential equations}, 2nd ed. Providence, RI: American Mathematical Society (AMS) (2010; Zbl 1194.35001)

\bibitem{DiSj}
\textcolor{black}{M. Dimassi, J. Sjostrand. \emph{Spectral asymptotics in the semi-classical limit}, Cambridge university press,  \textbf{268} (1999).}

\bibitem{GG}
M. Golubitsky, V. Guillemin, \emph{Stable mappings and their singularities}, 2nd. corr. printing. Graduate Texts in Mathematics, 14. New York - Heidelberg - Berlin: Springer-Verlag. XI, 209 p. DM34.00; (1980)






\bibitem{LWL}
M. Ingremeau, \emph{Local Weak Limits of Laplace Eigenfunctions} arXiv preprint arXiv:1712.03431, To appear in Tunis. J. Math.


\bibitem{Jost}
J. Jost, \emph{Riemannian geometry and geometric analysis},  Springer (2008).
%

\bibitem{LapSig}
A. Laptev, I.M. Sigal, \emph{Global Fourier Integral Operators and Semiclassical Asymptotics},
Rev. Math. Phys., Vol. 12, No 5 (2000) 749-766

\bibitem{NS}
\textcolor{black}{F. Nazarov, M. Sodin, \emph{Asymptotic laws for the spatial distribution and the number of connected components of zero sets of Gaussian random functions}, J. Math. Phys. Anal. Geom. \textbf{12, No. 3}, (2016), 205-278.}

\bibitem{NZ}
S. Nonnenmacher, M. Zworski,  \emph{Quantum decay rates in chaotic scattering}, Acta Math. \textbf{203(2)}, (2009), 149-233.

\bibitem{RomSar}
\'A. Romaniega, A. Sartori, \emph{Solutions to the Helmoltz equation satisfying the random wave model}, work in progress.

\bibitem{Sar}
A. Sartori, \emph{Nodal domains count for toral eigenfunctions at Planck-scale}, J. Funct. Anal., 2020, p. 108663.

\bibitem{Schu}
R. Schubert  \emph{Semiclassical behaviour of expectation values in time evolved Lagrangian states for large times}, Comm. Math. Phys. \textbf{256(1)}, (2005), 239-254.

\bibitem{Zworski_2012}
M. Zworski \emph{Semiclassical analysis}, American Mathematical Society (2012)


\end{thebibliography}
\end{document}